    \numberwithin{equation}{section}
\def\eqalign#1{\null\vcenter{\def\\{\cr}\openup\jot\m@th
  \ialign{\strut$\displaystyle{##}$\hfil&$\displaystyle{{}##}$\hfil
      \crcr#1\crcr}}\,}
\newcommand{\pq}{\left[^{\delta}_{\epsilon}\right]}
\newcommand{\be}{\begin{equation}}
\newcommand{\ee}{\end{equation}}
\newcommand{\al}{\alpha}
\newcommand{\bt}{\beta}
\newcommand{\lb}{\lambda}
\def\pmtwo#1#2#3#4{\begin{pmatrix}#1&#2\\#3&#4\end{pmatrix}}
    \def\e{{\epsilon}}
    \def\Re{{\rm Re \,}}
    \def\Im{{\rm Im \,}}
    \def\Ai{{\rm Ai \,}}
    \def\bigO{{\cal O}}
    \def\Res{{\rm Res}}
    \def\P2n{{\rm P}_{{\rm II}}^{(n)}}
    \newtheorem{theorem}{Theorem}[section]
    \newtheorem{lemma}[theorem]{Lemma}
    \newtheorem{proposition}[theorem]{Proposition}
    \newtheorem{Definition}[theorem]{Definition}
    \newtheorem{Remark}[theorem]{Remark}
    \newenvironment{remark}{\begin{Remark}\rm}{\end{Remark}}
    \newtheorem{Example}[theorem]{Example}
    \newtheorem{Assumptions}[theorem]{Assumptions}
    \newenvironment{proof}%
    {\rm \trivlist \item[\hskip \labelsep{\bf Proof. }]}%
    {\hspace*{\fill}$\Box$\endtrivlist}
    \newenvironment{varproof}%
    {\rm \trivlist \item[\hskip \labelsep{\bf Proof}]}%
    {\hspace*{\fill}$\Box$\endtrivlist}
    \newcommand{\supp}{{\operatorname{supp}}}
    \DeclareMathOperator*{\Tr}{Tr}
    \newcommand\PVint{\mathop{\setbox0\hbox{$\displaystyle\intop$}%
        \hskip0.2\wd0%
        \vcenter{\hrule width0.6\wd0height0.5pt depth0.5pt}%
        \hskip-0.8\wd0%
        }\mskip-\thinmuskip\intop\nolimits}
\begin{document}
\title{Asymptotics for the partition function in two-cut random matrix models}
\author{T. Claeys\footnote{tom.claeys@uclouvain.be}, T. Grava\footnote{grava@sissa.it and tamara.grava@bristol.ac.uk},  and K. D. T-R McLaughlin\footnote{mcl@math.arizona.edu}}
\maketitle

\begin{abstract}
We obtain large $N$ asymptotics for the random matrix partition function
\[Z_N(V)=\int_{\mathbb R^N}\prod_{i<j}(x_i-x_j)^2
\prod_{j=1}^N e^{-N V(x_j)}dx_j,\]
in the case where $V$ is  a polynomial such that the random matrix eigenvalues accumulate on two disjoint intervals (the two-cut case). We  compute leading and sub-leading terms in the asymptotic expansion for $\log Z_N(V)$, up to terms that are small as $N\to\infty$. 
Our approach is based on the explicit computation of the first terms in the asymptotic expansion for a quartic symmetric potential $V$. 
Afterwards, we use deformation theory of the partition function and of the associated equilibrium measure to generalize our results to general two-cut potentials $V$.
The  asymptotic expansion of $\log Z_N(V)$ as $N\to\infty$   contains  terms  that  depend analytically on   the potential $V$  and that have already appeared in the literature.
 In addition our method allows to compute the $V$-independent terms  of the asymptotic expansion of  $\log Z_N(V)$ which, to the best of our knowledge, 
 had not appeared before in the literature. 
We use rigorous orthogonal polynomial and Riemann-Hilbert techniques which had so far been successful to compute asymptotics for the partition function only in the one-cut case.
\end{abstract}

\section{Introduction}
Consider the problem of finding asymptotics as $N\to\infty$ for
\begin{equation}\label{def Zn}
Z_{N}(V)=\int_{\mathbb R^N}\prod_{i<j}(x_i-x_j)^2
\prod_{j=1}^N e^{-N V(x_j)}dx_j,
\end{equation}
where $V$ is a polynomial  of even degree with positive leading coefficient (so the above integral converges).
$Z_N(V)$ is the partition function corresponding to the unitary invariant random matrix ensemble on the ensemble of Hermitian $N\times N$ matrices with the probability measure
\begin{equation}\label{ensemble}
\frac{1}{\tilde Z_N(V)}e^{-N\Tr V(M)}dM,\qquad dM=\prod_{i<j}d\Re M_{ij} d\Im M_{ij}\, \prod_{j=1}^NdM_{jj},
\end{equation}
where $\tilde Z_N(V)$ is a normalizing constant chosen such that we have a probability measure.
The induced measure on the (random) eigenvalues of this ensemble of random matrices is explicitly given by 
\begin{equation}\label{process}
\frac{1}{Z_N(V)}\prod_{i<j}(x_i-x_j)^2\prod_{j=1}^Ne^{-NV(x_j)}dx_j,
\end{equation}
where $Z_N(V)$ is given by (\ref{def Zn}).

\medskip

The study of the asymptotic behavior of $Z_{N}$ gained notoriety because of an ingenious connection to the combinatorics of graphs embedded in Riemann surfaces \cite{BIZ}.  It is now an established fact  \cite{BIPZ, BIZ, EM}  that there is a simple geometric description of a collection of potentials $V$ for which the large $N$ asymptotic expansion is 
\begin{equation}
\label{largeNF}
-\log Z_N(V)+\log Z_{N}(V_{0})=N^2F_0(V)+F_1(V)+\sum_{j=2}^J F_{j}(V) N^{-2j} +\bigO(N^{-2J-2}),
\end{equation}
in which each term $F_{j}$ depends analytically on the coefficients of the polynomial $V$, and is a generating function for graphs which can be embedded in a Riemann surface of genus $j$ \cite{BIZ,EM}.  

In the above, the ``base potential'' $V_{0}$ is quadratic, $V_{0}(x) = x^{2} / 2 $, and the interest originally was in the asymptotic behavior for potentials $V$ near this base potential.
The partition function can be evaluated exactly for the Gaussian external field $\frac{2x^2}{\sigma}$ in terms of a Selberg integral. Indeed in such a case 
\[
Z_{N,\sigma}^{GUE}:=Z_N\left(V(x)=2x^2/\sigma\right)
\]
 is the partition function of the (re-scaled) Gaussian Unitary Ensemble (GUE), and it is given by
\begin{equation}\label{ZGUE-intro}
Z_{N,\sigma}^{GUE}=(2\pi)^{N/2}\left(\frac{\sigma}{4N}\right)^{N^2/2}\prod_{n=1}^{N}(n!),
\end{equation}
see for example \cite{BI}. 
The large $N$ expansion of $\log Z_{N,\sigma}^{GUE}$ is given by
\begin{multline}\label{expansion GUE1}
\log Z_{N,\sigma}^{GUE}=-N^2 \left(\frac{3}{4}-\frac{1}{2}\log\frac{\sigma}{4}\right)
+N\log N +N(\log 2\pi -1) +\frac{5}{12}\log N \\+\frac{1}{2}\log 2\pi +\zeta'(-1)+\bigO(N^{-1}),
\end{multline}
where $\zeta'$ denotes the derivative of the Riemann $\zeta$-function,
or equivalently
\begin{equation}\label{expansion GUE}
\log \left[\frac{1}{N!}Z_{N,\sigma}^{GUE}\right]=-N^2 \left(\frac{3}{4}-\frac{1}{2}\log\frac{\sigma}{4}\right)+N\log 2\pi -\frac{1}{12}\log N +\zeta'(-1)+\bigO(N^{-1}).
\end{equation}

  Writing $ V=V_0(x)+ \sum_{j=1}^m t_jx^j$, the asymptotic expansion (\ref{largeNF}) was shown to be true in \cite{EM} for coefficients ${\bf t} = (t_{1}, \ldots, t_{m})$  such that $ |{\bf t}| < T$, and $t_{m} \ge \gamma \sum_{j=1}^{m-1} |t_{j}|$ for some $T>0$ and $\gamma>0$.   But while those were simple conditions on the external field, the asymptotic expansion (\ref{largeNF}) was expected to be true in greater generality.  

From the seminal work \cite{BIZ}, the collection of potentials $V$ for which the above expansion should hold true is best described in terms of support properties of the {\it equilibrium measure}, which is the unique probability measure $\mu_V$ achieving the minimum (over all probability measures $\mu$ on $\mathbb{R}$) of 
\begin{equation}\label{energy}
I_V(\mu)=\iint \log|x-y|^{-1}d\mu(x)d\mu(y)+\int V(x)d\mu(x) \ .
\end{equation}
For real analytic potentials $V$ with sufficient growth at $\pm\infty$,   the equilibrium measure $\mu_{V}$  is supported on a finite union of bounded disjoint intervals and has a smooth density \cite{DKM}. 
We say that $V$ is $k$-cut if $\mu_V$ is supported on $k$ disjoint intervals, and that it is regular if certain generic conditions hold which we will specify below.

In \cite{BI} it was shown that the partition function has the asymptotic expansion   (\ref{largeNF}) under the assumption that $V(x)$ is  $1$-cut and regular and the deformed external field $V_\tau(x)=(1-\tau^{-1})x^2+V(\tau^{-1/2}x)$ is also one-cut regular for all $\tau\in[1,+\infty)$.  The asymptotic expansion  (\ref{largeNF}) is expected to hold true in the general case in which only the potential  $V(x)$ is $1$-cut regular.  A by-product of the work in the present paper is a proof of this very general result.

{\bf The main result of this paper is the existence of a complete asymptotic expansion for the partition function for any polynomial external field $V$ which is $2$-cut and regular.  }

 \medskip
 
We note that it is possible to establish an asymptotic expansion for {\it derivatives} of $\log Z_{N}(V)$ with respect to parameters of  the external potential,  since there are  a variety of explicit formulae relating such logarithmic derivatives of the partition function to quantities for which the asymptotic behavior is computable via Riemann-Hilbert  (RH) techniques \cite{DKMVZ1}.
In the seminal work \cite{DKMVZ1} it is shown that in the multi-cut case the asymptotic expansion of  recurrence coefficients of orthogonal polynomials is expressed  in the large $N$ limit through highly oscillatory $\theta$  functions. This result shows that the asymptotic expansion of the  partition function $\log Z_{N}(V)$  in the large $N$ limit contains oscillatory terms.  However the main issue is to integrate the expansion in parameter space.
 In the $1$-cut case,  the knowledge of an asymptotic expansion for {\it derivatives} of $\log Z_{N}(V)$ with respect to parameters of  the external potential,  together with the complete knowledge of the behavior of the partition function for the reference potential   $V_{0}(x) = x^{2}/2$ allows one to {\it integrate in parameter space} and obtain an asymptotic expansion for the partition function itself.  However, for the two-cut case, integration from the Gaussian case is fraught with difficulties since one must integrate asymptotic expansions across phase transitions, where the nature of the asymptotic expansion changes see e.g. \cite{BI, BI1, BB, BMP, CV1}.  The central difficulty is actually finding an explicitly computable two-cut regular  case.  We are able to do this for a symmetric quartic two-cut regular external field,  $V_0(x)=x^4-4x^2$.
Once asymptotics for one particular two-cut external field $V_0$ are found, one can write any other polynomial two-cut external field $V_{\vec t}$, $\vec t=(t_1,\dots,t_{2d})$, as
\[V_{\vec t} (x)=V_0(x)+ \sum_{j=1}^{2d}t_jx^j,\qquad t_{2d}>0.\]
Using a suitable deformation of the form
\[V_{\vec t(\tau)}(x)=V_0(x)+ \sum_{j=1}^m t_j(\tau)x^j,\qquad t_j(0)=0,\quad t_j(1)=t_j,  \;\;\]
one can
obtain an identity for
$\frac{d}{d\tau}\log Z_N(V_{\vec t(\tau)})$ in terms of orthogonal polynomials.
We then show that  the collection of all two-cut regular  polynomial external fields is  path-wise connected.
 Once a deformation $V_{\vec{t}(\tau)}(x)$ exists connecting $V_{0}$ to $V_{\vec{t}}(x)$ in such a way that $V_{\vec{t}(\tau)}(x)$ is two-cut regular for all $\tau \in [0,1]$, the next task is to establish a uniform asymptotic expansion as $N\to \infty$ for logarithmic derivatives of the partition function which is valid for any two-cut regular polynomial external field.   We determine such expansion and show that each term of the expansion depends analytically on the potential $V$.
By integrating from our new base point through the space of parameters to any other two-cut regular case, we establish the existence of the full partition function expansion for any two-cut regular external field.

Let $[a_1,a_2]\cup[a_3,a_4]$, $a_1<a_2<a_3<a_4$ be the end points of the support of the equilibrium measure $\mu_V$  with respect to the two-cut regular  potential $V$  and let $\Omega$ be the fraction of eigenvalues in the interval $[a_3,a_4]$, namely $\int_{a_3}^{a_4}d\mu_V=\Omega$.

\noindent
{\bf  Our main result is the determination of the   asymptotic expansion of the partition function for  a two-cut  regular polynomial  potential $V$:}
%
\begin{equation}\label{ZNas3}
\begin{split}
\log Z_N(V)&= \log\left( \frac{N!}{  \lfloor\frac{N}{2}\rfloor ! \lfloor\frac{N+1}{2}\rfloor !} Z_{\lfloor\frac{N}{2}\rfloor,\sigma^*}^{GUE} Z_{\lfloor\frac{N+1}{2}\rfloor,\sigma^*}^{GUE}\right)\\
& -N^2F_0(V)-F_1(V)+\log\theta(N\Omega(V);B(V))
+ \bigO(N^{-1}),
\end{split}
\end{equation}
where $F_0=I_V(\mu_V)$ defined in (\ref{energy})  is the  so called planar limit \cite{Joh}, $F_1=F_1(V)$  is defined as 
\begin{equation}
\label{F10}
F_1=\frac{1}{24}\log \left(\dfrac{1}{2^8}\left(\dfrac{|A|}{2\pi}\right)^{12}|\Delta(a)|^3\prod_{j=1}^4 |\widehat\psi(a_j)|\right), \qquad \Delta(a)=\prod_{1\leq i<j\leq 4}(a_j-a_i),
\end{equation} where $a_1<a_2<a_3<a_4$ are the endpoints of the support of the equilibrium measure minimizing (\ref{energy}), and
the constant $A$ is the $\alpha$-period of the holomorphic differential  defined on the elliptic Riemann surface $y^2=(z-a_1)(z-a_2)(z-a_3)(z-a_4)$ given by 
\begin{align}
A=2\int_{a_2}^{a_3}\frac{1}{\prod_{j=1}^4\sqrt{\lambda-a_j}}d\lambda=-\frac{4K(k)}{\sqrt{(a_4-a_2)(a_3-a_1)}},
\end{align}
with  $K(k)$ the complete elliptic integral of the first kind with argument $k>0$ defined by
\[
\qquad k^2=\frac{(a_3-a_2)(a_4-a_1)}{(a_3-a_1)(a_4-a_2)}.
\]
Finally, the values $\widehat\psi(a_j)$ in (\ref{F10}) are given by
\begin{equation}
\widehat\psi(a_j)=\lim_{x\to a_j, x\in\supp\,\mu}\frac{\pi \psi(x)}{\sqrt{|x-a_j|}},
\end{equation}
where $\psi$ is the density of the equilibrium measure $\mu_V$ minimizing (\ref{energy}).
The quantity $F_1=F_1(V)$   has been first obtained, up to a constant,  by solving the loop equations in \cite{Akemann}.
The oscillatory term in (\ref{ZNas3})  is expressed by the Jacobi theta function with argument $N\Omega$  and period $B$, $\Im(B)>0$ defined as 
\begin{equation}
\label{theta0}
\theta(z;B)=\sum_{n\in\mathbb Z}e^{\pi in^2B + 2\pi inz},\quad B=i\frac{K(k')}{K(k)}, \qquad k'=\sqrt{1-k^2}.
\end{equation}
Oscillatory terms first appeared in the asymptotics of orthogonal polynomials related to random matrices in the seminal paper \cite{DKMVZ1}. 
Later  oscillatory terms appeared in the asymptotic expansion of the partition function itself in \cite{BDE}.
Numerical evidence of the oscillatory behaviour of the partition function in the large $N$ limit was first obtained  in \cite{Jurkiewicz}.
The first term on the right hand side of (\ref{ZNas3}) does not depend on the potential $V$ and is given by  products of GUE partition functions defined in (\ref{ZGUE-intro}) with $\sigma^*=4e^{3/2}$ and
where $ \lfloor\,.\,\rfloor$ stands for the integer part. The constant $\sigma^*$  is such that $\lim_{N\rightarrow\infty}N^{-2}\log Z_{N,\sigma^*}^{GUE}=0$.
We remark that while we have established the existence of a complete expansion to all orders, and while with our techniques we can in principle calculate recursively further terms of the asymptotic expansion (\ref{ZNas3}), obtaining explicit formulae beyond the first 3 or 4 terms remains a significant, albeit algebraic, challenge.

Up to the first term, the  asymptotic expansion  in (\ref{ZNas3}) has been computed in the physics literature for general classes of $V$, see e.g. \cite{Akemann,  BDE, dFGZJ},  including multi-cut $V$ \cite{Eynard, CE}. In particular in \cite{Eynard}, all terms of the asymptotic expansion of the logarithm of the  partition function $\log Z_N(V)$  have been calculated up to  an additive term. However, rigorous asymptotic expansions, without any prior assumptions on their existence, are a very delicate issue and have only been obtained very recently \cite{BG, Shcherbina}. The manuscripts \cite{BG, Shcherbina} attack the more general problem of determining the asymptotic behavior for general $\beta$.
For any  two-cut regular potential $V$ the asymptotic expansion (\ref{ZNas3}) clearly shows that the additive term 
\[
\log\left( \frac{N!}{  \lfloor\frac{N}{2}\rfloor ! \lfloor\frac{N+1}{2}\rfloor !} Z_{\lfloor\frac{N}{2}\rfloor,\sigma^*}^{GUE} Z_{\lfloor\frac{N+1}{2}\rfloor,\sigma^*}^{GUE}\right)
\]
is independent of $V$ and to the best of our knowledge, such expression  has not appeared in the literature  before. 
The representation of the asymptotic expansion in \cite{BG} and \cite{Shcherbina} is exceedingly implicit, making comparisons between those papers and our explicit results difficult.
 Expanding this  constant term as $N\to\infty$ we obtain
\begin{multline}\label{ZNas4}
\log Z_{N}(V)=-N^2F_0(V)+N\log N +(\log 2\pi-1)N +\frac{1}{3}\log N+\frac{1}{2}\log 2\pi +2\zeta'(-1)\\
+\frac{1}{6}\log 2-F_1(V)+\log\theta(N\Omega(V);B(V))
+ \bigO(N^{-1}).
\end{multline}
\begin{remark}
The above asymptotic expansion shows that the term of order $N\log N$ and the term of order $N$ are the same as in the Gaussian case  (see (\ref{expansion GUE1}))  while the term of order $\log N$ is different from the one-cut case as well as all the smaller order terms. 
For a general potential  $V$  with low regularity properties   the fact that $\log Z_N=-N^2F_0+N\log N +N(\log 2\pi -1)+\bigO(\log N)$  is  proved in the works \cite{Serfaty1, Serfaty2}, see also \cite{BPS}. \end{remark}

\section{Statement of results}

The main contributions of this paper are the following:
\begin{itemize}
\item[(1)] we use an approach using orthogonal polynomials and RH problems to obtain large $N$ asymptotics of the partition function $Z_N(V)$ for symmetric quartic polynomials $V$;
\item[(2)] we prove that any one-cut regular polynomial $V(x)$ can be deformed continuously to the Gaussian $x^2/2$ in such a way that the deformed external field remains one-cut regular and polynomial of degree $\leq \deg\,V$ throughout the deformation;
\item[(3)] we prove that any two-cut regular polynomial $V$ can be deformed continuously to a symmetric quartic polynomial $V_0$ in such a way that the deformed external field remains two-cut supported and polynomial of degree $\leq \deg\, V$ throughout the deformation;
\item[(4)]  for any two-cut regular polynomial external field  $V_{\vec t}(x)=V_0(x)+\sum_{j=1}^{2d}t_jx^j$, with $t_{2d}>0$ and $V_0$ some quartic symmetric reference potential, we establish an asymptotic expansion for {\it derivatives} of $\log Z_{N}(V_{\vec t})$,  with respect to the parameters $t_j$. We then obtain  the  large $N$ asymptotics of the partition function $\log Z_{N}(V_{\vec t})$  by integration in parameter space from the  reference potential $V_0$ to $V_{\vec t}$.

\end{itemize}

\subsection{Symmetric quartic $V$}
We first consider the case where $V$ is a symmetric quartic polynomial:
\begin{equation}\label{def V}
V_{r,s}(z)=\frac{1}{s}(x^4-r x^2),\qquad r> 2 \sqrt{s}.\end{equation}
Then the equilibrium measure minimizing (\ref{energy}) is given by \cite{BI, MOR}
\begin{equation}\label{mu}
d\mu_{r,s}(x)=\frac{2}{\pi s }|x|\sqrt{(b-x^2)(x^2-a)}dx,\qquad
x\in[-\sqrt{b},-\sqrt{a}]\cup[\sqrt{a},\sqrt{b}],
\end{equation}
where
\begin{equation}\label{a b}
a = \frac{1}{2} \left( r - 2 \sqrt{ s}\right), \ \ \ \ b = \frac{1}{2} \left( r + 2 \sqrt{ s} \right) \ .
\end{equation}
The condition $r> 2 \sqrt{s}$ is needed to have a two-cut supported equilibrium measure.
In order to derive large $N$ asymptotics for $Z_N(r,s):=Z_N(V_{r,s})$, we will need a number of identities which hold for finite $N$.
The following identities will be shown in Section \ref{section: diff id} and are crucial for our analysis.
\begin{proposition}\label{prop identities}
Let $V$ be given by (\ref{def V}) and $Z_N(r,s)$ by (\ref{def Zn}).
Then we have
\begin{align}&\label{partition rec2-intro}
\log Z_{2N}(r,s)=\log(2N)!+\log\widehat Z_{N}(-1/2;r,s)+\log\widehat Z_{N}(1/2;r,s),\\
& \log Z_{2N+1}(r,s)=\log(2N+1)!+\log\widehat
Z_{N+1}(-1/2;r,s_+)+\log\widehat Z_{N}(1/2;r,s_-),\label{partition rec2-intro2}
\end{align}
where
\begin{equation}\label{spm}
s_\pm = s\left(1\pm\frac{1}{2N+1}\right),
\end{equation}
and
\begin{equation}\label{def Zn2-intro}
\widehat Z_{N}(\alpha;r,\sigma)=\frac{1}{N!}\int_{\mathbb
R_+^N}\prod_{i<j}(x_i-x_j)^2 \prod_{j=1}^N
x_j^\alpha e^{-\frac{2N}{\sigma}(x_j^2-rx_j)}dx_j.
\end{equation}
\end{proposition}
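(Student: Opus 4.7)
The plan is to reduce to the classical Heine representation of the partition function as a product of monic orthogonal polynomial norms, and then to exploit the fact that $V_{r,s}$ is even so that the orthogonal polynomials split into even and odd parts. Concretely, for any integer $M$ I would write
\[
Z_M(r,s)=M!\,\prod_{k=0}^{M-1}h_k^{(M)},\qquad h_k^{(M)}=\int_{\mathbb R}p_k(x)^2 e^{-M V_{r,s}(x)}\,dx,
\]
where $p_k$ is the degree-$k$ monic orthogonal polynomial for the weight $e^{-M V_{r,s}(x)}$ on $\mathbb R$. Similarly, $\widehat Z_{N'}(\alpha;r,\sigma)=\prod_{k=0}^{N'-1}\widehat h_k^{(\alpha,N',\sigma)}$, with $\widehat h_k^{(\alpha,N',\sigma)}$ the norm of the degree-$k$ monic orthogonal polynomial on $(0,\infty)$ for the weight $u^\alpha e^{-(2N'/\sigma)(u^2-ru)}$.

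Because $V_{r,s}$ is even, parity forces $p_k(-x)=(-1)^k p_k(x)$, so there exist monic polynomials $P_k,Q_k$ of degree $k$ with $p_{2k}(x)=P_k(x^2)$ and $p_{2k+1}(x)=x\,Q_k(x^2)$. Setting $\tilde V(u):=V_{r,s}(\sqrt u)=(u^2-ru)/s$ and substituting $u=x^2$ gives
\[
h_{2k}^{(M)}=\int_0^{\infty}P_k(u)^2\,u^{-1/2}e^{-M\tilde V(u)}\,du,\qquad h_{2k+1}^{(M)}=\int_0^{\infty}Q_k(u)^2\,u^{1/2}e^{-M\tilde V(u)}\,du,
\]
and a one-line check from the original orthogonality on $\mathbb R$ confirms that $P_k$ and $Q_k$ are themselves the monic orthogonal polynomials on $(0,\infty)$ for the weights $u^{-1/2}e^{-M\tilde V(u)}$ and $u^{1/2}e^{-M\tilde V(u)}$, so that these integrals equal the corresponding $\widehat h_k^{(\alpha,N',\sigma)}$ once the exponent coefficients are matched.

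The matching condition is the equality of exponent coefficients, namely $M/s=2N'/\sigma$, i.e.\ $\sigma=2N's/M$. For $Z_{2N}$ ($M=2N$), the even- and odd-indexed subsets of $\{0,\ldots,2N-1\}$ each have $N$ elements, so $N'=N$ and $\sigma=s$ in both cases, yielding
\[
Z_{2N}(r,s)=(2N)!\,\widehat Z_N(-1/2;r,s)\,\widehat Z_N(1/2;r,s).
\]
For $Z_{2N+1}$ ($M=2N+1$), the even-indexed set $\{0,2,\ldots,2N\}$ has $N+1$ elements (forcing $\sigma=2(N+1)s/(2N+1)=s_+$ and $\alpha=-1/2$), while the odd-indexed set $\{1,3,\ldots,2N-1\}$ has $N$ elements (forcing $\sigma=2Ns/(2N+1)=s_-$ and $\alpha=+1/2$). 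This gives the second identity; taking logarithms then yields \eqref{partition rec2-intro} and \eqref{partition rec2-intro2}.

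There is no genuine obstacle here — the argument is entirely algebraic — but the one point that demands care is precisely the bookkeeping in the odd case: the coefficient in front of $V_{r,s}$ depends on the number of eigenvalues, and since $2N+1$ is not twice an integer, the two half-line factors inherit slightly different couplings $s_\pm$, which is the origin of the shifts \eqref{spm}. This is also the place where one must verify that the even- and odd-index groups really account for all $M$ indices $\{0,\ldots,M-1\}$ with the correct multiplicities, so no polynomial norm is double-counted or omitted.
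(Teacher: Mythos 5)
Your argument is correct and is essentially the paper's own proof: both use the Heine/Andreief product formula, exploit the evenness of $V_{r,s}$ to split the orthogonal polynomials into even and odd parts, apply the substitution $u=x^2$, and carefully match the exponential coupling $M/s = 2N'/\sigma$, which in the odd case produces the shifted parameters $s_\pm$. The only cosmetic difference is that you work with monic orthogonal polynomial norms $h_n$ whereas the paper works with orthonormal leading coefficients $\kappa_n = h_n^{-1/2}$; the bookkeeping in the odd case that you flag is exactly what the paper handles via (\ref{spm}).
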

This reduces the problem of finding asymptotics for $Z_N(r,s)$  to the problem of finding asymptotics for $\log\widehat Z_{N}(\alpha;r,\sigma)$ as $N\to\infty$ where $\sigma=s$ and $\sigma=s_\pm=s\left(1\mp \frac{1}{2N+1}\right)$. Defining 
$\dfrac{d}{d\alpha} \log \widehat Z_{N}(\alpha;r,\sigma)=G_{N}(\alpha;r,\sigma)$, one has
\begin{equation}\label{Zalpha-intro}
\log \widehat Z_{N}(\alpha;r,\sigma)=\log \widehat Z_{N}(0;r,\sigma)+\int_0^\alpha G_{N}(\alpha';r,\sigma)d\alpha',\end{equation}
and we will use a second crucial identity, which we will also prove in Section \ref{section: diff id},
expressing  $G_{N}(\alpha;r,\sigma)$ completely in terms of orthogonal polynomials with respect to the weight $x^\alpha e^{-N V_{r,\sigma}(x)}$ on the half-line $[0,+\infty)$, see (\ref{GN}) and (\ref{Y}) below.
Moreover, we will show in Section \ref{section: diff id} that $\log \widehat Z_{N}(0;r,\sigma)$ can be expressed for large $N$, up to an exponentially small error, in terms of the partition function for the GUE, namely
\begin{equation}\label{sec3: 3}
\log \widehat Z_{N}(0;r,\sigma)=\log\left(\dfrac{e^{\frac{N^2r^2}{2\sigma}}}{N!}Z^{GUE}_{N,\sigma}\right)+\bigO(e^{-cN}),\quad c>0,\qquad N\to\infty.
\end{equation}
We will show in Section \ref{section: as G} using a RH analysis that $G_{N}(\alpha,r,\sigma)$ admits a full asymptotic expansion in powers of $1/N$, in which the leading term is proportional to $N$.
Combining (\ref{partition rec2-intro})-(\ref{partition rec2-intro2}) with (\ref{Zalpha-intro}), and substituting asymptotics for $G_{N}(\alpha;r,\sigma)$, we will be able to prove the following result (see Section~\ref{section: as G}).

\begin{theorem}\label{theorem 1}
Let $Z_N=Z_{N}(r,s)$ be the partition function defined by (\ref{def Zn}) with $V=V_{r,s}$ given by (\ref{def V}). As $N\to\infty$ for fixed $s>0$, $r>2\sqrt{s}$, there exist constants $c^{(j)}, \widetilde c^{(j)}$ for $j\in\mathbb N$ such that we have an asymptotic expansion of the form
\begin{multline}\label{ZNas}
\log Z_{N}(r,s)=\log\left[\frac{N!Z_{\lfloor\frac{N}{2}\rfloor,\sigma^*}^{GUE} Z_{\lfloor\frac{N+1}{2}\rfloor,\sigma^*}^{GUE}}{\left\lfloor\frac{N}{2}\right\rfloor ! \left\lfloor\frac{N+1}{2}\right\rfloor !}\right]-N^2F_0\\
+\begin{cases}\frac{1}{2}\log \frac{\sqrt{b}+\sqrt{a}}{2(ab)^{1/4}} + \sum_{j=1}^k c^{(j)}N^{-j} & N \mbox{even}\\
\frac{1}{2}\log\frac{\sqrt{b}-\sqrt{a}}{2(ab)^{1/4}}+ \sum_{j=1}^k \widetilde c^{(j)}N^{-j}& \mbox{ N odd }\end{cases}
+\bigO(N^{-k-1}),
\end{multline}
for any $k\in\mathbb N$,
where $a=\frac{1}{2}(r-2\sqrt{s})$ and $b=\frac{1}{2}(r+2\sqrt{s})$, $F_0$ is the
limit of the free energy given by
\begin{equation}\label{F0}
F_0(r,s):=-\lim_{N\to\infty}\frac{1}{N^2}\log Z_{N}(r,s)=\frac{3}{8} -\frac{1}{4}\log \frac{s}{4}-\frac{r^2}{4s},
\end{equation}
and $Z_{N,\sigma^*}^{GUE}$ is defined in (\ref{ZGUE-intro}) with $\sigma^*=4e^{3/2}$.
The coefficients $c^{(k)}(r,s)$, $\tilde c^{(k)}(r,s)$ are real analytic functions of $s>0$ and $r>2\sqrt{s}$, independent of $N$.
\end{theorem}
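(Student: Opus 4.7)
The plan is to assemble Theorem \ref{theorem 1} out of the three ingredients supplied just before its statement: the splitting identities of Proposition \ref{prop identities}, the integration formula (\ref{Zalpha-intro}) in the parameter $\alpha$, and the GUE comparison (\ref{sec3: 3}), together with the full $1/N$-expansion of $G_N(\alpha;r,\sigma)$ that will be derived via Riemann--Hilbert analysis in Section~\ref{section: as G}. Begin with the even case $N=2M$. Proposition \ref{prop identities} gives
\[
\log Z_{2M}(r,s)=\log(2M)! + \log\widehat Z_M(-1/2;r,s) + \log\widehat Z_M(1/2;r,s).
\]
Use (\ref{Zalpha-intro}) to write $\log\widehat Z_M(\pm 1/2;r,s)=\log\widehat Z_M(0;r,s)+I_\pm(M)$ with $I_\pm(M):=\int_0^{\pm 1/2}G_M(\alpha;r,s)\,d\alpha$, and use (\ref{sec3: 3}) to replace $\log\widehat Z_M(0;r,s)$ by $\log(e^{M^2r^2/(2s)}Z^{GUE}_{M,s}/M!)$ modulo $\bigO(e^{-cM})$. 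Split the GUE factor as $\log Z^{GUE}_{M,s}=\log Z^{GUE}_{M,\sigma^*}+(M^2/2)\log(s/\sigma^*)$. A direct computation using $\sigma^*=4e^{3/2}$ and the formula (\ref{F0}) for $F_0$ shows that all $\bigO(M^2)$ contributions regroup to give exactly the term $-N^2F_0$, while the remaining explicit pieces produce the factorial prefactor $(2M)!/(M!)^2$ multiplied by $Z^{GUE}_{M,\sigma^*}Z^{GUE}_{M,\sigma^*}$ appearing inside the first logarithm of (\ref{ZNas}).

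The next step is to substitute the full asymptotic expansion $G_M(\alpha;r,s)=Mg_0(\alpha;r,s)+g_1(\alpha;r,s)+M^{-1}g_2(\alpha;r,s)+\ldots$ produced by the steepest descent analysis of the orthogonal polynomial RH problem $Y$, and integrate term by term over $[0,\pm 1/2]$. Since $g_0$ is obtained from the $\alpha$-derivative of the leading $M^2$ part of $\log\widehat Z_M(\alpha)$, which is linear in $\alpha$ to leading order (the shift by $\alpha$ enters as $\alpha\int\log x\,d\mu$), the function $g_0$ is independent of $\alpha$ and therefore the $\bigO(M)$ parts of $I_+$ and $I_-$ cancel in the sum. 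The surviving $\bigO(1)$ term is $\int_{-1/2}^{1/2}g_1(\alpha;r,s)\,d\alpha$, and its identification with $\tfrac12\log\tfrac{\sqrt b+\sqrt a}{2(ab)^{1/4}}$ is the main computational point: it is extracted from the endpoint contributions of the local parametrices near $\pm\sqrt a,\pm\sqrt b$ in the RH analysis. The higher coefficients $c^{(j)}(r,s)$ arise analogously from the integrals of $g_{j+1}$, and their real-analyticity in $(r,s)$ for $r>2\sqrt s$ follows from the real-analytic dependence of the RH construction on the endpoints $a,b$.

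For the odd case $N=2M+1$ the same strategy applies, but the two half-line partition functions now involve the $N$-dependent parameters $s_\pm=s(1\pm\tfrac{1}{2M+1})$. One Taylor expands $\log\widehat Z_{M+1}(0;r,s_+)$, $\log\widehat Z_M(0;r,s_-)$, and both integrals $\int_0^{\mp 1/2}G(\alpha;r,s_\pm)\,d\alpha$ in $\sigma$ around $\sigma=s$, using the smooth dependence in $\sigma$ of the RH construction. The expansion of the two GUE factors at $s_\pm$ produces polynomial-in-$1/N$ contributions that must be tracked, and the asymmetry $M+1$ versus $M$ yields a $\tfrac{d}{dM}$-type correction at subleading order. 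Collecting all these pieces produces the odd-case constant $\tfrac12\log\tfrac{\sqrt b-\sqrt a}{2(ab)^{1/4}}$, with the sign change traceable to the difference between $\int_0^{-1/2}g_0\,d\alpha$ at $\sigma=s_+$ and $\int_0^{1/2}g_0\,d\alpha$ at $\sigma=s_-$ after including the GUE shifts. Real-analyticity of the odd coefficients $\tilde c^{(j)}$ follows exactly as in the even case.

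The main obstacle is the careful bookkeeping of all $\bigO(N)$ and $\bigO(1)$ contributions and the explicit identification of the two constants $\tfrac12\log\tfrac{\sqrt b\pm\sqrt a}{2(ab)^{1/4}}$. The former requires the cancellation of leading $\bigO(M)$ parts of $I_\pm$ (via the fact that $g_0$ is $\alpha$-independent) and the exact matching of $M^2r^2/s+M^2\log(s/\sigma^*)$ with $-4M^2F_0$; the latter requires computing the $\alpha$-derivative of the subleading coefficient in the RH analysis, where the endpoint parametrices and the $\theta$-function-like oscillatory structure on the two-cut support contribute in a delicate way. Once these closed forms are extracted, the theorem follows by combining all terms and applying Stirling to the factorial ratio $(2M)!/(M!)^2$ or $(2M+1)!/(M!(M+1)!)$.
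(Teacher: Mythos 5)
Your high-level outline matches the paper's strategy: split via Proposition \ref{prop identities}, integrate in $\alpha$ using (\ref{Zalpha-intro}), replace $\widehat Z_N(0;r,\sigma)$ via (\ref{sec3: 3}), substitute the RH asymptotics of $\mathcal G_N$, and collect terms. However there is a concrete error in the step where you extract the $\bigO(1)$ constant. You define $I_\pm(M)=\int_0^{\pm 1/2}G_M\,d\alpha$ and then assert that the surviving $\bigO(1)$ term is $\int_{-1/2}^{1/2}g_1(\alpha;r,s)\,d\alpha$. But $I_+ + I_- = \int_0^{1/2}-\int_{-1/2}^{0}\neq\int_{-1/2}^{1/2}$. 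The paper's $\bigO(1)$ part of $\mathcal G_N$ is $\alpha\,\mathcal G^{(1)}(r,\sigma)$, an odd function of $\alpha$, so your integral $\int_{-1/2}^{1/2}\alpha\,\mathcal G^{(1)}\,d\alpha$ vanishes identically and would yield zero for the constant, contradicting (\ref{ZNas}). The correct combination is $\int_0^{-1/2}\alpha\,\mathcal G^{(1)}\,d\alpha+\int_0^{1/2}\alpha\,\mathcal G^{(1)}\,d\alpha=\tfrac14\mathcal G^{(1)}=\tfrac12\log\tfrac{\sqrt b+\sqrt a}{2(ab)^{1/4}}$ by (\ref{G1}). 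The same cancellation logic you invoke for the $\bigO(N)$ part (where the integrand is $\alpha$-independent, hence even, so $I_++I_-=0$) applies in reverse here: odd integrands survive the sum $I_++I_-$, even ones do not.

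There is also a misconception about where the constant comes from in the RH analysis. After the symmetrization $x=u^2$, the problem is reduced to orthogonal polynomials on the half-line $[0,\infty)$ with a one-cut equilibrium measure supported on $[a,b]=[\tfrac{r}{2}-\sqrt\sigma,\tfrac{r}{2}+\sqrt\sigma]$. The local parametrices are built at $a$, $b$, and $0$ (Sections \ref{Airy}--\ref{section: R}), not at the four two-cut endpoints $\pm\sqrt a,\pm\sqrt b$; and no $\theta$-functions appear anywhere in this part of the analysis---that is precisely the reason the authors chose the symmetric quartic as the base case. Your invocation of ``the $\theta$-function-like oscillatory structure on the two-cut support'' suggests you have in mind the general two-cut RH analysis of Section 7 rather than the one-cut half-line RH problem that is actually used to prove Theorem \ref{theorem 1}.
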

\begin{remark}
Using the expansion (\ref{expansion GUE1}), we can write (\ref{ZNas}) in the form
\begin{multline}\label{ZNas2}
\log Z_{N}=-N^2F_0+N\log N +(\log 2\pi-1)N +\frac{1}{3}\log N+\frac{1}{2}\log 2\pi +2\zeta'(-1)\\
+\frac{1}{6}\log 2+\begin{cases}\frac{1}{2}\log \frac{\sqrt{b}+\sqrt{a}}{2(ab)^{1/4}} & N \mbox{even}\\
\frac{1}{2}\log\frac{\sqrt{b}-\sqrt{a}}{2(ab)^{1/4}}& N \mbox{odd }\end{cases}
+ \bigO(N^{-1}).
\end{multline}
Some terms of the above expansions have appeared also in the work \cite{schiappa}.
\end{remark}
\begin{remark}
Our formula (\ref{ZNas}) can be written in the following more general form which is more familiar  in the physics literature \cite{Akemann, Eynard}:
\begin{equation}\label{ZNas3b}
\log \left[\frac{Z_{N}\left\lfloor\frac{N}{2}\right\rfloor ! \left\lfloor\frac{N+1}{2}\right\rfloor !}{N!Z_{\lfloor\frac{N}{2}\rfloor,\sigma^*}^{GUE} Z_{\lfloor\frac{N+1}{2}\rfloor,\sigma^*}^{GUE}}\right]
=-N^2F_0-F_1+\log\theta(\frac{N}{2};B)
+ \bigO(N^{-1})
\end{equation}
with $F_1$ defined in (\ref{F10}) and the theta-function defined in (\ref{theta0}).


In the symmetric case we consider, by (\ref{mu}), we have
\begin{equation}\label{sym1}
a_4=-a_1=\sqrt{b},\qquad a_3=-a_2=\sqrt{a},\quad \psi(x)= \frac{2}{\pi s }|x|\sqrt{(b-x^2)(x^2-a)},
\end{equation}
and 
\begin{equation}\label{sym2}
k=\frac{2(ab)^{1/4}}{\sqrt{a}+\sqrt{b}},\qquad \Delta(a)=4\sqrt{ab}(b-a)^2.
\end{equation} Using the identity
$
K(k)=\frac{\pi}{2}\theta(0;B)^2$,
we have \begin{equation}\label{sym3}
A=-\frac{2\pi\theta(0;B)^2}{\sqrt{b}+\sqrt{a}}.\end{equation}
For $N$ even, we can use the periodicity property of the $\theta$-function, $\theta(z+1;B)=\theta(z;B)$, and substitute (\ref{sym1}), (\ref{sym2}), and (\ref{sym3}) into (\ref{ZNas3b}). Using also the fact that $s=(b-a)^2/4$, we 
recover (\ref{ZNas}) from (\ref{ZNas3b}).
For $N$ odd, (\ref{ZNas}) again follows from (\ref{ZNas3b}), but one must also use the identity 
\[\log\theta(\frac{1}{2};B)-\log\theta(0;B)=\frac{1}{4}\log(1-k^2)=\frac{1}{2}\log\frac{\sqrt{b}-\sqrt{a}}{\sqrt{b}+\sqrt{a}}.\] 
\end{remark}
\begin{remark}
The quantity $F_1$ defined in (\ref{F10}) coincides with the one defined in \cite{Akemann, DubrovinZhang, Eynard}  up to a constant factor $\dfrac{1}{24}\log(2^4\pi^{-12})$.

\end{remark}
\begin{remark}
For values of $r$ near $2\sqrt{s}$, a phase transition takes place between one-cut and two-cut supported external fields. If $r=2\sqrt{s}$, the equilibrium density vanishes at an interior point. This situation is studied in detail in \cite{BI}: the limit of the free energy $F_0(r,s)$ is not analytic near this point, and the subleading term in the large $N$ expansion can be expressed in terms of the Tracy-Widom distribution in a double scaling limit.
\end{remark}

\subsection{Equilibrium measures}

We define $P_m$ as the space of real polynomials $V$ of even degree $\leq m$ with positive leading coefficient and such that $V(0)=0$, and we write $\mathcal P$ for the space of Borel probability measures on $\mathbb R$.
To $V\in P_m$, we associate as before the unique probability measure $\mu_V\in\mathcal P$ which minimizes the logarithmic energy $I_V(\mu)$ defined by (\ref{energy}).
The equilibrium measure is characterized by the variational conditions \cite{SaTo}
\begin{align}
&\label{var eq} 2\int\log|x-y|d\mu_V(y)- V(x)=\ell_V,&\mbox{ on $\supp\,\mu_V$},\\
&\label{var ineq} 2\int\log|x-y|d\mu_V(y)- V(x)\leq \ell_V,&\mbox{ on $\mathbb R$}.
\end{align}
For a general polynomial $V\in P_m$, $\mu_V\in \mathcal P$ is supported on a finite union of at most $\deg V/2$ disjoint bounded intervals. If $V\in P_m$ is $k$-cut supported, the measure $\mu_V$ has the form
\begin{equation}\label{densitymu}
d\mu_V(x)=\frac{1}{c}\sqrt{|\mathcal{R}(x)|}|h(x)|dx, \qquad x\in\cup_{j=1}^k[a_j,b_j],
\end{equation}
where \begin{equation}\sqrt{\mathcal \mathcal R(x)}=\prod_{j=1}^k(x-a_j)(x-b_j),\end{equation} $h$ is a monic polynomial of degree ${\rm deg\,}V-k-1$, and $c$ is a normalizing constant \cite{DKM}. Generically (\ref{var ineq}) is strict for $x\in\mathbb R\setminus\supp\,\mu_V$, but in critical cases, equality can hold at points exterior to the support \cite{KM}.
If $h(x)\neq 0$ for all $x\in\supp\,\mu_V$ and if (\ref{var ineq}) is strict for all $x\in\mathbb R\setminus\supp\,\mu_V$, then $V$ is called $k$-cut regular. We write $P_m^{(k)}\subset P_m$ for the subset of $P_m$ containing all $k$-cut regular external fields of degree $\leq m$.

\begin{Example}\label{exampel: 1}
A simple example of a one-cut regular external field which we will encounter later on is $V(x)=\frac{2}{s}(x^2-rx)$. The corresponding equilibrium measure is supported on $[a,b]$ with $a$ and $b$ given by (\ref{a b}), and is given by
\begin{equation}d\mu_V(x)=\frac{2}{\pi s}\sqrt{(b-x)(x-a)}dx.\end{equation}
The quartic symmetric polynomial (\ref{def V}) is two-cut regular for $r>2\sqrt{s}$, one-cut regular for $r<2\sqrt{s}$, and one-cut singular for $r=2\sqrt{s}$, with an equilibrium density which vanishes in the middle $r/2$ of its support.
\end{Example}

The following result states that any one-cut regular external field $V(x)$ can be deformed continuously to the Gaussian $x^2/2$ in such a way that the deformed external field remains one-cut regular.

\begin{theorem}\label{theorem: 1}
Write 
$V_{\vec
t}(x)=\frac{x^2}{2}+\sum_{j=1}^mt_jx^j$, and let $V=V_{\vec t}\in P_m^{(1)}$, $m\geq 2$ be one-cut regular. There exist a continuous path $\vec t:[0,1]\to\mathbb R^m:\tau \mapsto \vec t(\tau)=(t_1(\tau),\ldots, t_m(\tau ))$ in $\mathbb R^m$ such that
\begin{itemize}
\item[(i)] $\vec t(0)=0$, or equivalently $V_{\vec t(0)}(x)=\frac{x^2}{2}$,
\item[(ii)] $\vec t(1)=\vec t$,
\item[(iii)] for all $\tau\in[0,1]$, $V_{\vec t(\tau)}\in P_m^{(1)}$.
\end{itemize}
\end{theorem}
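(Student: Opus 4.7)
The strategy is to use openness of $P_m^{(1)}$ together with a carefully staged deformation. First, $P_m^{(1)}$ is an open subset of the parameter space $\mathbb{R}^m$: one-cut regularity is characterized by two strict inequalities (the polynomial $h$ in the density $d\mu_V(x)=\frac{1}{2\pi}h(x)\sqrt{(b-x)(x-a)}\,dx$ is strictly positive on $[a,b]$, and the effective potential $V(x)-2\int\log|x-y|\,d\mu_V(y)$ strictly exceeds $-\ell_V$ off the support), and the endpoints and coefficients of $h$ depend real-analytically on $\vec t$ inside $P_m^{(1)}$ via the implicit function theorem applied to the algebraic system that determines them.

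My plan is to construct the path in two stages, routing through a strongly confining intermediate potential. Fixing a constant $K>0$ (to be chosen large), I would define
\begin{equation*}
V_\tau(x)=\begin{cases} V_{\vec t}(x)+2\tau K x^2 & \tau\in[0,1/2],\\ (2-2\tau)\bigl(V_{\vec t}(x)+K x^2\bigr)+(2\tau-1)\dfrac{x^2}{2} & \tau\in[1/2,1]. \end{cases}
\end{equation*}
Stage one ramps up the quadratic coefficient without altering the higher-degree terms; stage two damps the non-quadratic terms while smoothly reducing the quadratic coefficient down to $1/2$, ending at $V_0=x^2/2$. The path is continuous, its leading coefficient $(2-2\tau)t_m$ is non-negative throughout, and $\deg V_\tau\le m$ (equality for $\tau<1$, dropping to $2$ only at $\tau=1$, which is allowed since the space $P_m$ only requires degree $\le m$).

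Verifying $V_\tau\in P_m^{(1)}$ splits in two. Stage two is the easy half: once $K\gg\|\vec t\|$, the potential $V_\tau$ is an $O(\|\vec t\|/K)$ analytic perturbation of a strongly confining pure quadratic whose equilibrium measure is a scaled semicircle, and one-cut regularity follows from the perturbative stability provided by the openness of $P_m^{(1)}$. The main obstacle is stage one, where I must show that the entire ray $\{V_{\vec t}+\alpha x^2:\alpha\ge 0\}$ lies in $P_m^{(1)}$. My plan here is a connectedness argument on $S=\{\alpha\ge 0:V_{\vec t}+\alpha x^2\in P_m^{(1)}\}$: the set $S$ is open in $[0,\infty)$, contains $\alpha=0$ by hypothesis, and contains a neighborhood of $+\infty$ by the stage-two analysis. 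To conclude $S=[0,\infty)$ I would rule out a finite boundary point $\alpha^*>0$ by inspecting the algebraic equations that determine $(a(\alpha),b(\alpha),h_\alpha)$, arguing that the convexity-increasing nature of the perturbation $\alpha\mapsto\alpha+\epsilon$ is incompatible with either an interior zero of $h_{\alpha^*}$ on $[a(\alpha^*),b(\alpha^*)]$ or with saturation of the outer variational bound at some exterior point.
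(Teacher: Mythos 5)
Your approach is genuinely different from the paper's. You propose a two-stage path in potential space, first boosting the coefficient of $x^2$ via $V_{\vec t}+\alpha x^2$ and then linearly interpolating to the Gaussian, whereas the paper deforms the equilibrium \emph{measure} directly, inducting on the number of real odd-multiplicity zeros of the factor $h$ in the density and verifying one-cut regularity at every stage through the concrete sign criterion of Lemma~\ref{lemma: m} (the alternating sums $\sum(-1)^j m_j>0$, $\sum(-1)^j\tilde m_j>0$). That criterion is precisely what makes verification along the path tractable; your proposal has no comparable tool, and that is where it breaks down.

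The decisive gap is the claim in stage one that $S=\{\alpha\ge0:\,V_{\vec t}+\alpha x^2\in P_m^{(1)}\}$ has no finite boundary point. You say the ``convexity-increasing nature'' of the perturbation rules out either failure mode, but this is an assertion, not an argument. At a boundary $\alpha^*$, either $h_{\alpha^*}$ vanishes at an interior point of the support (so the density has an interior zero, and generically for $\alpha$ just above $\alpha^*$ a gap opens and the measure becomes two-cut), or the variational inequality saturates at an exterior point (birth of a new cut). In either scenario the ray generically exits $P_m^{(1)}$ at $\alpha^*$, which is exactly the possibility you must exclude. To do so you would need a genuine monotonicity statement, e.g.\ that the surplus $\ell_{V_\alpha}+2\int\log|x-y|\,d\mu_{V_\alpha}(y)-V_\alpha(x)$ is strictly decreasing in $\alpha$ at any exterior saturating point, or that $h_\alpha$ cannot decrease at a near-zero of the density as $\alpha$ increases. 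Either statement requires controlling the $\alpha$-derivative of the equilibrium measure, of the endpoints $a(\alpha),b(\alpha)$, and of the Lagrange multiplier $\ell_{V_\alpha}$ simultaneously; this is where the real work lies, and it is exactly the sort of control the paper carries out explicitly via the $m_j$ and $\tilde m_j$ integrals and the auxiliary comparison function $f(x;s)$. Without it, $S$ could a priori be $[0,\alpha^*)\cup(\alpha^{**},\infty)$ with $\alpha^*<\alpha^{**}$, and your path would fail.

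A secondary, smaller gap sits in stage two: ``perturbative stability by openness of $P_m^{(1)}$'' is a local statement, but you need it uniformly for $\tau\in[1/2,1]$, where the quadratic coefficient $1/2+\lambda K$ (with $\lambda=2-2\tau$) varies over an interval of length $K$ and the non-quadratic part $\lambda\sum_j t_j x^j$ varies with it. A scaling argument comparing the size of the perturbation to the confining quadratic on the support of the corresponding semicircle would deliver a uniform bound, but it has to be made quantitative; merely invoking openness, which only gives a $\lambda$-dependent neighborhood, does not close the step.
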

The above theorem will be proved in Section~\ref{Sec1-cut}.
\begin{theorem}\label{theorem: 2}
Write 
\[V_{\vec
t}(x)=V_0(x)+\sum_{j=1}^mt_jx^j,\qquad V_0(x)=x^4-4x^2,\] and
let $V=V_{\vec t}\in P_m^{(2)}$, $m\geq 4$,  be two-cut regular. There exist  a continuous path $\vec t:[0,1]\to\mathbb R^m:\tau \mapsto \vec t(\tau)=(t_1(\tau),\ldots, t_m(\tau ))$ in $\mathbb R^m$ such that
\begin{itemize}
\item[(i)] $\vec t(0)=0$, or equivalently $V_{\vec t(0)}(x)=x^4-4x^2$,
\item[(ii)] $\vec t(1)=\vec t$,
\item[(iii)] for all $\tau\in[0,1]$, $V_{\vec t(\tau)}\in P_m^{(2)}$.
\end{itemize}
\end{theorem}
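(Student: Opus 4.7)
The proof parallels that of Theorem~\ref{theorem: 1} for the one-cut case, but the two-cut setting is more delicate because both the merging of the two cuts (transition to the one-cut phase) and the birth of a third cut must be avoided along the deformation. My overall strategy has three parts: first, establish that $P_m^{(2)}$ is open in $P_m$; second, identify the codimension-one strata forming $\partial P_m^{(2)}$ inside $P_m$; third, construct the deformation in two stages, reducing the degree of $V$ to $4$ and then connecting the resulting two-cut regular quartic to the reference potential $V_0(x)=x^4-4x^2$.

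Openness follows from an implicit function theorem argument applied to the system determining the equilibrium measure (\ref{densitymu}): the endpoints $a_1<a_2<a_3<a_4$, the mass fraction $\Omega$, and the coefficients of the polynomial $h$ depend real-analytically on the coefficients of $V$, and for a two-cut regular $V$ the Jacobian of the defining moment and variational equations is non-degenerate, so small perturbations preserve the two-cut structure and its regularity. The complement $P_m\setminus P_m^{(2)}$ then decomposes into three real-analytic singular strata, each defined by a single real-analytic equation and therefore of codimension at least one: $\widehat\psi(a_j)=0$ at some endpoint (merging of the cuts or an endpoint singularity), $h$ acquiring a zero inside $\supp\mu_V$ (interior zero of the density), and the variational inequality (\ref{var ineq}) becoming an equality at an exterior point (birth of a new cut).

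For Stage~1, I would deform the higher-degree coefficients $t_5,\dots,t_m$ toward zero while remaining in $P_m^{(2)}$. Concretely, given the target $V_{\vec t}$, consider the linear homotopy $(1-\tau)\vec t$ and show by a Sard-type transversality argument that an arbitrarily small perturbation of it can be arranged to avoid each of the codimension-$\geq 1$ strata listed above; along the way one checks that the polynomial remains of even degree with positive leading coefficient, shrinking the leading coefficient to zero monotonically when needed. At the end of Stage~1 we arrive at some two-cut regular quartic $\widehat V$. For Stage~2, members of $P_4^{(2)}$ can be parameterized by the four endpoints $(a_1,a_2,a_3,a_4)$ together with the root $c_0$ of the linear monic polynomial $h$, subject to $c_0\notin\supp\mu_V$; this exhibits $P_4^{(2)}$ as an open subset of a connected real-analytic manifold, within which $\widehat V$ is joined to $V_0$ (corresponding to $(a_1,a_2,a_3,a_4)=(-\sqrt 3,-1,1,\sqrt 3)$ and $h(x)=x$) by a direct path that preserves the ordering of the endpoints and the non-vanishing of $h$ on the support.

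The main obstacle is the transversality step in Stage~1: one must show that, at every putative crossing of $\partial P_m^{(2)}$ along the linear homotopy, there is a direction in the finite-dimensional coefficient space $\mathbb R^m$ along which each defining equation of the singular strata varies non-degenerately. For the endpoint and interior-zero strata this reduces to the explicit dependence of $(a_j)_{j=1}^4$ and of the coefficients of $h$ on those of $V$, together with the non-degeneracy of the relevant Jacobian. For the exterior-equality stratum one uses the freedom to add to $V$ a small polynomial bump of degree $\leq m$ which strictly increases $V$ near the critical exterior point; this strictly decreases the left-hand side of (\ref{var ineq}) there and thus yields an escape direction back into $P_m^{(2)}$. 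Constructing these admissible perturbation directions uniformly along the homotopy is the technical heart of the argument.
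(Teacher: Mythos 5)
Your proposal takes a genuinely different route from the paper, and the route has a serious gap at its technical heart. The paper proves Theorem \ref{theorem: 2} by working not in the coefficient space $\mathbb R^m$ but in the space of \emph{equilibrium-measure data}: it writes $d\mu_V=\frac{1}{ic}\mathcal R^{1/2}_+h\,dx$, factors $h$ as in (\ref{h2cut})--(\ref{zeros2cut}), and establishes Lemma \ref{lemma: m2}, which characterizes two-cut regularity by strict positivity of alternating partial sums of the gap integrals $m_j,\tilde m_j,\widehat m_j$ together with the balance constraint $\sum_j(-1)^j\widehat m_j=0$. The deformation is then built \emph{explicitly} by moving zeros of $h$ (sending complex zeros to infinity, merging pairs of real odd-multiplicity zeros) and the endpoints $a_j$, with one zero $\xi(s)$ chosen at each time to preserve the balance constraint, and the positivity inequalities are verified by hand at every stage via the monotonicity of the rational factor $f(x;s)$. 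Path-connectedness is then obtained by induction on $n_1+n_2+n_3$, the number of real odd-multiplicity zeros of $h$.

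Your proposal's main step---perturbing the linear homotopy $(1-\tau)\vec t$ by a ``Sard-type transversality argument'' to avoid the boundary strata---does not work, because those strata have codimension one. Transversality for a $1$-parameter family against a codimension-one real-analytic set only produces isolated, transversal crossings; it gives no mechanism to \emph{avoid} the crossing. This is precisely the phenomenon the theorem is about: a priori, the one-cut/two-cut transition wall could disconnect $P_m^{(2)}$, and the straight-line homotopy (or any small perturbation of it) will in general cross that wall. Your ``escape direction'' observation (adding a polynomial bump to push back across the wall) only shows that each stratum is locally a smooth hypersurface with two sides; it does not show a path can be routed around it. Concretely, if $V_{\vec t}$ has, say, a nontrivial local minimum of $V$ near a point in the gap $(a_2,a_3)$, the linear path to $\vec t=0$ may pass through the one-cut phase, and you have given no construction that replaces this passage with one staying inside $P_m^{(2)}$. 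The openness of $P_m^{(2)}$ and the stratification you describe are correct and consistent with the implicit-function-theorem discussion surrounding (\ref{densitymu}), but by themselves they are insufficient. The paper sidesteps the global obstruction by never working with the linear coefficient homotopy: it constructs a nonlinear path in measure space whose coefficient-space image is only implicitly defined through (\ref{V2}), and verifies membership in $P_m^{(2)}$ not by perturbation arguments but by the explicit inequalities of Lemma \ref{lemma: m2}. Your Stage~2 parameterization of $P_4^{(2)}$ is also incomplete---the endpoints and the root $c_0\in(a_2,a_3)$ of $h$ are not independent parameters; the balance constraint (\ref{sum4-2}) and the normalization fix $c_0$ and the mass $c$ in terms of $(a_1,\ldots,a_4)$---but this is a minor issue compared to the codimension-one obstruction in Stage~1.
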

The proof of the above theorem is given in Section~\ref{section: 2-cut}.

\begin{remark}
The above results imply that the sets $P_m^{(1)}$ and $P_m^{(2)}$ are path-connected. Indeed, given two polynomials $V_1, V_2$ in $P_m^{(k)}$ for $k=1,2$, the theorems provide two $k$-cut regular paths which can be composed to obtain a continuous path connecting $V_1$ to $V_2$. We believe that the set $P_m^{(k)}$ is path-connected for any $k\in\mathbb
N$, but a generalization of our proofs would become considerably more complicated for
$k>2$. Theorem \ref{theorem: 1} is not needed for our study of the partition function in the two-cut case, but it allows to extend the proof in \cite{EM} of the existence of an asymptotic expansion of the form (\ref{largeNF}) to the entire set of one-cut regular polynomials $V$. This is explained in the remark after Theorem 1.1 in \cite{EM}.
\end{remark}

\subsection{General two-cut regular $V$}
We consider a deformation of the quartic symmetric external field of the form $V_{\vec t}(x)=V_0(z)+\sum_{j=1}^{2d}t_j x^j,$ with $t_{2d}>0$, $d\geq 2$, and $V_0(x)=x^4-4x^2$. The deformed external field does not need to be symmetric.
Let $P_n(x)$ be the orthonormal polynomials with respect to the weight $e^{-NV_{\vec t}(x)}$, defined by
\[
\int_{\mathbb R}P_{n}(x)P_{m}(x)e^{-NV_{\vec t}(x)}dx=\delta_{mn},\qquad
P_{n}(x)=\kappa_{n}x^n+ \ldots,
\]
and let us consider the $2\times 2 $ matrix 
\begin{equation}\label{X20}
X(z)=X^{(N)}(z;\vec t)=\begin{pmatrix} \kappa_N^{-1}P_N(z)&
\frac{1}{2\pi i}\kappa_N^{-1}\int_{\mathbb R}P_N(s)\frac{e^{-NV_{\vec t}(s)}ds}{s-z}\\
-2\pi i\kappa_{N-1}P_{N-1}(z)&
-\kappa_{N-1}\int_{\mathbb R}P_{N-1}(s)\frac{e^{-NV_{\vec t}(s)}ds}{s-z}
\end{pmatrix},
\end{equation}
defined for $z\in\mathbb C\setminus\mathbb R$.
 
For the partition function  (\ref{def Zn}) associated to the external field $V_{\vec t}$  we use the relation (see (\ref{diff id t res}))
\begin{equation}
\label{timederivative0}
\frac{d}{d t_k}\log Z_{N}(\vec t)=\frac{N}{2}\Res_{z=\infty}\left[\Tr\left(X^{-1}(z;\vec t)X'(z;\vec t)\sigma_3\right)z^k\right],
\end{equation}
where $'$ denotes derivative with respect to $z$.  The r.h.s. of the above expression is the derivative with respect of the parameters  $t_k$ of the isomonodromic $\tau$-function introduced by the Japanese school  \cite{JMU}.  The identification of the partition function $Z_{N}(\vec t)$ with the isomonodromic $\tau$-function was obtained  in \cite{BEH} (see also \cite{BEH1}). For a general potential which can also have singularities, the deformation of the $\tau$-function with respect to parameters in the potential was obtained in \cite{B3}.
The matrix function $X$  in (\ref{timederivative0}) is the solution to the standard RH problem for the orthogonal polynomials $P_n(z)$ \cite{FIK}. For a general $j$-cut regular external field $V$, the leading order term in the large $N$ asymptotic expansion for $X$ has been obtained using the Deift/Zhou steepest descent method in \cite{DKMVZ2}, and the method used in this paper allows to compute subleading terms as well. 

For our purposes, in the case of a two-cut regular external field $ V_{\vec t}$  with support $[a_1,a_2]\cup [a_3,a_4]$ and equilibrium measure $\mu_{V_{\vec t}}$, we will need the leading and first subleading term in the asymptotic expansion of the matrix $X(z)$ as $N\to \infty$. 
 Evaluating the residue in (\ref{timederivative0}) and using Fay's identities \cite{fay}, we  obtain the large $N$ asymptotic expansion
\begin{multline*}
-\frac{\partial}{\partial t_k}\log Z_N(V_{\vec t})=N^2\dfrac{\partial F_0}{\partial t_k} -N(\log\theta\left(N\Omega;B\right))'\dfrac{\partial \Omega}{\partial t_k}+\dfrac{\partial F_1}{\partial t_k}-
\dfrac{\partial}{\partial B}\log\theta\left(N\Omega;B\right)\dfrac{\partial B}{\partial t_k} \\
-(\log\theta\left(N\Omega;B\right))''F_1^{(1)}\dfrac{\partial \Omega}{\partial t_k}-\dfrac{F_0^{(3)}}{6}\left(\dfrac{\left(\theta\left( N \Omega;B\right)\right)'''}{\theta\left( N \Omega;B\right)}\right)'\dfrac{\partial \Omega}{\partial t_k}
+\bigO(N^{-1}),
\end{multline*}
where $F_0=I_{V_{\vec t}}(\mu_{V_{\vec t}})$ is  defined in (\ref{energy}), the functional form of $F_1=F_1(V_{\vec t})$ is defined in (\ref{F10}), and $\theta(N\Omega,B)$ is defined in (\ref{theta0}) with $\Omega=\int_{a_3}^{a_4}d\mu_{V_{\vec t}}$. 
The quantities   $F_1^{(1)}$ and $F_0^{(3)}$ are defined in  (\ref{F11}) and (\ref{F03})  respectively and coincide, up to  multiplicative  factors, with quantities  first obtained in  \cite{Eynard}. The above expansion has been proved in theorem~\ref{theorem86}.
From the above expansion, it is straightforward to  arrive to the following theorem.
\begin{theorem}\label{theorem211}
Let $V_{\vec t}\in P_m^{(2)}$ be a two-cut regular polynomial external field.
The derivative with respect to the parameter $t_k$, $k=1,\dots,2d$ of $\log  Z_N(V_{\vec t})$ has the asymptotic expansion 
\begin{equation}
\label{dlogZ}
-\frac{\partial}{\partial t_k}\log Z_N(V_{\vec t})=\dfrac{\partial}{\partial t_k}\left(N^2 F_0(V_{\vec t})+F_1(V_{\vec t})-\log\theta\left(N\Omega(V_{\vec t});B(V_{\vec t})\right)\right)+\bigO\left(N^{-1}\right),
\end{equation}
where $F_0=I_{V_{\vec t}}(\mu_{V_{\vec t}})$ is  defined in (\ref{energy}), the functional form of $F_1=F_1(V_{\vec t})$ is defined in (\ref{F10}), and $\theta(N\Omega;B)$ is defined in (\ref{theta0}) with $\Omega=\int_{a_3}^{a_4}d\mu_{V_{\vec t}}$. The above expansion is uniform for $\vec t$ in compact subsets of $\{\vec t\in\mathbb R^m:V_{\vec t}\in P_m^{(2)}\}$.  
\end{theorem}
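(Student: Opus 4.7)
The plan is to derive Theorem~\ref{theorem211} as an algebraic corollary of Theorem~8.6. Since the endpoints $a_1<\cdots<a_4$ of $\supp\mu_{V_{\vec t}}$ depend real-analytically on $\vec t$ throughout the open set $\{\vec t:V_{\vec t}\in P_m^{(2)}\}$ (from the system of algebraic equations characterising them via the variational conditions (\ref{var eq})--(\ref{var ineq})), the derived quantities $\Omega$, $B$, $F_0$, $F_1$ are real-analytic in $\vec t$, and a chain-rule computation is licit. Applying the chain rule directly to the target expression yields
\[
\frac{\partial}{\partial t_k}\bigl[N^2F_0+F_1-\log\theta(N\Omega;B)\bigr]=N^2\frac{\partial F_0}{\partial t_k}+\frac{\partial F_1}{\partial t_k}-N\bigl(\log\theta\bigr)'\frac{\partial\Omega}{\partial t_k}-\frac{\partial\log\theta}{\partial B}\frac{\partial B}{\partial t_k},
\]
where $\log\theta$ and its derivatives are evaluated at $(N\Omega;B)$. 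These four terms reproduce verbatim four of the six explicit terms in the Theorem~8.6 expansion of $-\partial_{t_k}\log Z_N$.

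The remaining two terms in the Theorem~8.6 expansion, namely $-(\log\theta)''F_1^{(1)}\partial\Omega/\partial t_k$ and $-\tfrac{1}{6}F_0^{(3)}(\theta'''/\theta)'\partial\Omega/\partial t_k$, must be shown to lie in the $\bigO(N^{-1})$ error. This is the principal step to verify, and I expect it to be the main (though essentially bookkeeping) obstacle. To handle it, I would revisit the Fay-identity / residue computation of (\ref{timederivative0}) underlying Theorem~8.6: the quantities $F_1^{(1)}$ and $F_0^{(3)}$ enter there as coefficients of subleading corrections to the RH matrix $X(z)$ that sit one power of $N$ beyond the contributions producing $F_1$, while the theta-derivative quotients $(\log\theta)''(N\Omega;B)$ and $(\theta'''/\theta)'(N\Omega;B)$ are uniformly bounded in $N$. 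Combining these two facts delivers the claimed $N^{-1}$ decay of the two residual terms, so that they can be absorbed into the $\bigO(N^{-1})$ error.

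Finally, the uniformity on compact subsets of $\{\vec t:V_{\vec t}\in P_m^{(2)}\}$ required by the theorem is inherited directly from the corresponding uniformity in Theorem~8.6: the manipulations above are purely algebraic rearrangements, and the chain-rule derivatives $\partial F_0/\partial t_k$, $\partial F_1/\partial t_k$, $\partial\Omega/\partial t_k$, $\partial B/\partial t_k$ are continuous in $\vec t$ by the real-analytic dependence of $F_0$, $F_1$, $\Omega$ and $B$ on the endpoints $a_j$, hence on $\vec t$, noted in the first step.
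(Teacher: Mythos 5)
Your chain-rule decomposition correctly identifies that four of the six explicit terms produced by Theorem~\ref{theorem86} (combined with Lemma~\ref{lemma84}) reproduce $\partial_{t_k}\bigl[N^2F_0+F_1-\log\theta(N\Omega;B)\bigr]$, and you correctly isolate the two remaining terms, $-(\log\theta)''F_1^{(1)}\partial_{t_k}\Omega$ and $\tfrac{1}{6}F_0^{(3)}(\theta'''/\theta)'\partial_{t_k}\Omega$, as the point where the argument must be closed. The remark about analytic dependence on $\vec t$ and uniformity on compacts is also fine.

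However, the reason you give for absorbing the two residual terms into the $\bigO(N^{-1})$ error is incorrect, and this is a genuine gap. You assert that $F_1^{(1)}$ and $F_0^{(3)}$ come from corrections to $X(z)$ that ``sit one power of $N$ beyond the contributions producing $F_1$.'' They do not: both $F_1$ and the $F_1^{(1)}$, $F_0^{(3)}$ contributions come from the \emph{same} $R^{(1)}(z)/N$ correction, which after being multiplied by the $N/2$ prefactor in (\ref{identity BEH4}) contributes at order one. In Theorem~\ref{theorem86} the last explicit line carries no $1/N$ factor: $F_1^{(1)}$, $F_0^{(3)}$, $(\log\theta)''(N\Omega;B)$, $(\theta'''/\theta)'(N\Omega;B)$ and the residue factor are all $\bigO(1)$. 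So those two terms are $\bigO(1)$ pointwise in $\vec t$, not $\bigO(N^{-1})$, and cannot be discarded by naive size estimates.

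The paper's actual mechanism is different and essential: it observes (this is (\ref{Mainidentity})) that
\[
-(\log\theta)''F_1^{(1)}\partial_{t_k}\Omega+\tfrac{1}{6}F_0^{(3)}\Bigl(\tfrac{\theta'''}{\theta}\Bigr)'\partial_{t_k}\Omega
=\frac{\partial}{\partial t_k}\left(-\frac{\theta'(N\Omega;B)}{\theta(N\Omega;B)}\frac{F_1^{(1)}}{N}+\frac{\theta'''(N\Omega;B)}{\theta(N\Omega;B)}\frac{F_0^{(3)}}{6N}\right)+\bigO(N^{-1}),
\]
because the $N$ from differentiating $\theta(N\Omega;B)$ with respect to $t_k$ (via the $N\Omega$ argument) cancels the explicit $1/N$. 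Thus the two $\bigO(1)$ terms are the $t_k$-derivative of a function that is itself $\bigO(N^{-1})$ uniformly on compacts. This is precisely what makes the subsequent integration in $\vec t$-space (used to pass to Theorem~\ref{theorem212}) yield an $\bigO(N^{-1})$ total contribution, since only the endpoint values of that $\bigO(N^{-1})$ function enter. Your proof as written would fail at the key step, because the residual terms simply are not small pointwise; the correct argument requires recognizing them as an exact $t_k$-derivative of a small quantity, which is an algebraic identity, not a size estimate.
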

From Theorem~\ref{theorem: 2}, we can choose a continuous path $\vec t:[0,1]\to\mathbb R^m:\tau\mapsto \vec t(\tau)=(t_1(\tau),\ldots, t_m(\tau))$ in $\mathbb R^m$ such that
 $V_{\vec t(0)}(x)=x^4-4x^2$, $V_{\vec t(1)}(x)=V_{\vec t}(x)$, and for all $\tau \in[0,1]$, $V_{\vec t(\tau)}$ is  two-cut regular.
 Therefore one can integrate (\ref{dlogZ}) to obtain
 \begin{multline*}
\log Z_N(V_{\vec t})-\log Z_N(V_{0})=-N^2(F_0(V_{\vec t})-F_0(V_0))-F_1(V_{\vec t})+F_1(V_0)\\
+\log\theta\left(N\Omega(V_{\vec t});B(V_{\vec t})\right)-\log\theta\left(N\Omega(V_{0});B(V_0)\right)+\bigO(N^{-1}),\qquad N\to\infty.
\end{multline*}
Comparing the above relation with (\ref{ZNas}), we obtain our main result.
\begin{theorem}\label{theorem212}
For a general two-cut regular polynomial potential $V$, the partition function  $Z_N(V)$ defined in (\ref{def Zn})  has the large $N$ asymptotic expansion
\begin{multline}\label{ZNas4b}
\log Z_N(V)= \log\left( \frac{N!}{  \lfloor\frac{N}{2}\rfloor ! \lfloor\frac{N+1}{2}\rfloor !} Z_{\lfloor\frac{N}{2}\rfloor,\sigma^*}^{GUE} Z_{\lfloor\frac{N+1}{2}\rfloor,\sigma^*}^{GUE} \right) \\
 -N^2F_0(V)-F_1(V)+\log\theta(N\Omega(V);B(V))
+ \bigO(N^{-1}),
\end{multline}
where $F_0=I_{V}(\mu_V)$ is  defined in (\ref{energy}),    the functional form of $F_1=F_1(V)$ is defined in (\ref{F10}), and $\theta(N\Omega,B)=\theta(N\Omega(V));B(V)$ 
is defined in (\ref{theta0}) with $\Omega(V)=\int_{a_3}^{a_4}d\mu_{V}$.   The constant $\sigma^*$  in the GUE partition function, defined in (\ref{ZGUE-intro}), is $\sigma^*=4e^{3/2}$.
\end{theorem}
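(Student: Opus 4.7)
The plan is to integrate Theorem \ref{theorem211} along a path in parameter space from the symmetric quartic $V_0(x)=x^4-4x^2$ (where Theorem \ref{theorem 1} already gives the full asymptotics) to the target two-cut regular potential $V$. Writing $V=V_{\vec t}=V_0+\sum_{j=1}^{m}t_j x^j$, I first invoke Theorem \ref{theorem: 2} to produce a continuous $\tau\mapsto\vec t(\tau)$ with $\vec t(0)=0$, $\vec t(1)=\vec t$, and $V_{\vec t(\tau)}\in P_m^{(2)}$ for all $\tau\in[0,1]$. Since the image of $[0,1]$ is compact inside the two-cut regular locus, the expansion (\ref{dlogZ}) holds uniformly along the path, and the velocities $\dot t_k(\tau)$ are bounded.

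By the chain rule and the uniform form of (\ref{dlogZ}),
\[
\frac{d}{d\tau}\log Z_N(V_{\vec t(\tau)})=\frac{d}{d\tau}\Bigl[-N^2F_0(V_{\vec t(\tau)})-F_1(V_{\vec t(\tau)})+\log\theta\bigl(N\Omega(V_{\vec t(\tau)});B(V_{\vec t(\tau)})\bigr)\Bigr]+\bigO(N^{-1}),
\]
so that integration from $\tau=0$ to $\tau=1$ telescopes the main term and leaves an $\bigO(N^{-1})$ error:
\begin{multline*}
\log Z_N(V)-\log Z_N(V_0)=-N^2\bigl[F_0(V)-F_0(V_0)\bigr]-\bigl[F_1(V)-F_1(V_0)\bigr]\\
+\log\theta(N\Omega(V);B(V))-\log\theta(N\Omega(V_0);B(V_0))+\bigO(N^{-1}).
\end{multline*}
Applying (\ref{ZNas3b}) to the symmetric quartic $V_0$ (the case $r=4$, $s=1$ of (\ref{def V})) expresses $\log Z_N(V_0)$ as the universal GUE prefactor of (\ref{ZNas4b}) together with $-N^2F_0(V_0)-F_1(V_0)+\log\theta(N\Omega(V_0);B(V_0))+\bigO(N^{-1})$. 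Adding this to the displayed equation cancels every quantity attached to $V_0$ and yields precisely (\ref{ZNas4b}).

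The main obstacle, addressed elsewhere in the paper, is Theorem \ref{theorem211} itself: extracting the first subleading term in the Deift--Zhou steepest descent analysis of the Riemann--Hilbert problem defining $X^{(N)}(z;\vec t)$ in the two-cut regime (with its oscillatory theta-function parametrix), computing the residue in (\ref{timederivative0}) via Fay's identities, and then recognising the resulting expression as $\partial_{t_k}$ of the predicted combination $-N^2F_0-F_1+\log\theta(N\Omega;B)$. The uniformity of the remainder in $\vec t$ on compact subsets of $P_m^{(2)}$---which is what allows integration to preserve the $\bigO(N^{-1})$ error---follows from smooth dependence of the spectral data ($a_1,\dots,a_4$, the $\alpha$- and $B$-periods, and the edge values $\widehat\psi(a_j)$) on the coefficients, ensured because two-cut regularity is maintained along the path supplied by Theorem \ref{theorem: 2}.
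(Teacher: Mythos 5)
Your proof is correct and takes essentially the same route as the paper: integrate the differential identity of Theorem \ref{theorem211} along the two-cut-preserving deformation path supplied by Theorem \ref{theorem: 2}, then plug in the symmetric quartic base case (using (\ref{ZNas3b}), which is just the reformulation of (\ref{ZNas})) so that all $V_0$-dependent terms cancel. The paper is slightly more terse but the compactness/uniformity remark and chain-rule step you spell out are precisely what the paper implicitly uses.
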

The above two theorems are proved in Section~\ref{sec:FinalProof}
\begin{remark}
In our derivation of the above formula, we consider the Riemann surface $y^2=(z-a_1)(z-a_2)(z-a_3)(z-a_4)$ with the canonical homology basis of cycles  $\{\alpha,\beta\}$ as shown in Figure \ref{fig1}.
\begin{figure}[htb!]
    \includegraphics[width=1.0\textwidth]{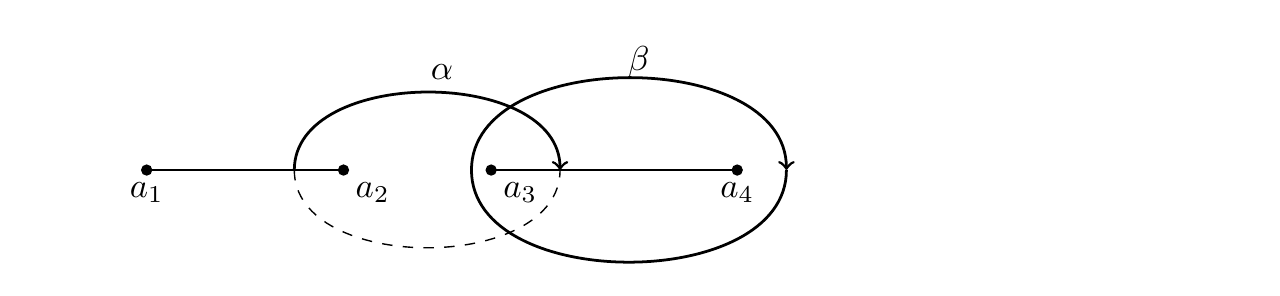}
 \caption{The canonical homology basis.}
 \label{fig1}
\end{figure}
By performing a symplectic transformation
\[
\tilde{\alpha}=\beta,\quad \tilde{\beta}=-\alpha,
\]
the  corresponding modulus $\tilde{B}=-\dfrac{1}{B}$  and the Jacobi $\theta$-function  transform as
\begin{equation}
\label{modular}
\theta(z;B)=\dfrac{1}{\sqrt{-iB}}e^{\pi i z^2\tilde{B}}\theta(z\tilde{B};\tilde{B}).
\end{equation}
Introducing the $\theta$-function with characteristics as 
\begin{equation}\label{theta car}
\theta[^\delta_\epsilon](z;B)=\sum_{n\in\mathbb{Z}} e^{\pi i (n+\delta)^2B+2\pi i (z+\epsilon)(\delta+n)},
\end{equation}
one verifies by (\ref{modular}) that 
\begin{equation}
\label{modular1}
\theta(N\Omega;B)=\theta[^{\,\,\,0}_{N\Omega}](0;B)=\dfrac{1}{\sqrt{-iB}}\theta[_{\,\,\,0}^{N\Omega}](0;\tilde{B}).
\end{equation}
 The period  $\tilde{A}$ of the holomorphic differential  with respect to the basis $\tilde{\alpha}$ 
takes the form
\begin{equation}
\label{modular2}
\tilde{A}=(-iB)A.
\end{equation}
Therefore, combining (\ref{modular1}) and (\ref{modular2}), the  expansion of the partition function with respect to the homology basis $(\tilde{\alpha},\tilde{\beta}) $  can be written in the form
\begin{multline}\label{ZNas5}
\log Z_N(V_{\vec t})= \log\left( \frac{N!}{  \lfloor\frac{N}{2}\rfloor ! \lfloor\frac{N+1}{2}\rfloor !} Z_{\lfloor\frac{N}{2}\rfloor,\sigma^*}^{GUE} Z_{\lfloor\frac{N+1}{2}\rfloor,\sigma^*}^{GUE} \right) \\
 -N^2\tilde{F}_0-\tilde{F}_1+\log\theta[^{N\Omega}_{\,\,\,0}](0;\tilde{B})
+ \bigO(N^{-1}),
\end{multline}
where $\tilde{F}_0=F_0$
and
\[
\tilde{F}_1=\frac{1}{24}\log \left(\dfrac{1}{2^8}\left(\dfrac{|\tilde{A}|}{2\pi}\right)^{12}|\Delta(a)|^3\prod_{j=1}^4 |\widehat\psi(a_j)|\right).
\]
This second choice of homology basis is the one used in \cite{BG} and \cite{Eynard}.
\end{remark}

\subsection*{Outline}
The remaining part of the manuscript is organized as follows.
In Section~\ref{SecOP}, we use the orthonormal polynomials with respect to the weight $w_N(x;\alpha,r,\sigma)=x^{\alpha}e^{-\frac{2N}{\sigma}(x^2-rx)} $ on the 
half line $[0,+\infty)$ to prove the identities in Proposition \ref{prop identities}, and
we express the partition function $Z_N(r,s)$ with respect to the potential $
V_0(x)=V_{r,s}(x)=\dfrac{1}{s}(x^4-rx^2)$ in terms of  quantities  related to those polynomials.
This will lead to a proof of (\ref{Zalpha-intro}) and (\ref{sec3: 3}).

In Section~\ref{SecAsym}, we obtain leading and subleading asymptotics for the orthogonal polynomials on the half line using a steepest descent analysis of the associated RH problem.

In Section~\ref{section: as G}, we use the results from Section~\ref{SecOP} and Section~\ref{SecAsym} to derive  the asymptotic expansion of the partition function
$\log Z_N(r,s)$  as $N\to \infty$ up to terms of order $\bigO(N^{-1})$, thus proving Theorem \ref{theorem 1}.

In Section~\ref{Sec1-cut}, we show that the space of  one-cut regular  polynomial potential is path-wise connected (Theorem~\ref{theorem: 1}). 

In Section~\ref{section: 2-cut}, we show that the space of 
two-cut regular polynomial potential is path-wise connected (Theorem~\ref{theorem: 2}).

In the last section, we determine the differential identities for the partition function $\log(Z_N(V_{\vec t}))$  with respect to the parameters $t_k$ for a two-cut regular polynomial potential of the form
$V_{\vec t}(z)=V_0(z)+\sum_{k=1}^{2d}t_kz^k$ and we calculate the asymptotic expansion as $N\to \infty$  of these differential identities up to terms of order $\bigO(N^{-1})$.
We then integrate in the space of parameters  from the reference potential $V_0(z)$ up to any two-cut potential $V_{\vec t}(z)$, thus proving our main result, Theorem~\ref{theorem212}.

\section{Orthogonal polynomials and differential identities\label{SecOP}}

\subsection{ Proof of Proposition~\ref{prop identities}}
 Let $Z_N(V)$ be  the partition function defined in (\ref{def Zn}).  It is a standard fact that, for general $V$,
\begin{equation}\label{partition rec}
Z_{N}(V)=N!\ \prod_{n=0}^{N-1}\kappa_{n,N}^{-2}(V),
\end{equation}
where $\kappa_{n,N}=\kappa_{n,N}(V)>0$ is the leading coefficient of the degree $n$ orthonormal polynomial defined by
\begin{equation}\label{OP}
\int_{\mathbb R}P_{n,N}(x)P_{m,N}(x)e^{-NV(x)}dx=\delta_{mn},\qquad
P_{n,N}(x)=\kappa_{n,N}x^n+ \ldots 
\end{equation}
Now let   $V=V_0=V_{r,s}$ be  the quartic symmetric polynomial given by (\ref{def V}), and consider orthonormal polynomials $p_{n,N}(x;\alpha,r,\sigma)$ on $\mathbb
R^+$ with respect to the weight
\begin{equation}w_N(x;\alpha,r,\sigma)=x^{\alpha}e^{-\frac{2N}{\sigma}(x^2-rx)}, \label{w1}
\end{equation}
so that we have the orthogonality conditions
\begin{equation}\label{OP2}
\int_0^{+\infty}p_{n,N}(x;\alpha,r,\sigma)p_{m,N}(x;\alpha,r,\sigma)w_N(x;\alpha,r,\sigma)dx=\delta_{mn}.
\end{equation}
We express the leading and subleading terms as $x\to\infty$ of $p_{k,N}$ as follows,\begin{equation}\label{ckdk}
p_{k,N}(x;\alpha,r,\sigma)=\gamma_{k,N}(\alpha,r,\sigma)
x^k(1+\frac{c_{k,N}(\alpha,r,\sigma)}{x}+\frac{d_{k,N}(\alpha,r,\sigma)}{x^2}+\ldots),\quad
\gamma_{k,N}>0.
\end{equation}
Applying the change of variable $x=u^2$ to the integral in (\ref{OP2}) for $\sigma=s$, one obtains orthogonality relations for the polynomials $p_{n,N}(u^2)$ and $up_{n,N}(u^2)$, and it is straightforward to verify that they obey precisely the orthogonality relations  (\ref{OP}) with $V=V_0=V_{r,s}$ for $P_{2n,N}$ (with $\alpha=-1/2$) and $P_{2n+1,N}$ (with $\alpha=1/2$), so that by uniqueness we have the identities
\begin{equation}\label{OP id}
P_{2n,2N}(u;r,s)=p_{n,N}(u^2;-1/2,r,s),\qquad P_{2n+1,2N}(u;r,s)=up_{n,N}(u^2;1/2,r,s).
\end{equation}
Similar identities have been derived previously, see for example \cite[Appendix B]{ADDV}, \cite{Forrester}, and \cite{CK2}.
In particular, we have by (\ref{OP id}),
\begin{equation}\label{lc id}
\kappa_{2k,2N}(r,s)=\gamma_{k,N}(-1/2;r,s),\qquad
\kappa_{2k+1,2N}(r,s)=\gamma_{k,N}(1/2;r,s).
\end{equation}
Hence, by (\ref{partition rec}), we obtain
\begin{align}&\label{partition rec2}
\log Z_{2N}(r,s)=\log(2N)!+\log\widehat Z_{N}(-1/2;r,s)+\log\widehat Z_{N}(1/2;r,s),\\
&\label{partition rec3} \log Z_{2N+1}(r,s)=\log(2N+1)!+\log\widehat
Z_{N+1}(-1/2;r,s_+)+\log\widehat Z_{N}(1/2;r,s_-),
\end{align}
with $s_\pm$ as in (\ref{spm}), and
\begin{equation}\label{hat Zn}
\log \widehat Z_{N}(\alpha;r,\sigma)=-2\sum_{n=0}^{N-1}\log\gamma_{n,N}(\alpha;r,\sigma).
\end{equation}
The quantity $\widehat Z_{N}(\alpha;r,\sigma)$ is the partition function corresponding to a unitary random matrix ensemble for positive-definite matrices, and following standard arguments, it can be expressed as the $N$-fold integral (\ref{def Zn2-intro}). This proves Proposition \ref{prop identities}.
In order to obtain asymptotics for $Z_{N}(r,s)$, it is thus sufficient to
have asymptotics for $\widehat Z_{N}(1/2;r,\sigma)$ and $\widehat Z_{N}(-1/2;r,\sigma)$ as $N\to\infty$, with $\sigma=s$ and $\sigma=s_\pm$.
The advantage of
this formula is that the equilibrium problem associated to the partition function $\widehat Z_N(\alpha;r,\sigma)$, is to minimize the quantity
\begin{equation}\label{energy2}
I_{r,\sigma}(\nu)=\iint \log|x-y|^{-1}d\nu(x)d\nu(y)+\frac{2}{\sigma}\int (x^2-rx)d\nu(x)
\end{equation}
among all Borel probability measures on $\mathbb R^+$, and that the support of the minimizer consists of one single interval, as opposed to the support of $\mu_{r,s}$.
We have indeed
\begin{equation}\label{nu}
d\nu_{r,\sigma}(x)=\frac{2}{\pi \sigma}\sqrt{(b_{r,\sigma}-x)(x-a_{r,\sigma})}dx,\qquad x\in [a_{r,\sigma},b_{r,\sigma}],
\end{equation}
which is a semi-circle distribution, with $a_{r,\sigma}>0$ and $b_{r,\sigma}>0$ given by
\begin{equation}
a_{r,\sigma}=\frac{1}{2}(r-2\sqrt{\sigma}),\qquad b_{r,\sigma}=\frac{1}{2}(r+2\sqrt{\sigma})
,\label{absigma}\end{equation}
 see Example \ref{exampel: 1}.

\subsection{Differential identity for $\log\widehat Z_{N}(\alpha)$}\label{section: diff id}
In what follows, we will consider $r$ and $\sigma$ as fixed parameters, and we omit them in our notations, writing for example $a,b,\nu, \widehat Z_N(\alpha)$ instead of $a_{r,\sigma},b_{r,\sigma},\nu_{r,\sigma}, \widehat Z_N(\alpha;r,\sigma)$.
In this section, $N$ will also be fixed and we will also use the abbreviated notations $p_{n},\gamma_n, c_n, d_n, w(x)$ instead of $p_{n,N},\gamma_{n,N}, c_{n,N}, d_{n,N}, w_N(x)$
when there is no possible confusion.
We will now derive a differential identity for $\log\widehat Z_{N}(\alpha)$
as a function of $\alpha$, following similar ideas as in \cite{K}.  By
(\ref{hat Zn}), we have
\begin{eqnarray}
\frac{d}{d\alpha}\log
\widehat Z_{N}(\alpha)&=&-2\sum_{n=0}^{N-1}\frac{\gamma_n'(\alpha)}{\gamma_n(\alpha)}  \nonumber\\
&=&-\sum_{n=0}^{N-1}\int_0^{+\infty}\frac{\partial}{\partial\alpha}\left(p_{n}^2(x;\alpha)\right)w(x;\alpha)dx\label{idhat Zn}.
\end{eqnarray}
Recalling the Christoffel-Darboux formula
\begin{equation}
\label{CD} \sum_{n=0}^{N-1}p_{n}^2(x;\alpha)=a_{N-1}(\alpha)
(p_{N}'(x;\alpha)p_{N-1}(x;\alpha)-p_{N}(x;\alpha)p_{N-1}'(x;\alpha)),
\end{equation}
where $a_n$ is the recurrence coefficient defined by
\begin{equation}\label{recurr}
xp_{n}(x;\alpha)=a_{n}(\alpha)p_{n+1}(x;\alpha)+b_n(\alpha)p_{n}(x;\alpha)+a_{n-1}(\alpha)p_{n-1}(x;\alpha),
\end{equation}
and substituting it into (\ref{idhat Zn}), we obtain
\begin{multline}\label{diff id alpha}
\frac{d}{d\alpha}\log \widehat Z_{N}(\alpha)
=-\int_0^{+\infty}\frac{\partial}{\partial\alpha}\left(a_{N-1}(\alpha)p_{N}'(x;\alpha)p_{N-1}(x;\alpha)\right)w(x;\alpha)dx
\\ \qquad+\int_0^{+\infty}\frac{\partial}{\partial\alpha}\left(a_{N-1}(\alpha)p_{N}(x;\alpha)p_{N-1}'(x;\alpha)\right)
w(x;\alpha)dx.\end{multline}
By the orthogonality conditions for the orthogonal polynomials, we obtain
\begin{multline}
\frac{d}{d\alpha}\log \widehat Z_{N}(\alpha)
=-Na_{N-1}'(\alpha) \frac{\gamma_N(\alpha)}{\gamma_{N-1}(\alpha)}
-a_{N-1}(\alpha)\int_0^{+\infty}\left( \frac{\partial}{\partial\alpha} p_{N}'(x;\alpha) \right)
p_{N-1}(x;\alpha)w(x;\alpha)dx\\
\qquad -a_{N-1}(\alpha)\int_0^{+\infty}p_{N}'(x;\alpha)
\left( \frac{\partial}{\partial\alpha}p_{N-1}(x;\alpha) \right) w(x;\alpha)dx \\
\qquad\qquad
+a_{N-1}(\alpha)\int_0^{+\infty}\left( \frac{\partial}{\partial\alpha} p_{N}(x;\alpha) \right)p_{N-1}'(x;\alpha)
w(x;\alpha)dx
\\ \qquad+a_{N-1}(\alpha)\int_0^{+\infty}p_{N}(x;\alpha)
\left( \frac{\partial}{\partial\alpha} p_{N-1}'(x;\alpha) \right)  w(x;\alpha)dx.
\end{multline}
The last term in the above equation vanishes because of the
orthogonality, and using the orthogonality also for the
other integrals we obtain
\begin{equation}
\frac{d}{d\alpha}\log \widehat
Z_{N}(\alpha)=-N a_{N-1}'(\alpha) \frac{\gamma_N(\alpha)}{\gamma_{N-1}(\alpha)}
-Na_{N-1}(\alpha)\frac{\gamma_N'(\alpha)}{\gamma_{N-1}(\alpha)}+a_{N-1}(\alpha)(J_1-J_2)
,
\end{equation}
where
\begin{align}
&J_1=\int_0^{+\infty}\left( \frac{\partial}{\partial\alpha} p_{N}(x;\alpha) \right) p_{N-1}'(x;\alpha)
w(x;\alpha)dx,\\
&J_2=\int_0^{+\infty}p_{N}'(x;\alpha)
\left( \frac{ \partial }{\partial\alpha} p_{N-1}(x;\alpha) \right) w(x;\alpha)dx.
\end{align}
From the recurrence relation (\ref{recurr}), it follows directly that $a_{n}=\frac{\gamma_{n}}{\gamma_{n+1}}$, and using this relation we
obtain
\begin{equation}\label{hat Zn2}
\frac{d}{d\alpha}\log \widehat
Z_{N}(\alpha)=-N \frac{\gamma_{N-1}'(\alpha)}{\gamma_{N-1}(\alpha)}
+\frac{\gamma_{N-1}}{\gamma_N}(\alpha)(J_1-J_2)
.
\end{equation}
We will simplify this expression further, but for that purpose we need to introduce $w(z;\alpha)=z^{\alpha}e^{-\frac{2N}{\sigma}(z^2-rz)}$ as the analytic continuation to $\mathbb C\setminus [0,+\infty)$ of the weight function $w(x;\alpha)$, in such a way that \[\lim_{\epsilon\to 0}w(x+i\epsilon;\alpha)=w(x;\alpha),\qquad \lim_{\epsilon\to 0}w(x-i\epsilon;\alpha)=e^{2\pi i\alpha}w(x;\alpha)\]
for $x\in\mathbb R$. In this way we can write integrals on $(0,+\infty)$ as integrals on a contour $\mathcal{C}$ that encircles clockwise  the line $(0,+\infty)$: we have
\[\int_0^{+\infty} P(x)w(x;\alpha)dx=\dfrac{1}{1-e^{2\pi i\alpha}}\int\limits_{\mathcal{C}}P(z)w(z;\alpha)dz\]
for any polynomial $P$. The main advantage of the contour $\mathcal C$ for us is that it allows for integration of functions with a pole at $0$.
We can simplify the formula for $J_2$ by integrating by parts and by using the fact that
\[\frac{\partial}{\partial\alpha} p_{N}(x;\alpha)=\left(\frac{\partial}{\partial\alpha} p_{N}(x;\alpha)-\frac{\partial}{\partial\alpha} p_{N}(0;\alpha)\right)+\frac{\partial}{\partial\alpha} p_{N}(0;\alpha),\]
and this leads to
\begin{equation}
\begin{split}
J_2&=\dfrac{1}{1-e^{2\pi i\alpha}} \int\limits_{\mathcal{C}}p_{N}'(z;\alpha)
\left( \frac{ \partial }{\partial\alpha} p_{N-1}(z;\alpha) \right) w(z;\alpha)dz\\
&=-\dfrac{1}{1-e^{2\pi i \alpha}}  \int\limits_{\mathcal{C}}p_{N}(z;\alpha)
\left( \frac{\partial}{\partial\alpha} p_{N-1}(z;\alpha) \right) (\frac{\alpha}{z}-4\frac{N}{\sigma} z+2\frac{N}{\sigma} r)w(z;\alpha)dz\\
&=4\frac{N}{\sigma}\frac{\gamma_{N-1}'(\alpha)}{\gamma_N(\alpha)}-\dfrac{\alpha }{1-e^{2\pi i \alpha}} \left( \frac{\partial}{\partial\alpha} p_{N-1}(0;\alpha) \right)  \int\limits_{\mathcal{C}}p_{N}(z;\alpha)
\frac{1}{z}w(z;\alpha)dz.\label{J1}
\end{split}
\end{equation}
Similarly
\begin{multline}
\label{J2}
J_1=-N\frac{\gamma_N'(\alpha)}{\gamma_{N-1}(\alpha)}-\dfrac{\alpha}{1-e^{2\pi i\alpha}} \int\limits_{\mathcal{C}}\left( \frac{\partial}{\partial\alpha} p_{N}(z;\alpha) \right)
p_{N-1}(z;\alpha)\frac{1}{z}w(z;\alpha)dz\\
-\int_0^{+\infty}\left( \frac{\partial}{\partial\alpha} p_{N}(x;\alpha) \right)
p_{N-1}(x;\alpha)(-4\frac{N}{\sigma}x+2\frac{N}{\sigma} r)w(x;\alpha)dx.
\end{multline}
The second term can be computed as in (\ref{J1}), for the last term we expand $\frac{\partial}{\partial\alpha} p_{N}(x;\alpha)$ for large $x$ using (\ref{ckdk}) as
\[\frac{\partial}{\partial\alpha} p_{N}(x;\alpha)=\gamma_N'(\alpha)x^N+(\gamma_Nc_N)'(\alpha)x^{N-1}+(\gamma_Nd_N)'(\alpha)x^{N-2}+\bigO(x^{N-3}).\]
This yields
\begin{multline}
J_1=-(N+\alpha)\frac{\gamma_N'(\alpha)}{\gamma_{N-1}(\alpha)}
- \left( \frac{\partial}{\partial \alpha} p_{N}(0;\alpha) \right)\dfrac{\alpha}{1-e^{2\pi i\alpha}} \int\limits_{\mathcal{C}}
p_{N-1}(z;\alpha)\frac{1}{z}w(z;\alpha)dz \\
+2\frac{N}{\sigma}  \frac{2(\gamma_Nd_N)'(\alpha)-r(\gamma_Nc_{N})'(\alpha)}{\gamma_{N-1}(\alpha)}+4\frac{N}{\sigma}\gamma_N'(\alpha)\int_0^{+\infty}
p_{N-1}(x;\alpha)x^{N+1}w(x;\alpha)dx
\\ + 2\frac{N}{\sigma}\left(2(\gamma_Nc_N)'(\alpha)-r\gamma_N'(\alpha)\right)\int_0^{+\infty}
p_{N-1}(x;\alpha)x^{N}w(x;\alpha)dx.
\end{multline}
Note that the above formulas are valid for any $\alpha\in[-1/2,1/2]$ including $\alpha=0$.
The constants $c_N$ and $d_N$ are defined by (\ref{ckdk}) and determine, together with $\gamma_N$, the sub-leading coefficients of $p_N$.

\subsection{Differential identity in terms of RH solution $Y$}

If we write
\begin{equation}\label{Y}
Y(z;\alpha)=Y^{(N)}(z;\alpha,r,\sigma)=\begin{pmatrix} \gamma_N^{-1}p_N(z)&
\frac{1}{2\pi i}\gamma_N^{-1}\int_0^{+\infty}p_N(s)\frac{w(s)ds}{s-z}\\
-2\pi i\gamma_{N-1}p_{N-1}(z)&
-\gamma_{N-1}\int_0^{+\infty}p_{N-1}(s)\frac{w(s)ds}{s-z}
\end{pmatrix},
\end{equation}
$Y$ satisfies the RH problem \cite{FIK}
\subsubsection*{RH problem for $Y$}
\begin{itemize}
\item[(a)] $Y$ is analytic in $\mathbb C\setminus \mathbb [0,+\infty)$,
\item[(b)] $Y$ has boundary values $Y_\pm$ for $x\in(0,+\infty)$, and they are related by the jump property
\begin{equation}\label{jump Y}
Y_+(x)=Y_-(x)\begin{pmatrix}1&w(x;\alpha)\\0&1\end{pmatrix},\qquad x\in (0,+\infty).
\end{equation}
\item[(c)] As $z\to\infty$, $Y$ has an asymptotic expansion of the form
\begin{equation}
\label{Yinfty}Y(z)z^{-n\sigma_3}=I+Y_1z^{-1}+Y_2z^{-2}+Y_3z^{-3}+\bigO(z^{-4}),
\end{equation}
where $Y_1,Y_2, Y_3$ are matrices that may depend on $\alpha, N,r,\sigma$ but not on $z$.
\end{itemize}
It is straightforward to derive from (\ref{Y}) the identities
\begin{align}
&c_N=Y_{1,11},&& d_N=Y_{2,11},\\
&\gamma_{N-1}=\left(\frac{Y_{1,21}}{-2\pi i}\right)^{1/2},&& \gamma_N=\left(-\frac{1}{2\pi iY_{1,12}}\right)^{1/2},\\
&\int_{0}^{+\infty}p_{N-1}(x)x^{N}w(x;\alpha)dx=\frac{Y_{1,22}}{\gamma_{N-1}}, &&\int_{0}^{+\infty}p_{N-1}(x)x^{N+1}w(x;\alpha)dx=\frac{Y_{2,22}}{\gamma_{N-1}}.
\end{align}
We can use those identities to express $J_1$, $J_2$, and $\frac{d}{d\alpha}\log \widehat Z_{N}(\alpha)$ in terms of $Y=Y^{(N)}$.
Indeed, for $J_2$, we obtain
\begin{equation}\label{J2-2}
J_2=2\frac{N}{\sigma}Y_{1,21}'\sqrt{\frac{Y_{1,12}}{Y_{1,21}}}- \left(\frac{Y_{21}(0;\alpha)}{-i\sqrt{2\pi i Y_{1,21}}}\right)'  \dfrac{\alpha }{1-e^{2\pi i \alpha}}\int\limits_{\mathcal{C}}p_{N}(z;\alpha)
\frac{1}{z}w(z;\alpha)dx,
\end{equation}
and for $J_1$,
\begin{multline}\label{J1-2}
J_1=\frac{N+\alpha}{2}\frac{Y_{1,12}'}{Y_{1,12}}\frac{1}{\sqrt{Y_{1,12}Y_{1,21}}}\\
+\frac{2N}{\sigma\sqrt{Y_{1,12}Y_{1,21}}}\frac{ Y_{1,12}'}{Y_{1,12}}\left(-\Tr Y_{2}+\frac{r}{2}\Tr Y_1-Y_{1,11}Y_{1,22}\right)\\
+\frac{2N}{\sigma\sqrt{Y_{1,12}Y_{1,21}}}\left(2Y_{2,11}'-rY_{1,11}'+2Y_{1,11}'Y_{1,22}\right)\\
-  \left(\frac{Y_{11}(0;\alpha)}{\sqrt{-2\pi iY_{1,12}}} \right)'\dfrac{\alpha}{1-e^{2\pi i\alpha}} \int\limits_{\mathcal{C}}
p_{N-1}(z;\alpha)\frac{1}{z}w(z;\alpha)dz
.\end{multline}
In the above formulas, the primes denote derivatives with respect to $\alpha$.
The jump condition for $Y$ implies that $\det Y$ is an entire function, and the asymptotics for $Y$ together with Liouvilles theorem yield $\det Y\equiv 1$. In view of (\ref{Yinfty}), this can only be true if
$\Tr Y_1=0$.
Combining  (\ref{hat Zn2}),  (\ref{J1-2}) and  (\ref{J2-2}) one arrives at the formula
\begin{equation}\label{hat Zn22}
\begin{split}
&\frac{d}{d\alpha}\log \widehat Z_{N}(\alpha)=
\frac{\alpha}{2}\frac{ Y_{1,12}'}{Y_{1,12}}+2\frac{N}{\sigma}\left((\det Y_1)'+2Y_{2,11}'-rY_{1,11}'\right)\\
&+\frac{N}{2}\left(\frac{Y_{1,12}'}{Y_{1,12}}-\frac{Y_{1,21}'}{Y_{1,21}}\right)\\
&+\sqrt{Y_{1,12}Y_{1,21}} \left(\frac{Y_{21}(0;\alpha)}{-i\sqrt{2\pi i Y_{1,21}}}\right)'  \dfrac{\alpha }{1-e^{2\pi i \alpha}}\int\limits_{\mathcal{C}}p_{N}(z;\alpha)\frac{1}{z}w(z;\alpha)dz\\
&-  \sqrt{Y_{1,12}Y_{1,21}}\left(\frac{Y_{11}(0;\alpha)}{\sqrt{-2\pi iY_{1,12}}} \right)'\dfrac{\alpha}{1-e^{2\pi i\alpha}} \int\limits_{\mathcal{C}}
p_{N-1}(z;\alpha)\frac{1}{z}w(z;\alpha)dz.
  \end{split}
\end{equation}

Regarding the last two integrals in the above formulas,  we have
\begin{eqnarray}
Y_{22}(z) = \frac{- 2 \pi i \gamma_{N-1}}{1 - e^{ 2 \pi i \alpha}}  p_{N-1}(z;\alpha) w(z;\alpha) -  \frac{ \gamma_{N-1}}{1 - e^{ 2 \pi i \alpha}} \int\limits_{\mathcal{C}} \frac{p_{N-1}(z';\alpha) w(z')}{z' - z} dz',
\end{eqnarray}
if $z$ lies inside the contour $\mathcal C$.
If we set
\begin{equation}
\label{def tilde Y}
\tilde Y(z)=Y(z)\begin{pmatrix}1&-\frac{w(z;\alpha)}{1-e^{2\pi i\alpha}}\\0&1\end{pmatrix},
\end{equation}
it follows that
\begin{equation}
\label{J1a}
\begin{split}
\frac{\alpha}{1 - e^{ 2 \pi i \alpha}}  &\int\limits_{\mathcal{C}} p_{N-1}(z';\alpha) \frac{1}{z'} w(z';\alpha) dz' \\
&=-\dfrac{\alpha}{\gamma_{N-1}}\lim_{z\rightarrow 0}\left( Y_{22}(z;\alpha)-\dfrac{Y_{21}(z;\alpha)w(z;\alpha)}{1-e^{2\pi i \alpha}}\right)
=-\dfrac{\alpha\sqrt{2\pi}}{\sqrt{iY_{1,21}}}\tilde Y_{22}(0;\alpha).
\end{split}
\end{equation}
Now, for the other integral in (\ref{hat Zn22}), the relevant identity is
\begin{equation}
\gamma_N Y_{12}(z)=\dfrac{1}{2\pi i (1 - e^{ 2 \pi i \alpha})} \int\limits_{\mathcal{C}}\frac{p_{N}(z';\alpha) w(z';\alpha) }{z'-z}dz'+\dfrac{1}{1 - e^{ 2 \pi i \alpha}} p_{N}(z;\alpha) w(z;\alpha),
\end{equation}
from which it follows that
\begin{multline}
\label{J2a}
\frac{ \alpha }{1 - e^{ 2 \pi i \alpha}} \int_{\mathcal C} \frac{p_{N}(z';\alpha) w(z';\alpha)}{z'} dz' =
2\pi i\alpha\gamma_N\lim_{z\rightarrow 0 }\left(Y_{12}^{(N)}(z;\alpha)-\dfrac{Y_{11}^{(N)}(z;\alpha)w(z;\alpha)}{1-e^{2\pi i \alpha}}\right)
\\=\frac{\sqrt{2\pi}\, i\alpha}{\sqrt{-iY_{1,12}}}\tilde Y_{12}(0;\alpha)
.
\end{multline}
Substituting those identities in (\ref{hat Zn22}), we obtain
\begin{equation}\label{G1b}\frac{d}{d\alpha}\log \widehat
Z_{N}(\alpha)=\mathcal G_{N}(\alpha),\end{equation}
with
\begin{multline}\label{GN}
\mathcal G_{N}(\alpha)=\mathcal G_N(\alpha;r,\sigma)=\frac{\alpha}{2}\frac{ Y_{1,12}'}{Y_{1,12}}+2\frac{N}{\sigma}\left((\det Y_1)'+2Y_{2,11}'-rY_{1,11}'\right)\\
+\frac{N}{2}\left(\frac{Y_{1,12}'}{Y_{1,12}}-\frac{Y_{1,21}'}{Y_{1,21}}\right) +\sqrt{Y_{1,12}Y_{1,21}} \left(\frac{Y_{11}(0;\alpha)}{\sqrt{- iY_{1,12}}} \right)'\dfrac{\alpha}{\sqrt{iY_{1,21}}}\tilde Y_{22}(0;\alpha)\\
-\sqrt{Y_{1,12}Y_{1,21}} \left(\frac{Y_{21}(0;\alpha)}{\sqrt{ i Y_{1,21}}}\right)'\frac{\alpha}{\sqrt{-iY_{1,12}}}\tilde Y_{12}(0;\alpha).
  \end{multline}
  The identity (\ref{Zalpha-intro}) stated in the introduction follows from (\ref{G1b}).
It then follows from (\ref{partition rec2})-(\ref{partition rec3})
that
\begin{align}&\label{partition rec4}
\log Z_{2N}(r,s)=\log(2N)!+2\log\widehat Z_{N}(0;r,s)\\
&\qquad\qquad\qquad\qquad +\int_0^{-1/2}\mathcal G_{N}(\alpha;r,s)d\alpha+\int_0^{1/2}\mathcal G_{N}(\alpha;r,s)d\alpha,\\
& \log Z_{2N+1}(r,s)=\log(2N+1)!+\log\widehat
Z_{N+1}(0;r,s_+)+\log\widehat Z_{N}(0;r,s_-)\nonumber\\&\qquad\qquad\qquad\qquad+\int_0^{-1/2}\mathcal G_{N+1}(\alpha;r,s_+)d\alpha
+\int_0^{1/2}\mathcal G_{N}(\alpha;r,s_-)d\alpha.\label{partition rec5}
\end{align}
The equilibrium measure $\nu_{r,\sigma}$ is one-cut, and we can apply the
Deift-Zhou steepest descent method on the RH problem for $Y$ to
obtain large $N$ asymptotics for $Y,Y_1,Y_2$ and derivatives, uniformly for
$\alpha\in[-1/2,1/2]$. This means that we can, in principle, obtain
asymptotics for each of the integrals in the above formulas. The
only remaining problem is then to find asymptotics for $\widehat
Z_{N}(0;r,\sigma)$.

\subsection{Asymptotics for $\widehat
Z_{N}(0;r,\sigma)$}
By a shift of the integration variable, we have
\begin{eqnarray}
\widehat Z_{N}(0;r,\sigma)&=&\frac{1}{N!}\int_{\mathbb R_+^N}\prod_{i<j}(x_i-x_j)^2
\prod_{j=1}^N
e^{-\frac{2N}{\sigma}(x_j^2-rx_j)}dx_j
\\
&=&\frac{e^{\frac{N^2r^2}{2\sigma}}}{N!}\int_{\mathbb R_+^N}\prod_{i<j}(x_i-x_j)^2
\prod_{j=1}^N
e^{-2\frac{N}{\sigma}(x_j-\frac{r}{2})^2}dx_j\\
&=&\frac{e^{\frac{N^2r^2}{2\sigma}}}{N!}\int_{(-\frac{r}{2},+\infty)^N}\prod_{i<j}(x_i-x_j)^2
\prod_{j=1}^N
e^{-2\frac{N}{\sigma}x_j^2}dx_j,\end{eqnarray}
which is, up to a pre-facor, the partition function for a Gaussian Unitary Ensemble (GUE) with a cut-off at $-r/2$.
We will show that \[\ Z_{N,\sigma}^{(r)}:=\int_{(-\frac{r}{2},+\infty)^N}\prod_{i<j}(x_i-x_j)^2
\prod_{j=1}^N
e^{-\frac{2N}{\sigma}x_j^2}dx_j,\qquad \sigma< \frac{r^2}{4}\] is, as $N\to\infty$,  exponentially close to the GUE partition function without a cut-off, given by
\begin{equation}
Z_{N,\sigma}^{GUE}:=Z_{N,\sigma}^{(-\infty)}=\int_{\mathbb R^N}\prod_{i<j}(x_i-x_j)^2
\prod_{j=1}^N e^{-2\frac{N}{\sigma} x_j^2}dx_j.
\end{equation}
In the GUE (without cut-off) with joint probability distribution of eigenvalues given by
\[\frac{1}{Z_{N,\sigma}^{GUE}}\prod_{i<j}(x_i-x_j)\prod_{j=1}^Ne^{-\frac{2N}{\sigma}x_j^2}dx_j,\]
the ratio $Z_{N,\sigma}^{(r)}/Z_{N,\sigma}^{GUE}$ is equal to the probability that all eigenvalues are bigger than $-r/2$. It is well-known that this probability is exponentially close to $1$ since $-r/2$ is strictly smaller than the left endpoint of the support of the equilibrium measure (which is the limiting mean eigenvalue distribution) corresponding to this ensemble. Indeed, it follows for example by a large deviation principle, see \cite[Section 2.6.2]{AGZ}, that
\[\frac{Z_{N,\sigma}^{(r)}}{Z_{N,\sigma}^{GUE}}=1+\bigO(e^{-cN}),\qquad N\to\infty,\ c>0.\]

We can conclude that, as $N\to\infty$,
\begin{align}&\label{partition rec6}
\log Z_{2N}(r,s)=\log(2N)!+2\log \left[\frac{1}{N!}Z_{N,s}^{GUE}\right]\nonumber\\
&\qquad+\frac{N^2r^2}{s}+\int_0^{-1/2}\mathcal G_{N,}(\alpha;r,s)d\alpha+\int_0^{1/2}\mathcal G_{N}(\alpha;r,s)d\alpha\ +\mathcal{O} \left(e^{ - c N} \right),\\
& \log Z_{2N+1}(r,s)=\log(2N+1)! +\log \left[\frac{1}{(N+1)!}Z_{N+1,s_+}^{GUE}\right]+\log\left[\frac{1}{N!} Z_{N,s_-}^{GUE}\right]\nonumber\\&\qquad
+\frac{(2N+1)^2r^2}{4s}
+\int_0^{-1/2}\mathcal G_{N+1}(\alpha;r,s_+)d\alpha\nonumber\\&\qquad \qquad\qquad
+\int_0^{1/2}\mathcal G_{N}(\alpha;r,s_-)d\alpha\ +\mathcal{O} \left(e^{ - c N} \right),\label{partition rec7}
\end{align}
where we used the identity \[\frac{(N+1)^2 r^2}{2 s_+}+\frac{N^2 r^2}{2 s_-}=\frac{(2N+1)^2r^2}{4s}.\]

\section{Analysis of the RH problem for the orthogonal polynomials\label{SecAsym}}
In this section, we will obtain asymptotics for the RH problem for $Y$, stated in the previous section, as $N\to\infty$, and this will also lead us towards asymptotics for $\mathcal G_N$.
We recall that $Y$ depends on $N$, $\alpha$, $r$, and $\sigma$.
By (\ref{partition rec6})-(\ref{partition rec7}), we will need asymptotics as $N\to\infty$ for $-\frac{1}{2}\leq \alpha\leq \frac{1}{2}$, and with $\sigma=s$ and $\sigma=s_\pm$.
In the case where $\sigma=s_\pm$, $\sigma$ will be $N$-dependent by (\ref{spm}), but this will not cause any problems because the asymptotics for $Y$ will be uniform in $\sigma$ as long as $\sigma<\frac{r^2}{4}-\epsilon$, $\epsilon>0$. Since we assume that $s<\frac{r^2}{4}$, we also have that $s_\pm <\frac{r^2}{4}-\epsilon$ for $N$ sufficiently large and $\epsilon$ sufficiently small.

\subsection{The equilibrium measure}
Define the equilibrium measure $\nu_{r,\sigma}$ minimizing the logarithmic energy (\ref{energy2})
among all probability measures on $\mathbb R^+$.
As mentioned in Example \ref{exampel: 1}, one can compute $\nu_{r,\sigma}$ explicitly: it is supported on the interval $[a,b]=[\frac{r}{2}-\sqrt{\sigma}, \frac{r}{2}+\sqrt{\sigma}]$ and the density is given by (\ref{nu}).
For all $\sigma<\frac{r^2}{4}$, we have that $a>0$.
Now, define
\begin{eqnarray}
g(z) = \int_{a}^{b} \log{(z - s)} d \nu_{r,\sigma}(s),
\end{eqnarray}
where we choose the branch of the logarithm such that $g$ is analytic in $\mathbb C\setminus(-\infty,b]$.
It is directly verified that
\begin{equation}\label{gjump1}
g_{+}(x)-g_{-}(x)=2\pi i\int_x^{b}d\nu_{r,\sigma}(y),
\qquad \mbox{ for $x<b$,}
\end{equation}
where $\int_x^{b}d\nu_{r,\sigma}(y)$ is understood as $1$ for $x<a$.
For $x\in[a,b]$, the Euler-Lagrange variational conditions conditions for $\nu_{r,\sigma}$ imply that
\begin{equation}\label{gjump2}
g_{+}(x) + g_{-}(x) - W(x) - \ell = 0,\qquad W(x)=\frac{2}{\sigma}(x^2-rx),
\end{equation}
for some number $\ell$ depending on $r,\sigma$, and that the left hand side of (\ref{gjump2}) is strictly negative on $(0,+\infty)\setminus[a,b]$.

For later use, we compute the asymptotics for the $g$-function as $z\to\infty$, which are given by
\begin{eqnarray}
g(z) &=& \log{z} - \frac{1}{z} \int y d\nu_{r,\sigma}(y) - \frac{1}{2 z^{2}} \int y^{2} d\nu_{r,\sigma}(y) + \bigO(z^{-3}) \\
&=& \log{z} - \frac{m_{1}}{z} - \frac{m_{2}}{2 z^{2}} + \bigO(z^{-3}),
\end{eqnarray}
where we write $m_k=\int y^kd\nu_{r,\sigma}(y)$.
It follows that
\begin{eqnarray}\label{egasymp}
e^{ N g(z) \sigma_{3}} = z^{ N \sigma_{3}} \left( I + \frac{G_{1}}{z} + \frac{G_{2}}{z^{2}} + \bigO(z^{-3}) \right) \ ,\qquad\mbox{ as $z\to\infty$,}
\end{eqnarray}
with
\begin{eqnarray}\label{G1G2}
G_{1} =\begin{pmatrix}- N m_{1}&0\\0&N m_{1}\end{pmatrix}, \ \ \ \ \ \ G_{2} = \begin{pmatrix}\frac{N^{2} m_{1}^2 - N m_{2}}{2} &0\\0&\frac{N^{2} m_{1}^2 + N m_{2}}{2}\end{pmatrix}.
\end{eqnarray}
By (\ref{nu}), $m_1$ and $m_2$ can be computed explicitly, we have
\begin{equation}\label{m12}
m_1=\frac{a+b}{2}=\frac{r}{2},\qquad m_2=\frac{1}{16}(5a^2+6ab+5b^2)=\frac{1}{4}(r^2+\sigma).
\end{equation}

\subsection{Transformation $Y\mapsto T$}
We use the $g$-function in the first transformation of the RH problem and define $T$ by
\begin{eqnarray}\label{def T}
T(z) = e^{ - N \ell \sigma_{3} / 2 } Y(z) e^{ - N ( g(z) - \ell/2) \sigma_{3}} \ ,
\end{eqnarray}
where $Y$ is given by (\ref{Y}), with $w(x)=x^\alpha e^{-NW(x)}$.
From the RH conditions for $Y$, it is straightforward to obtain a RH problem for the new unknown $T$: we have
\subsubsection*{RH problem for $T$}
\begin{itemize}
\item[(a)] $T$ is analytic in $\mathbb C\setminus\mathbb [0,+\infty)$.
\item[(b)] For $x\in\mathbb R$, we have the jump relation
\begin{equation}
T_{+}(x) = T_{-}(x)J_T(x),\qquad  x \in [a,b] \ ,
\end{equation}
with
\begin{equation}
J_T(x)=\begin{pmatrix}e^{-N(g_+(x)-g_-(x))}&x^\alpha e^{N(g_+(x)+g_-(x) -W(x) -\ell)}\\0&e^{N(g_+(x)-g_-(x))}\end{pmatrix}.
\end{equation}
\item[(c)] As $z\to\infty$, $T(z)=I+\bigO(z^{-1})$.
\end{itemize}
Condition (c) follows from the logarithmic behavior of $g$ at infinity, condition (b) is a consequence of (\ref{def T}) and the jump relation (\ref{jump Y}) for $Y$.
It is useful to define $\phi(z)$ as follows:
\begin{eqnarray}
\label{eq:1.09}
\phi(z) = 2 g(z)- W(z)- \ell \ , \qquad z \in \mathbb{C} \setminus (-\infty,b] \ .
\end{eqnarray}
From (\ref{gjump2}), it follows that
\begin{align}
&g_{+}(x) - g_{-}(x) = \phi_{+}(x) \ , & x \in [a,b] \ , \\
&g_{+}(x) - g_{-}(x) = - \phi_{-}(x) \ ,& x \in [a,b] \ .
\end{align}
It turns out that $\phi$ can be evaluated explicitly: we have
\begin{multline}\label{phi}
\phi(z) = \frac{-4}{\sigma} \left(-\frac{(b-a)^{2}}{4} \log{\left[ \sqrt{z-a} + \sqrt{z-b} \right]} \right.\\
\left. + \frac{2z - ( a + b)}{4} \sqrt{z-b}\sqrt{z-a} + \frac{(b-a)^2 \log{(b-a)}}{8}\right) \ .
\end{multline}
Using (\ref{gjump1}) and (\ref{gjump2}), we can write the jump matrix $J_T$ in terms of $\phi$:
\begin{align}
&J_T(x)=\begin{pmatrix}e^{-N\phi_+(x)}&x^\alpha \\0&e^{-N\phi_-(x)}\end{pmatrix},&\mbox{ for $x\in[a,b]$,}\\
&J_T(x)=\begin{pmatrix}1&x^\alpha e^{N\phi(x)}\\0&1\end{pmatrix},&\mbox{ for $x\in(0,a)\cup (b,+\infty)$.}
\end{align}
Note that, although $\phi$ is not continuous on $(0,a)$, $e^{N\phi}$ is, and therefore we can safely write $e^{N\phi}$ instead of $e^{N\phi_\pm}$.

\subsection{Transformation $T\mapsto S$}
The jump matrix for $T$ can be re-written for $x\in(a,b)$ as
\begin{eqnarray}
J_T(x) &=&
\begin{pmatrix}1&0\\ \frac{e^{- N \phi_{-}(x)}}{x^{\alpha}}
&1 \end{pmatrix}
\begin{pmatrix}0&{x^{\alpha}}\\
{-\frac{1}{x^{\alpha}}}&
{0} \end{pmatrix} \begin{pmatrix}{1}&{0}\\{\frac{e^{- N \phi_{+}(x)}}
{x^{\alpha}}
}&{1} \end{pmatrix}\ .
\end{eqnarray}
This leads us to the second transformation, where we open lenses.  Define three regions as follows, see Figure \ref{Jump1}:
\begin{itemize}
\item Region I:  the upper lens region.
\item Region II:  the lower lens region.
\item Region III:  everything else.
\end{itemize}
\begin{figure}[htb!]
    \includegraphics[width=1.2\textwidth]{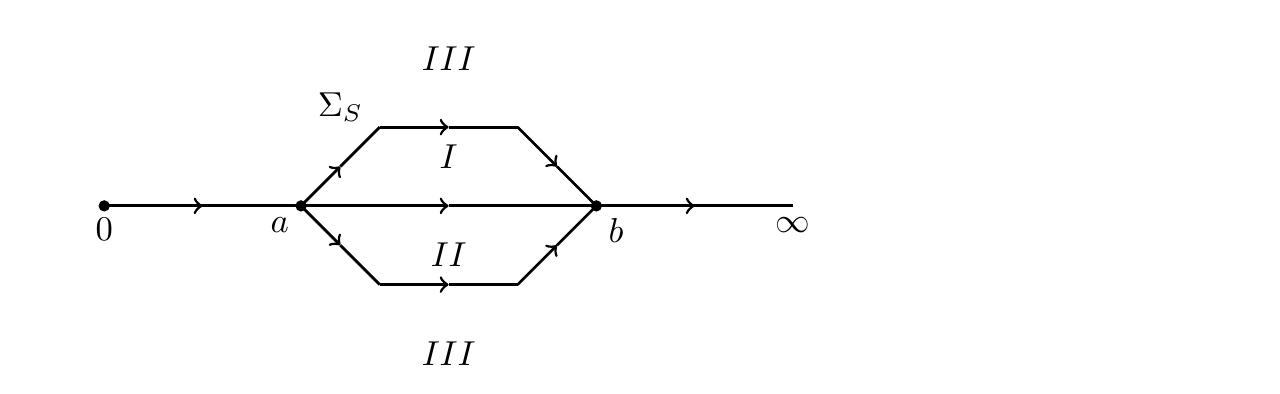}
 \caption{The contour $\Sigma_S$ consists of the two lenses and the interval $(0,\infty)$.}
 \label{Jump1}
\end{figure}
We then define $S$ in the three regions as follows:
\begin{eqnarray} \label{def S}
S(z) = T (z) \times\
\begin{cases}
I & \ z \in \mbox{Region III}, \\
\begin{pmatrix}1&0\\-\frac{e^{- N \phi_{+}(z)}}
{z^{\alpha}}
&1\end{pmatrix} & \ z \in \mbox{Region I}, \\
\begin{pmatrix}1&0\\
\frac{e^{- N \phi_{-}(z)}}
{z^{\alpha}}&1\end{pmatrix} & \ z \in \mbox{Region II}. \\
\end{cases}
\end{eqnarray}
Using the RH conditions satisfied by $T$, we can deduce the RH problem for $S$.
\subsubsection*{RH problem for $S$}
\begin{itemize}
\item[(a)] $S$ is analytic in $\mathbb C\setminus \Sigma_S$, where $\Sigma_S$ is the contour shown in Figure \ref{Jump1}.
\item[(b)] The jump relations for $S$ are
\begin{equation}
S_+(z)=S_-(z)J_S(z),\qquad z\in\Sigma_S,
\end{equation}
where
\begin{align}
&J_S(z)=
\begin{pmatrix}1&0\\
\frac{e^{- N \phi(z)}} {z^{\alpha}}
&1\end{pmatrix}, \ \ \  &z \in \mbox{ lower lens boundary,}
\\
&
J_S(z)=\begin{pmatrix}0&z^{\alpha}\\
-\frac{1}{z^{\alpha}}&
0\end{pmatrix}, &z \in (a,b),\\
& J_S(z)=\begin{pmatrix}1&0\\\frac{e^{- N \phi(z)}}
{z^{\alpha}}
&1\end{pmatrix}, &z \in \mbox{ upper lens boundary},\\
&J_S(z)=J_T(z), &z\in (0,a)\cup(b,+\infty).
\end{align}
\item[(c)] $S(z)=I+\bigO(z^{-1})$ as $z\to\infty$.
\end{itemize}
Using the explicit formula for $\phi$ given in (\ref{phi}), one can verify that the jump matrix for $S$ is exponentially small as $N\to\infty$ on $\Sigma_S\setminus[a,b]$, uniformly for $z$ bounded away from the endpoints $a$ and $b$.
The next critical step is to build approximations to $S$ in 4 regions: a neighborhood of $z=b$, a neighborhood of $z=a$, a neighborhood of $z=0$, and the complement of those three regions. We will call the parametrix in the latter region ``the outer parametrix'' and this one will take care of the jump on $(a,b)$. The local parametrices near $a$ and $b$ will deal with the jumps near $a$ and $b$, where the uniform convergence breaks down. The local parametrix near $0$ finally is needed for technical reasons because $0$ is an endpoint of the jump contour $\Sigma_S$.

\subsection{The outer Parametrix $P^{\infty}(z)$}

We seek for a function $P^{\infty}(z)$ which satisfies the same RH conditions as $S$, but where we ignore the jumps that are small in the large $n$ limit and the jumps in small neighborhoods of $a$ and $b$.
We thus obtain
\subsubsection*{RH problem for $P^{\infty}(z)$}
\begin{itemize}
\item[(a)] $P^{\infty}$ is analytic in $\mathbb C\setminus[a,b]$.
\item[(b)] For $z\in(a,b)$, we have
\begin{equation}
P^{\infty}_+(z)=P^{\infty}_-(z)\begin{pmatrix}0&z^{\alpha}\\
-\frac{1}{z^{\alpha}}&
0\end{pmatrix}.
\end{equation}
\item[(c)] $P^{\infty}(z)=I+\bigO(z^{-1})$ as $z\to\infty$.
\end{itemize}
This RH problem is solved by a function of the form
\begin{eqnarray}
\label{Pinfinity}
P^{\infty}(z) = d^{ \sigma_{3}} P_{0}^{\infty}(z)  D(z)^{ - \sigma_{3}} \ ,
\end{eqnarray}
where
\begin{eqnarray}
P_{0}^{\infty}(z) = \begin{pmatrix}\frac{\gamma(z) + \gamma(z)^{-1}}{2}&
{\frac{\gamma(z) - \gamma(z)^{-1}}{2i}}\\
{\frac{\gamma(z) - \gamma(z)^{-1}}{-2i}}&
{\frac{\gamma(z) + \gamma(z)^{-1}}{2}}\end{pmatrix} \ \ ,
\end{eqnarray}
and
\begin{eqnarray}
\gamma(z)  = \frac{ (z-b)^{1/4}}{(z-a)^{1/4}} \ .
\end{eqnarray}
The quantity $D(z)$ is given by
\begin{eqnarray}
D(z)= \exp{\left\{ \frac{ \alpha\sqrt{ z - b} \sqrt{ z - a} }{2 \pi i} \int_{a}^{b} \frac{ \log{x}}{\sqrt{ x - b} \sqrt{ x - a } } \frac{dx}{x -z} \right\}} ,
\end{eqnarray}
and can be computed explicitly as follows:
\begin{equation}\label{D}
D(z) =  \left(
\frac{\sqrt{b(z-a)} + \sqrt{a (z-b)}}{\sqrt{z-a} + \sqrt{z-b}}
\right)^{\alpha}
.
\end{equation}
The constant $d$ is the large $z$ limit of $D(z)$:
\begin{equation}
\label{eq:2.06}
d
= \left( \frac{\sqrt{b} + \sqrt{a}}{2} \right)^{\alpha} .
\end{equation}

It will be useful to have the expansion of $P^{\infty}$ for $z \to \infty$.  For that, we will also need the asymptotic behavior of $D(z)$ and $P^{\infty}(z)$ as $z\to\infty$: we have
\begin{eqnarray}\label{asympD}
D(z) = d\left( 1 + \frac{d_{1}}{z} + \frac{d_{2}}{z^{2}} + \bigO(z^{-3}) \right),
\end{eqnarray}
and
\begin{equation}
P_{0}^{\infty}(z) = I + \frac{1}{z} \begin{pmatrix}0&\frac{ i ( b - a)}{4}\\
{\frac{-i(b-a)}{4}}& {0}\end{pmatrix} + \frac{1}{z^{2}}
\begin{pmatrix}{\frac{(b-a)^{2}}{32}}&{\frac{i ( b^{2} - a^{2} )}{8}}\\
{\frac{ - i (b^{2} - a^{2} )}{8}}&{\frac{(b-a)^{2}}{32}}\end{pmatrix}
+ \bigO(z^{-3}).
\end{equation}

Hence, after straightforward calculations, we obtain
\begin{equation}\label{asympPinf}
P^{\infty}(z)=I+\frac{P_1(\alpha)}{z}+\frac{P_2(\alpha)}{z^2}+\bigO(z^{-3}),\qquad z\to\infty,
\end{equation}
with
\begin{align}
&\label{P1}P_1(\alpha)=\begin{pmatrix}{-d_{1}}&{\frac{id^2(b-a)}{4}}\\{\frac{-i(b-a)}{4d^2}}&{d_{1}}\end{pmatrix} ,\\
&\label{P2}P_2(\alpha)=\begin{pmatrix}{\frac{(b-a)^{2}}{32}+d_1^2-d_2}&{\frac{id^2 ( b-a) ( a + b + 2 d_{1})}{8}}\\
{\frac{- i ( b-a) ( a + b - 2 d_{1})}{8d^2}}&{\frac{(b-a)^{2}}{32} +  d_2}\end{pmatrix},
\end{align}
where $d$ is defined in (\ref{eq:2.06}) and $d_{1}$ and $d_{2}$ are defined in (\ref{asympD}).
We have
\begin{align}
&\label{d1}
d_1(\alpha)=\frac{\alpha}{4}(\sqrt{b}-\sqrt{a})^2,\\
&\label{d2}d_2(\alpha)=\frac{d_1(\alpha)}{8}\left(3a+2\sqrt{ab}+3b+\alpha(\sqrt{b}-\sqrt{a})^2\right).
\end{align}
Note that the outer parametrix $P^\infty$ depends on $\sigma$ through $a=a_{\sigma}$ and $b=b_{\sigma}$, and so do $D$, the constants $d,d_1,d_2$, and the matrices $P_1$ and $P_2$.

\subsection{The Airy Parametrices}
\label{Airy}
Recall that $a$ and $b$ depend on $r$ and $\sigma$; if $\sigma=s_\pm$ with $s_\pm$ given by (\ref{spm}), $a$ and $b$ will thus depend on $N$. However, for $N$ large, $\sigma$ will be close to $s$, and $a,b$ will be close to $\frac{1}{2}(r\mp 2\sqrt{s})$.
We will construct local parametrices in fixed sufficiently small disks $B_\pm:=B_\delta(\frac{1}{2}(r\pm 2\sqrt{s}))$ centered at $\frac{1}{2}(r\mp 2\sqrt{s})$. For $N$ sufficiently large, those neighborhoods will also contain the points $a$ and $b$. The parametrices will be built out of a well-known Airy model RH problem, which we recall here.
Denote
\begin{eqnarray}
\label{K5.7} \omega := e^{\frac{2 \pi i}{3}},
\end{eqnarray}
and let the complex $\zeta$ plane be divided into $4$ regions as shown in Figure \ref{figure: Airy1}.
\begin{figure}[htb!]\begin{center}
\input{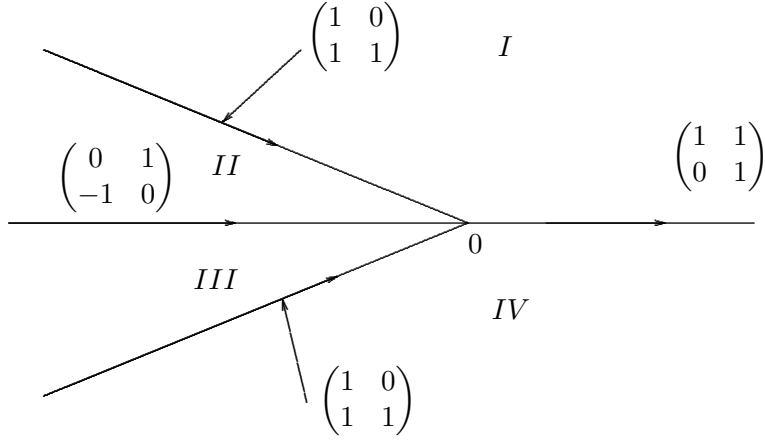}
\caption{\label{figure: Airy1}The jump contour $\Gamma$ for $\Psi$.}\end{center}
\end{figure}
Define
\begin{eqnarray}
\label{K5.8} \Psi: {\mathbb C} \setminus \Gamma
\rightarrow {\mathbb C}^{2 \times 2},
\end{eqnarray}
\begin{eqnarray}
\label{K5.9} & &
\Psi(\zeta) = \left\{
\begin{array}{ll}
\begin{pmatrix} \Ai(\zeta)&\Ai(\omega^2
\zeta)\\\Ai'(\zeta)&\omega^2 \Ai'(\omega^2 \zeta)
\end{pmatrix}
e^{- \frac{ \pi i}{6} \sigma_3}, &  \mbox{ for } \zeta \in I,
\\
\begin{pmatrix}
\Ai(\zeta)&  \Ai(\omega^2 \zeta)\\
\Ai'(\zeta)& \omega^2 \Ai'(\omega^2 \zeta)
\end{pmatrix}
e^{-\frac{\pi i}{6} \sigma_3} \begin{pmatrix} 1&0\\-1&1
\end{pmatrix}
,&  \mbox{ for } \zeta \in II,
\\
\begin{pmatrix}
\Ai(\zeta)&-\omega^2 \Ai(\omega \zeta)\\
\Ai'(\zeta)&-\Ai'(\omega \zeta)
\end{pmatrix}
e^{-\frac{\pi i}{6} \sigma_3} \begin{pmatrix} 1&0\\1&1
\end{pmatrix}
,&  \mbox{ for } \zeta \in III,
\\
\begin{pmatrix}
\Ai(\zeta)& -\omega^2 \Ai(\omega \zeta)\\
\Ai'(\zeta)& -\Ai'(\omega \zeta)
\end{pmatrix}
e^{-\frac{\pi i}{6} \sigma_3}, &  \mbox{ for } \zeta \in IV.
\end{array}
\right.
\end{eqnarray}

Then $\Psi$ satisfies the following model RH problem.
\subsubsection*{RH problem for $\Psi$}
\begin{itemize}
\item[(a)] $\Psi$ is analytic in the complex plane, except on the jump contour $\Gamma$ shown in Figure \ref{figure: Airy1}.
\item[(b)] For $\zeta$ on one of the four rays in the jump contour, we have $\Psi_+(\zeta)=\Psi_-(\zeta)J(\zeta)$, with $J(\zeta)$ the piecewise constant matrix indicated in Figure \ref{figure: Airy1}.
\item[(b)] As $\zeta\to\infty$, we have
\begin{eqnarray}
\Psi(\zeta) e^{ \frac{2}{3}\zeta^{3/2} \sigma_{3}} = \frac{ e^{ \frac{\pi i}{12}}}{ 2 \sqrt{\pi}}
\begin{pmatrix}{\zeta^{-1/4}}&{0}\\{0}&{\zeta^{1/4}}\end{pmatrix} \sum_{k=0}^{\infty} \begin{pmatrix}{(-1)^{k}s_{k}}&{s_{k}}\\{-(-1)^{k}r_{k}}&{r_{k}}\end{pmatrix} \ e^{ - \frac{\pi i}{4} \sigma_{3}} \left( \frac{2}{3} \zeta^{3/2} \right)^{-k} \ ,\ \
\end{eqnarray}
where
\begin{eqnarray}
\label{sr}
s_{k} = \frac{ \Gamma(3k + 1/2)}{54^{k}k! \Gamma(k+1/2)}, \ \ \ \ r_{k} = - \frac{ 6 k + 1 }{ 6 k - 1 } s_{k} \ \ \ \ \mbox{ for } k \ge 1 \ .
\end{eqnarray}
For future reference:
\begin{eqnarray}
&&s_{1} = \frac{5}{72}, \ \ r_{1} =-  \frac{7}{72}, \\
&& s_{2} = \frac{ 385}{10368}, \ \ r_{2} = - \frac{454}{10368}  \ .
\end{eqnarray}
\end{itemize}

The following alternative version of the asymptotic behavior of $\Psi$ as $\zeta\to\infty$ will be useful,
\begin{eqnarray}
\hspace{-0.3in}&&\Psi(\zeta) = \zeta^{- \frac{\sigma_{3}}{4}} \frac{e^{\frac{i \pi}{12}}}{2\sqrt{\pi}} \begin{pmatrix}{1}&{1}\\{-1}&{1}\end{pmatrix} \nonumber\\
\hspace{-0.3in}&& \times \left[ I + \sum_{k=1}^{\infty} \frac{1}{2} \left( \frac{2 }{3} \zeta^{3/2} \right)^{-k} \begin{pmatrix}{(-1)^{k}\left( s_{k} + r_{k} \right)}&{s_{k}-r_{k}}\\{(-1)^{k}(s_{k}-r_{k})}&{s_{k} + r_{k}}\end{pmatrix} \right] e^{ - \frac{ \pi i}{4} \sigma_{3} } e^{ - \frac{2}{3} \zeta^{3/2} \sigma_{3}}  \\
\hspace{-0.3in} &&=\frac{e^{ \frac{i \pi}{12}} \zeta^{-\frac{\sigma_{3}}{4}}}{2\sqrt{\pi}} \pmtwo{1}{1}{-1}{1} \nonumber\\
\hspace{-0.3in} && \times \left[
I + \frac{1}{8 \zeta^{3/2}} \pmtwo{\frac{1}{6}}{1}{-1}{-\frac{1}{6}} + \frac{ 35}{384\zeta^{3}} \pmtwo{\frac{1}{12}}{1}{-1}{-\frac{1}{12}} + \cdots
\right] e^{ - \frac{ \pi i}{4} \sigma_{3} } e^{ - \frac{2}{3} \zeta^{3/2} \sigma_{3}}  \ .
\end{eqnarray}

We define the function
\begin{eqnarray}
\label{eq:3.12}
f_b(z) = \left( \frac{ 3}{4} \right)^{2/3} \left( - \phi(z) \right)^{2/3} \ ,\qquad z\in B_+,
\end{eqnarray}
where $\phi$ is defined in (\ref{eq:1.09}). By (\ref{phi}) we obtain that $\phi$ behaves like
\begin{equation}
\label{phib}
\phi(z)\sim -\frac{8\sqrt{b-a}}{3\sigma} (z-b)^{3/2},\qquad z\to b,
\end{equation}
so that we can take $f_b$ to be analytic near $b$, and particularly in the disk $B_+$. We then have
\begin{equation}
\label{Phibl1b} f_b(z) = \left(
   \frac{2\sqrt{b-a}}{\sigma}\right)^{2/3}(z - b)\left(1 +\frac{z-b}{5(b-a)}+ \bigO\left(\left( z - b
\right)^2 \right)\right), \qquad z \to b. \
\end{equation}

The next auxiliary quantity which we define is the analytic matrix-valued function $E_{b}(z)$ for $z\in B_+$, defined as follows,
\begin{eqnarray}
\label{2.018}
 & &  E_{b}(z) =  \sqrt{\pi} e^{\frac{-i \pi}{12}}
 P^{(\infty)}(z)  z^{\alpha \sigma_{3} / 2}e^{ \frac{\pi i}{4} \sigma_{3}} \pmtwo{1}{-1}{1}{1}N^{\frac{\sigma_3}{6}}f_b(z)^{\sigma_{3}/4} \ .
 \end{eqnarray}
This function is analytic near $b$, we can see this for example by observing that it has no jump across the interval $(b-\delta, b)$:
 \begin{eqnarray} \nonumber
&&\left(E_{b} \right)_{-}^{-1} \left(E_{b} \right)_{+} = \left( f_b^{-\sigma_{3}/4} \right)_{-}\frac{1}{2} \pmtwo{1}{1}{-1}{1} e^{-\frac{\pi i}{4} \sigma_{3}}z^{- \alpha \sigma_{3}/2}  \times \\
\nonumber
&& \times \left( P^{\infty} \right)_{-}^{-1} \left( P^{\infty} \right)_{+} z^{ \alpha \sigma_{3}/2} e^{ \frac{ \pi i}{4} \sigma_{3}} \pmtwo{1}{-1}{1}{1} \left( f_b^{\sigma_{3}/4} \right)_{+} \\
\nonumber
&& = \left( f_b^{-\sigma_{3}/4} \right)_{-}\frac{1}{2} \pmtwo{1}{1}{-1}{1} e^{-\frac{\pi i}{4} \sigma_{3}} \pmtwo{0}{1}{-1}{0} e^{ \frac{ \pi i}{4} \sigma_{3}} \pmtwo{1}{-1}{1}{1} \left( f_b^{\sigma_{3}/4} \right)_{+} \nonumber\\
&& = \left( f_b^{-\sigma_{3}/4} \right)_{-}\frac{1}{2} \pmtwo{1}{1}{-1}{1} \pmtwo{0}{-i}{-i}{0}  \pmtwo{1}{-1}{1}{1} \left( f_b^{\sigma_{3}/4} \right)_{+} \nonumber \\
&& =  \left( f_b^{-\sigma_{3}/4} \right)_{-}
\pmtwo{-i}{0}{0}{i}
 \left( f_b^{\sigma_{3}/4} \right)_{+}  \ = \ \pmtwo{1}{0}{0}{1} \ .\nonumber
\end{eqnarray}
Within the neighborhood $B_+$ of $b$, we define
\begin{eqnarray}
\label{II.014} P^{(b)}(z) = E_{b}(z) \
\Psi(N^{2/3}f_b(z) ) \ e^{2N \sigma_{3}
f_b^{3/2}(z)/3 }z^{ - \frac{\alpha}{2} \sigma_{3}} \ , \ \ \mbox{ for } z \in B_+ \ .
\end{eqnarray}
The local parametrix $P^{(b)}$ then satisfies the following conditions.
\subsubsection*{RH problem for $P^{(b)}$}
\begin{itemize}
\item[(a)] $P^{(b)}$ is analytic in $\overline{B_+}\setminus \Sigma_S$,
\item[(b)] $P_+^{(b)}(z)=P_-^{(b)}(z)J_S(z)$ for $z\in B_+\cap\Sigma_S$,
\item[(c)] for $z\in\partial B_+$, we have
\begin{multline}\label{matching Pb}
P^{(b)}(z)=P^{\infty}(z)
 z^{\alpha \sigma_{3} / 2}e^{ \frac{\pi i}{4} \sigma_{3}}   \\
  \times
\left[ I + \sum_{k=1}^{\infty} \frac{1}{2} \left( \frac{2N }{3}f_b(z) ^{3/2} \right)^{-k} \pmtwo{(-1)^{k}\left( s_{k} + r_{k} \right)}{s_{k}-r_{k}}{(-1)^{k}(s_{k}-r_{k})}{s_{k} + r_{k}} \right]  \hspace{0.2in} \\
\times e^{-\frac{\pi i}{4} \sigma_{3}} z^{- \frac{\alpha}{2} \sigma_{3}},
\end{multline}
as $N\to\infty$.
\end{itemize}

\vskip 0.2in

The construction of the parametrix in a neighborhood of the point $a$ is similar.  First, we define the
transformation $f_a(z)$ for $z\in B_-$, analogous to $f_b$
defined in (\ref{eq:3.12}) above,
\begin{eqnarray}
\label{Phial1a} & & f_a(z) \equiv \left(\frac{3
}{4}\right)^{2/3} \left(
   \phi(a)
- \phi(z)   \right)^{2/3}, \ \ \ z \in B_{\delta}(a_s).  
\end{eqnarray}
As $z\to a$, by (\ref{phi})
we may choose the branch so that
$f_a(z) $ is analytic in a neighborhood of $a$, and
\begin{equation}
\label{Phial1b} f_a(z) = -\left(2
   \frac{\sqrt{b-a}}{\sigma}\right)^{2/3}(z - a)\left(1-\frac{z-a}{5(b-a)} + \bigO\left(\left( z - a
\right)^2 \right)\right), \qquad z \to a. \
\end{equation}
Next we define $E_{a}(z)$ as follows:
\begin{eqnarray}
& &  E_{a}(z) = e^{\frac{- \pi i }{12}}\sqrt{\pi}P^{\infty}(z)
\sigma_{3}  z^{ \frac{\alpha}{2} \sigma_{3}} e^{ \frac{\pi i}{4} \sigma_{3}} \pmtwo{1}{-1}{1}{1} N^{\frac{\sigma_3}{6}}f_a(z)^{\sigma_{3}/4}.
\end{eqnarray}
We now define the approximation $P^{(a)}$ within
$B_-$:
\begin{eqnarray}
\label{II.018} P^{(a)}(z) = E_{a}(z) \ \Psi(N^{2/3}f_a(z)) \ e^{2 n\sigma_{3} f_a^{3/2}(z) /3}
z^{ - \frac{\alpha}{2} \sigma_{3}}
 \sigma_{3},
\ \ \mbox{ for } z \in B_-.
\end{eqnarray}
We then have
\subsubsection*{RH problem for $P^{(a)}$}
\begin{itemize}
\item[(a)] $P^{(a)}$ is analytic in $\overline{B_-}\setminus \Sigma_S$,
\item[(b)] $P_+^{(a)}(z)=P_-^{(a)}(z)J_S(z)$ for $z\in B_-\cap\Sigma_S$,
\item[(c)] for $z\in\partial B_-$, we have
\begin{multline}\label{matching Pa}
P^{(a)}(z)=P^{\infty}(z)
 z^{\alpha \sigma_{3} / 2}e^{ \frac{\pi i}{4} \sigma_{3}}   \\
  \times
\left[ I + \sum_{k=1}^{\infty} \frac{1}{2} \left( \frac{2N }{3}f_a(z)^{3/2} \right)^{-k} \pmtwo{(-1)^{k}\left( s_{k} + r_{k} \right)}{-(s_{k}-r_{k})}{-(-1)^{k}(s_{k}-r_{k})}{s_{k} + r_{k}} \right] \times \hspace{0.2in} \\
\times e^{-\frac{\pi i}{4} \sigma_{3}} z^{- \frac{\alpha}{2} \sigma_{3}}
\end{multline}
as $N\to\infty$.
\end{itemize}

\subsection{The parametrix near $0$}

In a fixed neighborhood $B_0$ of $0$, we would like to construct a local parametrix $P^{(0)}$ which satisfies the following RH conditions.

\subsubsection*{RH problem for $P^{(0)}$}
\begin{itemize}
\item[(a)] $P^{(0)}$ is analytic in $\overline{B_0}\setminus \Sigma_S$,
\item[(b)] $P_+^{(0)}(z)=P_-^{(0)}(z)J_S(z)$ for $z\in B_0\cap\Sigma_S$,
\item[(c)] for $z\in \partial B_0$, we have
\begin{equation}\label{matching P0}
P^{(0)}(z)=P^{\infty}(z)\left(I+\bigO(e^{-cN})\right),\qquad N\to\infty.
\end{equation}
\end{itemize}
The jump relation should in other words be
\begin{eqnarray}
P^{(0)}_{+}(x) = P^{(0)}_{-}(x) \pmtwo{1}{x^{\alpha}  e^{ N \phi(x)}}{0}{1} \ \qquad 0<x<a.
\end{eqnarray}
A solution to this problem is given by
\begin{eqnarray}\label{Pinf}
P^{(0)}(z) = P^{\infty}(z) \ \pmtwo{1}{\frac{(z^{\alpha} - 1 )}{1 - e^{ 2 \pi i \alpha}} e^{  N \phi(z)}}{0}{1} \ ,
\end{eqnarray}
where the function $z^{\alpha}$ is chosen with branch cut on the {\it positive} half-line, $z^{\alpha} = |z|^{\alpha} e^{i \alpha \theta}$, $0 \le \theta < 2 \pi$. It is important to note that as $\alpha\to 0$, the limit of $P^{(0)}(z)$ exists for $z\neq 0$, and that it is given by
\begin{eqnarray}
P^{(0)}(z) = P^{\infty}(z) \ \pmtwo{1}{-\frac{\log z}{2\pi i} e^{  N \phi(z)}}{0}{1} \ .
\end{eqnarray}

\subsection{RH problem for the error, $R$}\label{section: R}
We now define $R$ in four regions of the plane, using our approximations to $S$.  Set
\begin{equation}\label{def R}
R(z) =\begin{cases} S(z) \left( P^{(0)}(z) \right)^{-1} ,& z \in B_0\ ,\\
 S(z) \left( P^{(a)}(z) \right)^{-1},& z \in B_- \ ,\\
 S(z) \left( P^{(b)}(z) \right)^{-1},& z \in B_+ \ ,\\
 S(z) \left( P^{\infty}(z) \right)^{-1},& \mbox{ everywhere else.}
\end{cases}
\end{equation}
The quantity $R$ is piecewise analytic, with jumps across the contours $\Sigma_{j}$, $j=1, \ldots, 7$, as shown in Figure \ref{figure: R}.

\subsubsection*{RH problem for $R$}
\begin{itemize}
\item[(a)] $R$ is analytic in $\mathbb C\setminus \cup_{j=1}^7\Sigma_j$,
\item[(b)] For $z\in\cup_{j=1}^7\Sigma_j$, we have $R_+(z)=R_-(z)J_R(z)$, with 
\begin{align}
&J_R(z)=P^{\infty}(z)J_S(z)\left(P^{\infty}(z)\right)^{-1},&z\in\Sigma_1\cup\Sigma_2\cup\Sigma_3\cup\Sigma_4,\\
&J_R(z)=P^{(0)}(z)\left(P^{\infty}(z)\right)^{-1},&z\in\Sigma_5,\\
&J_R(z)=P^{(a)}(z)\left(P^{\infty}(z)\right)^{-1},&z\in\Sigma_6,\\
&J_R(z)=P^{(b)}(z)\left(P^{\infty}(z)\right)^{-1},&z\in\Sigma_7.
\end{align}
\item[(c)] As $z\to\infty$, we have
\begin{equation}
R(z)=I+\bigO(z^{-1}).
\end{equation}
\end{itemize}
\begin{figure}[htb!]\begin{center}
\input{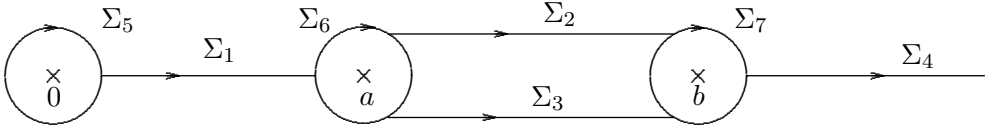}
\caption{\label{figure: R}The jump contour for $R$.}
\end{center}\end{figure}

The jump matrices across the contours $\Sigma_{1}$, $\Sigma_{2}$, $\Sigma_{3}$, and $\Sigma_{4}$ are all exponentially close to $I$ for large $N$ because $J_S$ is exponentially small and $P^{\infty}$ and its inverse are bounded on the contour, which does not contain the points $a$ and $b$.  The jump across $\Sigma_{5}=\partial B_0$ is also shown to be exponentially small by (\ref{matching P0}), since $e^{ N\phi(z)}$ is exponentially small for $z$ near $0$.
The only jumps that are not exponentially small are the ones on $\Sigma_6=\partial B_-$ and $\Sigma_7=\partial B_+$.
There we have for any $K$, by (\ref{matching Pa}) and (\ref{matching Pb}),
\begin{align}\label{asymp jump}
&J_R(z)=I+\sum_{k=1}^{K}J^{(k)}(z,\alpha)N^{-k}+\bigO(N^{-K-1}),\qquad N\to\infty.
\end{align}
By (\ref{matching Pa}) and (\ref{matching Pb}), we have
\begin{multline}\label{Jka}
J^{(k)}(z,\alpha)=\frac{1}{2} \left( \frac{2 }{3}f_a(z) ^{3/2} \right)^{-k}
\\
\times
 P^{\infty}(z)\pmtwo{(-1)^{k}\left( s_{k} + t_{k} \right)}{-i z^{\alpha}(s_{k}-r_{k})}{(-1)^{k}i z^{-\alpha}(s_{k}-r_{k})}{s_{k} + r_{k}} \left(P^{\infty}(z)\right)^{-1},\end{multline}
 for $z\in\Sigma_6$, and
\begin{multline}\label{Jkb}J^{(k)}(z,\alpha)=\frac{1}{2} \left( \frac{2 }{3}f_b(z) ^{3/2} \right)^{-k}
   \\
  \times
 P^{\infty}(z)\pmtwo{(-1)^{k}\left( s_{k} + r_{k} \right)}{i z^{\alpha}(s_{k}-r_{k})}{-i(-1)^{k} z^{-\alpha}(s_{k}-r_{k})}{s_{k} + r_{k}} \left(P^{\infty}(z)\right)^{-1},
\end{multline}
for $z\in\Sigma_7$.
Note that $J^{(1)}$ is a meromorphic function in $B_-\cup B_+$ with double poles at $a$ and $b$.

Using standard estimates, it can then be proved that
\begin{equation}\label{asymp R}
R(z)=I+\sum_{k=1}^{K}R^{(k)}(z,\alpha)N^{-k}+\bigO(N^{-K-1}),\qquad N\to\infty,
\end{equation}
uniformly for $z$ off the jump contour, where $R^{(k)}$ is analytic except on $\Sigma_6\cup\Sigma_7$.
Substituting (\ref{asymp jump}) and (\ref{asymp R}) into the jump relation $R_+=R_-J_R$ leads us to the following jump relation for $R^{(1)}$:
\begin{equation}
R_+^{(1)}(z,\alpha)-R_-^{(1)}(z,\alpha)=J^{(1)}(z,\alpha),\qquad \mbox{ for $z\in\Sigma_6\cup\Sigma_7$.}
\end{equation}
In addition, since $R\to I$ at infinity, $R^{(1)}(z,\alpha)\to 0$ as $z\to\infty$. These conditions determine $R^{(1)}$ uniquely, and we obtain
\begin{align}
&\label{R1}R^{(1)}(z)=Q(z,\alpha),& z\in\mathbb{C}\setminus \left(\overline{B_-}\cup \overline{B_+}\right),\\
&
R^{(1)}(z)=-J^{(1)}(z,\alpha)+Q(z,\alpha),& z\in B_-,
\\
&R^{(1)}(z)=-J^{(1)}(z,\alpha)+Q(z,\alpha),& z\in B_+,
\end{align}
where
\begin{multline}
Q(z,\alpha)=\dfrac{1}{z-a}\res[z=a]J^{(1)}(z,\alpha)+\dfrac{1}{(z-a)^2}\res[z=a]\left[(z-a)J^{(1)}(z,\alpha)\right]\\
+\dfrac{1}{z-b}\res[z=b]J^{(1)}(z,\alpha)
+\dfrac{1}{(z-b)^2}\res[z=b]\left[(z-b)J^{(1)}(z,\alpha)\right].
\end{multline}

As $z\to\infty$, we have an asymptotic expansion of the form
\begin{equation}\label{asymp Rz}
R(z)=I+\sum_{k=1}^{K}R_k(N,\alpha)z^{-k}+\bigO(z^{-K-1}),\qquad z\to\infty.
\end{equation}
From the asymptotics (\ref{asymp R}) for $R$ and the formula (\ref{R1}) for $R^{(1)}$, we obtain
\begin{align}
&\label{R1as}R_1(N,\alpha)=\frac{1}{N}\left(\res[z=a]J^{(1)}(z,\alpha)+\res[z=b]J^{(1)}(z,\alpha)\right)+\bigO(N^{-2}),\\
&R_2(N,\alpha)=\frac{1}{N}\left(\res[z=a]\left[(z-a)J^{(1)}(z,\alpha)\right]+\res[z=b]\left[(z-b)J^{(1)}(z,\alpha)\right]\right)\nonumber \\
&\quad +\frac{1}{N}\left(a\res[z=a]J^{(1)}(z,\alpha)+b\res[z=b]J^{(1)}(z,\alpha)\right) +\bigO(N^{-2}),
\end{align}
as $N\to\infty$.
Since the jump matrix $J_R$ depends analytically on the parameters $s>0$ and $r>2\sqrt{s}$, one can deduce that the functions $R_1$ and $R_2$, and all of the coefficients in their large $N$ asymptotic expansions, depend analytically on $s$ and $r$.

After a straightforward but long calculation, one obtains, writing $c_1=\sqrt{a}-\sqrt{b}$ and $d$ as in (\ref{eq:2.06}),
\begin{align}
&R_{1,11}(N,\alpha)=-R_{1,22}(N,\alpha)\\&\qquad\qquad=\left(\frac{\sigma}{32(b-a)}-\frac{\alpha^2\sigma}{16b(b-a)}(\sqrt{a}-\sqrt{b})^2\right)\frac{1}{N}+\bigO(N^{-2})\label{R111as},\\
&R_{1,12}(N,\alpha)\\&\qquad\qquad=\frac{id(\alpha)^2\sigma}{48b(b-a)} (3\alpha^2(\sqrt{a}-\sqrt{b})^2-6\alpha \sqrt{ab}+2b+6\alpha b)\frac{1}{N}+\bigO(N^{-2}),\\
&R_{1,21}(N,\alpha) =\frac{i\sigma}{48b(b-a)d(\alpha)^{2}} \left(3\alpha^2(\sqrt{a}-\sqrt{b})^2+6\alpha\sqrt{ab}+2b-6\alpha b\right)\frac{1}{N}\\&\hspace{10cm}+\bigO(N^{-2}),\label{R121as}\\
&R_{2,11}(N,\alpha)=\frac{\sigma}{32}\left(-\frac{5}{6}+\frac{-2\alpha^2(\sqrt{a}-\sqrt{b})^2+b}{b-a}\right)\frac{1}{N}+\bigO(N^{-2}),\label{R211as}
\end{align}
as $N\to\infty$.

\section{Proof of Theorem~\ref{theorem 1}}\label{section: as G}

We will now use the large $N$ asymptotics obtained from the RH analysis to obtain asymptotics for the function $\mathcal G_{N}(\alpha;r,\sigma)$ in (\ref{GN}).

Let us first consider the quantities $\tilde Y(z)$ for $z\in B_0$, which we need to compute the last two terms in (\ref{GN}).
From (\ref{def tilde Y}), (\ref{def T}), (\ref{def S}), (\ref{def R}), and (\ref{Pinf}), it follows that
\begin{eqnarray*}
\tilde Y(z)&=&e^{N\ell\sigma_3/2}R(z)P^{(0)}(z)e^{N(g(z)-\ell/2)\sigma_3}\begin{pmatrix}1&-\frac{w(z;\alpha)}{1-e^{2\pi i\alpha}}\\0&1\end{pmatrix}\\
&=&e^{N\ell\sigma_3/2}R(z)P^{\infty}(z) \ \pmtwo{1}{\frac{(z^{\alpha} - 1 )}{1 - e^{ 2 \pi i \alpha}} e^{  N \phi(z)}}{0}{1}e^{N(g(z)-\ell/2)\sigma_3}\begin{pmatrix}1&-\frac{w(z;\alpha)}{1-e^{2\pi i\alpha}}\\0&1\end{pmatrix}.
\end{eqnarray*}
By (\ref{eq:1.09}), we have
\begin{equation}
\tilde Y(z)=e^{N\ell\sigma_3/2}R(z)P^{\infty}(z) \ \pmtwo{1}{\frac{-1}{1 - e^{ 2 \pi i \alpha}} e^{  N \phi(z)}}{0}{1}e^{N(g(z)-\ell/2)\sigma_3},
\end{equation}
and in the large $N$ limit
\begin{align}
&\label{Y120}\tilde Y_{12}(0)=e^{-Ng(0)}e^{N\ell}(1+\bigO(N^{-1}))P_{12}^{\infty}(0),\\
&\label{Y220}\tilde Y_{22}(0)=e^{-Ng(0)}(1+\bigO(N^{-1}))P_{22}^{\infty}(0).
\end{align}
For the first column of $Y$ near $0$, we have
\begin{align}
&\label{Y110}Y_{11}(z)=(1+\bigO(N^{-1}))P_{11}^{\infty}(z)e^{Ng(z)},\\
&\label{Y210}Y_{21}(z)=(1+\bigO(N^{-1}))P_{21}^{\infty}(z)e^{Ng(z)}e^{-N\ell}.
\end{align}
Let us now compute the moments at infinity $Y_1$ and $Y_2$ defined by (\ref{Yinfty}).
For $z$ outside the lenses, we have
\begin{eqnarray}
Y(z) &=& e^{ N \ell \sigma_{3} / 2} S(z) e^{ N ( g(z) - \ell/2)\sigma_{3}}  \nonumber\\
&=& e^{ N \ell \sigma_{3} / 2} R(z)  P^{\infty}(z) e^{ N ( g(z) - \ell/2)\sigma_{3}}.
\end{eqnarray}
By (\ref{egasymp}), (\ref{asympPinf}), and (\ref{asymp Rz}), it follows that
\begin{equation}\label{Y1}
Y_{1}(\alpha) = e^{ N \ell \sigma_{3}/2} \left[
P_{1}(\alpha) + G_{1} + R_{1}(\alpha)
\right]e^{- N \ell \sigma_{3}/2},
\end{equation}
and
\begin{multline}\label{Y2}
Y_{2}(\alpha) = e^{ N \ell \sigma_{3}/2} \left[
P_{2}(\alpha) + G_{2} + R_{2}(\alpha)+(P_1(\alpha)+R_1(\alpha))G_1 +R_1(\alpha)P_1(\alpha)
\right]\\ \times e^{- N \ell \sigma_{3}/2}.
\end{multline}
Entry-wise, we obtain as $N\to\infty$, by (\ref{R111as})-(\ref{R211as}),
\begin{align}
&\label{Y111}Y_{1,11}(\alpha)=-Nm_1-d_1+\frac{(b-2\alpha^2(\sqrt{a}-\sqrt{b})^2)\sigma}{32b(b-a)N}+\bigO(N^{-2}),\\
&e^{-N\ell}Y_{1,12}(\alpha)=id^2\frac{b-a}{4}\\&\qquad+\frac{i d^2\sigma}{48ab(b-a)N}(3\alpha^2 (\sqrt{a}-\sqrt{b})^2-6\alpha\sqrt{ab}+2b+6\alpha b)+\bigO(N^{-2}),\\
&e^{N\ell}Y_{1,21}(\alpha)=-i\frac{b-a}{4d^2}\\&\qquad+\frac{i\sigma}{48b(b-a)d^2}(3\alpha^2(\sqrt{a}-\sqrt{b})^2+6\alpha\sqrt{ab}+2b-6\alpha b)+\bigO(N^{-2}),\\
&Y_{1,22}(\alpha)=-Y_{1,11}(\alpha).
\end{align}
For $Y_{2,11}$, we have
\begin{multline}
Y_{2,11}(\alpha)=\frac{m_1^2}{2}N^2-\left(\frac{m_2}{2}-m_1d_1\right)N+\frac{(b-a)^2}{32}+d_1^2-d_2\\
\qquad-\frac{m_1\sigma(-2\alpha^2(\sqrt{a}-\sqrt{b})^2+b)}{32b(b-a)}+\bigO(N^{-1}).\label{Y222}
\end{multline}
Recall from (\ref{GN}) that \begin{equation}
\mathcal G_{N}(\alpha)=T_1(\alpha)+T_2(\alpha)+T_3(\alpha)+T_4(\alpha)+T_5(\alpha),
\end{equation}
with
\begin{align}
&\label{T1}T_1(\alpha)=\frac{\alpha}{2}\frac{ Y_{1,12}'}{Y_{1,12}},\\
&T_2(\alpha)=2\frac{N}{\sigma}\left(2Y_{2,11}'-rY_{1,11}'+(\det Y_1)'\right),\\
&T_3(\alpha)=\frac{N}{2}\left(\frac{Y_{1,12}'}{Y_{1,12}}-\frac{Y_{1,21}'}{Y_{1,21}}\right),\\
&T_4(\alpha)=\sqrt{Y_{1,12}Y_{1,21}} \left(\frac{Y_{11}(0;\alpha)}{\sqrt{-iY_{1,12}}} \right)'\dfrac{\alpha}{\sqrt{iY_{1,21}}}\tilde Y_{22}(0;\alpha),\\
&T_5(\alpha)=-\sqrt{Y_{1,12}Y_{1,21}} \left(\frac{Y_{21}(0;\alpha)}{\sqrt{ i Y_{1,21}}}\right)'\frac{\alpha}{\sqrt{- iY_{1,12}}}\tilde Y_{12}(0;\alpha).\label{T5}
\end{align}
We can now substitute (\ref{Y120})-(\ref{Y210}) and (\ref{Y111})-(\ref{Y222}) into (\ref{T1})-(\ref{T5}) to obtain asymptotics for $T_1(\alpha),\ldots, T_5(\alpha)$. Their expressions can be computed further by the formulas (\ref{R1as})-(\ref{R121as}). We used Mathematica for this lengthy but straightforward calculation and obtain
\begin{align*}
&T_1=\alpha\frac{d'(\alpha)}{d(\alpha)}
+\bigO(N^{-1}),
\\
&T_2=\left(rd_1'(\alpha)+2d_1(\alpha)d_1'(\alpha)-2d_2'(\alpha)
\right)
\frac{2N}{\sigma}-\frac{\alpha(\sqrt{a}-\sqrt{b})^2}{8b}+\bigO(N^{-1}),\\&
T_3=2\frac{d'(\alpha)}{d(\alpha)}N
+\frac{\alpha\sigma}{2b(\sqrt{a}+\sqrt{b})^2}+\bigO(N^{-1}),\\
&T_4=\frac{\alpha}{4}\left(\gamma+\gamma^{-1}\right)^2\left(\frac{d'(\alpha)}{d(\alpha)}-\frac{1}{2}\log(ab)\right) +\bigO(N^{-1}),\qquad \gamma=b^{1/4}a^{-1/4},\\
&T_5=-\frac{\alpha}{4}\left(\gamma-\gamma^{-1}\right)^2\left(\frac{d'(\alpha)}{d(\alpha)}-\frac{1}{2}\log(ab)\right) +\bigO(N^{-1}).
\end{align*}
Substituting (\ref{eq:2.06}), (\ref{d1}), (\ref{d2}), and $r=a+b$ (see (\ref{a b})),
and summing up the expressions for $T_1,\ldots, T_5$, we obtain
\begin{equation}
\mathcal G_{N}(\alpha;r,\sigma)=\mathcal G^{(0)}(r,\sigma)N
+\alpha \mathcal G^{(1)}(r,\sigma)+\bigO(N^{-1}),
\end{equation}
where
\begin{align}
&\label{G0}\mathcal G^{(0)}(r,\sigma)=\frac{(\sqrt{b}-\sqrt{a})^2}{2(\sqrt{b}+\sqrt{a})^2}+2\log\left(\frac{\sqrt{a}+\sqrt{b}}{2}\right),\\
&\mathcal G^{(1)}(r,\sigma)=
2\log\frac{\sqrt{a}+\sqrt{b}}{2(ab)^{1/4}}.\label{G1}
\end{align}

It should be noted that all of the error terms we used above can be expanded in asymptotic series in negative integer powers of $N$, and this can be used to obtain a full asymptotic expansion for $\mathcal G_{N}(\alpha;r,\sigma)$:
\begin{equation}\label{GNM}
\mathcal G_{N}(\alpha;r,\sigma)=\mathcal G^{(0)}(r,\sigma)N
+\alpha \mathcal G^{(1)}(r,\sigma)
+\sum_{j=1}^k f^{(j)}(\alpha;r,\sigma)N^{-j}+  \bigO(N^{-k-1}),
\end{equation}
for any $k\in\mathbb N$. The constants $f^{(j)}(\alpha;r,\sigma)$ may depend on $\alpha$ and are real analytic functions in $r,\sigma$.

\medskip

Let us now compute the asymptotics for $\log Z_{2N}$ using (\ref{partition rec6}).
Since the $\bigO(N)$-term in the expansion of $\mathcal G_{N}$ is independent of $\alpha$, the leading order terms of the two integrals in (\ref{partition rec6}) cancel out against each other, and we obtain
\begin{equation}
\log \left[\frac{1}{(2N)!}Z_{2N}\right]=2\log \left[\frac{1}{N!}Z_{N,s}^{GUE}\right]+\frac{N^2r^2}{s}+\frac{1}{4}\mathcal G^{(1)}(r,s)+ \bigO(N^{-1}),
\end{equation}
as $N\to\infty$.
Replacing $2N$ by $N$, we get the following large $N$ asymptotics for $N$ even:
\begin{equation}
\log \left[\frac{1}{N!}Z_N\right]=2\log \left[\frac{1}{(N/2)!}Z_{N/2,s}^{GUE}\right]+\frac{N^2r^2}{4s}+\frac{1}{4}\mathcal G^{(1)}(r,s)+ \bigO(N^{-1}).
\end{equation}
In order to have a more compact formula, we substract $2\log \left[\frac{1}{(N/2)!}Z_{N/2,\sigma^*}^{GUE}\right]$, defined in (\ref{ZGUE-intro}), at both sides of this formula, with $\sigma^*=4e^{3/2}$, and obtain by (\ref{ZGUE-intro}),
\begin{eqnarray}
\log \left[\frac{(N/2)!^2 Z_N}{N!\left(Z_{N/2,4e^{3/2}}^{GUE}\right)^2}\right]&=&2\log \left[\frac{Z_{N/2,s}^{GUE}}{Z_{N/2,4e^{3/2}}^{GUE}}\right]+\frac{N^2r^2}{4s}+\frac{1}{4}\mathcal G^{(1)}(r,s)+ \bigO(N^{-1})\nonumber\\
&=&-N^2 F_0+\frac{1}{2}\log\frac{\sqrt{a}+\sqrt{b}}{2(ab)^{1/4}}+ \bigO(N^{-1}),\label{asZNeven1}
\end{eqnarray}
where 
\begin{equation}
F_0=-\frac{1}{4}\log\frac{s}{4}+\frac{3}{8}-\frac{r^2}{4s}.
\end{equation}
Note that the choice of $\sigma^*=4e^{3/2}$ in the denominator at the left hand side of (\ref{asZNeven1}) is such that 
\[\lim_{N\to\infty}\frac{1}{N^2}\log\left[\frac{1}{(N/2)!}Z_{N/2,4e^{3/2}}^{GUE}\right]=0,\] by (\ref{expansion GUE1}), so that $F_0$ is the free energy defined by (\ref{energy}).
The error term $\bigO(N^{-1})$ in (\ref{asZNeven1}) can be expanded as an asymptotic series in negative integer powers of $N$ of the form
\[\sum_{j=1}^kc^{(j)}(r,s)N^{-j}+\bigO(N^{-k-1}),\]
for any $k\in\mathbb N$, where the coefficients $c^{(j)}(r,s)$ are analytic functions of $r$ and $s$.

\medskip

For the computation of $\log Z_{2N+1}$, the integrals in (\ref{partition rec7}) do not cancel out exactly, but they contribute to the $\bigO(1)$-term. Substituting  (\ref{G0}),  (\ref{G1}) and (\ref{GNM}), into (\ref{partition rec7}), we obtain
\begin{multline}\label{finalZNodd0}
\log Z_{2N+1}=\log(2N+1)!+\log \left[\frac{1}{(N+1)!}Z_{N+1,s_+}^{GUE}\right]+\log \left[\frac{1}{N!}Z_{N,s_-}^{GUE}\right]\\+\frac{(2N+1)^2r^2}{4s}
-\frac{1}{2}\log\left((\sqrt{a}+\sqrt{b})(ab)^{1/4}\right)+\frac{1}{2}\log 2+\bigO(N^{-1}),
\end{multline}
as $N\to\infty$.
Replacing $2N+1$ by $N$, this becomes
\begin{multline}\label{finalZNodd01}
\log Z_{N}=\log N!+\log \left[\frac{1}{\left(\frac{N+1}{2}\right)!}Z_{(N+1)/2,\widehat s_+}^{GUE}\right]+\log \left[\frac{1}{\left(\frac{N-1}{2}\right)!}Z_{(N-1)/2,\widehat s_-}^{GUE}\right]\\+\frac{N^2r^2}{4s}
-\frac{1}{2}\log\left((\sqrt{a}+\sqrt{b})(ab)^{1/4}\right)+\frac{1}{2}\log 2+ \bigO(N^{-1}),
\end{multline}
for $N$ odd,
where $\widehat s_\pm=s\left(1\pm \frac{1}{N}\right)$,
and substracting $\log \left[\frac{Z_{(N-1)/2,4e^{3/2}}^{GUE} Z_{(N+1)/2,4e^{3/2}}^{GUE}}{\left(\frac{N-1}{2}\right)! \left(\frac{N+1}{2}\right)!}\right]$ on both sides, we find  that
\begin{multline}\label{finalZNodd02}
\log \left[\frac{\left(\frac{N-1}{2}\right)! \left(\frac{N+1}{2}\right)!Z_{N}}{N!Z_{(N-1)/2,4e^{3/2}}^{GUE} Z_{(N+1)/2,4e^{3/2}}^{GUE}}\right]=\log \left[\frac{Z_{(N+1)/2,\widehat s_+}^{GUE}}{Z_{(N+1)/2,4e^{3/2}}^{GUE}}\right]+\log \left[\frac{Z_{(N-1)/2,\widehat s_-}^{GUE}}{Z_{(N-1)/2,4e^{3/2}}^{GUE}}\right]\\+\frac{N^2r^2}{4s}
-\frac{1}{2}\log\left((\sqrt{a}+\sqrt{b})(ab)^{1/4}\right)+\frac{1}{2}\log 2+ \bigO(N^{-1}).
\end{multline}
By (\ref{ZGUE-intro}) and using the fact that $s=(b-a)^2/4$, we finally obtain
\begin{multline}\label{finalZNodd04}
\log \left[\frac{\left(\frac{N-1}{2}\right)! \left(\frac{N+1}{2}\right)!Z_{N}}{N!Z_{(N-1)/2,4e^{3/2}}^{GUE} Z_{(N+1)/2,4e^{3/2}}^{GUE}}\right]\\=-N^2F_0
+\frac{1}{2}\log\left(\frac{\sqrt{b}-\sqrt{a}}{2(ab)^{1/4}}\right)
+ \bigO(N^{-1}).
\end{multline}
The error term $\bigO(N^{-1})$ in (\ref{finalZNodd04}) can be expanded as an asymptotic series in negative integer powers of $N$ of the form
\[\sum_{j=1}^k\widetilde c^{(j)}(r,s)N^{-j}+\bigO(N^{-k-1})\]
for any $k\in\mathbb N$, where the coefficients $\widetilde c^{(j)}(r,s)$ are analytic functions of $r$ and $s$.
Theorem \ref{theorem 1} is proven by (\ref{asZNeven1}) and (\ref{finalZNodd04}).

\section{Deformation of one-cut regular $V$: proof of Theorem \ref{theorem: 1} \label{Sec1-cut}}
We assume in this section that $V=V_{\vec t}\in P_m^{(1)}$ is $1$-cut regular.
In general, the equilibrium measure associated to $V\in P_m^{(1)}$ can be written in the form \cite{DKM} \begin{equation}\label{densitymu1}
d\mu_V(x)=\frac{1}{ic}\mathcal{R}_+^{1/2}(x)h(x)dx, \qquad x\in[a,b],\qquad c>0,
\end{equation}
where \begin{equation}\mathcal{R}(z)=(z-a)(z-b),\end{equation}
with $\mathcal R^{1/2}(z)$ analytic in $\mathbb C\setminus[a,b]$ and positive for $z>b$, and $\mathcal R_+^{1/2}(x)$, $x\in(a,b)$, the boundary value from the upper half plane. The function 
$h$ is a monic polynomial of degree ${\rm deg\,}V-2$,
and can be expressed as follows in terms of its zeros, \begin{equation}
h(z)=\prod_{j=1}^{n_1}(z-\gamma_j). \prod_{j=1}^{n_2}(z-\eta_j).\prod_{j=1}^{n_3}(z-z_j)(z-\bar z_j),\label{h}
\end{equation}
where
\begin{equation}\gamma_{n_1}<\ldots<\gamma_2<\gamma_1<a<b<\eta_1<\eta_2<\ldots
<\eta_{n_2},\ \Im z_j\geq 0,\ z_j\notin
[a,b].\label{zeros}\end{equation} 
The number of zeros $n_1$ at the left of $[a,b]$ and the number of zeros $n_2$ at the right of $[a,b]$ are always even, the other zeros come in complex conjugate pairs.
 Using this notation, it is possible that some
of the $z_j$'s are real or coincide with each other or with one
of the real zeros $\gamma_j,\eta_j$. 
Note that
in the case of a quadratic external field $V$ as in Example
\ref{exampel: 1}, we have $n_1=n_2=n_3=0$.

A probability measure of the form (\ref{densitymu1})
is not necessarily the equilibrium measure for an external field $V$, or in other words it does not necessarily
belong to the image of the map $V\in P_m^{(1)} \mapsto \mu_V\in\mathcal P$. However, if a measure of the form
(\ref{densitymu1})
is an equilibrium measure, it follows from (\ref{var eq}), after a straightforward calculation,
that $V'$ has to be equal to
\begin{equation}\label{Vx}
V'(x)=\frac{2}{c}\PVint_a^b \frac{|\mathcal{R}_+^{1/2}(y)h(y)|}{x-y}dy,\qquad x\in[a,b],
\end{equation}
and consequently
\begin{equation}\label{V}
V'(z)=\frac{1}{ic}\int_\gamma \frac{\mathcal{R}^{1/2}(y)h(y)}{z-y}dy,
\end{equation}
with $\gamma$ a clock-wise oriented contour surrounding $[a,b]$ and $z$.
 The potential $V$ is uniquely fixed by the above condition and by $V(0)=0$.

Conversely, for any $\mu$ of the form
(\ref{densitymu1}), if we define $V'$ as above, the variational
equality (\ref{var eq}) always holds. Nevertheless, it is still
possible that the variational inequality (\ref{var ineq}) is
violated, and in that case $\mu$ is not an equilibrium measure. The
following lemma gives a convenient characterization of the image of
the map $V\in P_m^{(1)} \mapsto \mu_V\in\mathcal P$.

Define
\begin{align}
&\label{mj}m_j=\int_{\eta_j}^{\eta_{j+1}}|\mathcal{R}^{1/2}(x)h(x)|dx\geq 0, &j=0,1,\ldots, n_2-1,\\
&\label{tilde mj}\tilde
m_j=\int_{\gamma_{j+1}}^{\gamma_{j}}|\mathcal{R}^{1/2}(x)h(x)|dx\geq 0, &
j=0,1,\ldots, n_1-1,
\end{align}
where we use the convention that $\gamma_0=a$, $\eta_0=b$.

\begin{lemma}\label{lemma: m}
Let $h$ be of the form (\ref{h})-(\ref{zeros}) with $n_1, n_2$ even, and let $\mu$ be defined by \begin{equation}\label{mu1}
d\mu(x)=\frac{1}{c}|\mathcal{R}^{1/2}(x)h(x)|dx,\qquad x\in[a,b],\qquad c=\int_a^b|\mathcal{R}^{1/2}(x)h(x)|dx.
\end{equation}
Let $V$ be given by (\ref{V}) with integration constant fixed by the condition $V(0)=0$.
Then, $V$ is one-cut regular ($V\in P_m^{(1)}$) and $\mu$ is equal to the equilibrium measure $\mu_V$ if and only if
\begin{align}
&\label{sum1}\sum_{j=0}^{k-1} (-1)^jm_j>0,&0<k \leq n_2,\ k\mbox{ even,}\\
&\label{sum2}\sum_{j=0}^{k-1} (-1)^j\tilde m_j>0,&0<k \leq n_1,\
k\mbox{ even.}
\end{align}
\end{lemma}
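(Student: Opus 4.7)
The plan is to reduce both conditions defining regularity to a sign analysis of the effective potential
\[
\psi_V(x) := 2\int \log|x-y|\,d\mu(y) - V(x) - \ell_V
\]
off the support, by first deriving an explicit formula for $\psi_V'$ on $\mathbb R\setminus[a,b]$. By the definition (\ref{V}) of $V$ together with (\ref{Vx}), the variational equality (\ref{var eq}) holds on $[a,b]$, which normalizes $\ell_V$ so that $\psi_V \equiv 0$ on $[a,b]$. Collapsing the contour $\gamma$ in (\ref{V}) onto $[a,b]$ while picking up the residue at $y=z$ yields
\[
V'(z) = -2\int \frac{d\mu(y)}{z-y} + \frac{2\pi}{c}\mathcal{R}^{1/2}(z)h(z), \qquad z\in\mathbb C\setminus[a,b],
\]
where $\mathcal{R}^{1/2}$ is the branch analytic on $\mathbb C\setminus[a,b]$ with $\mathcal{R}^{1/2}(z)\sim z$ at infinity. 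Because $n_2$ is even we have $h>0$ on $[a,b]$, and the jumps of the two summands across $[a,b]$ cancel, so the right-hand side is entire with polynomial growth, confirming $V\in P_m$.

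The key consequence, for real $x\notin[a,b]$, is
\[
\psi_V'(x) = 2\int \frac{d\mu(y)}{x-y} - V'(x) = -\frac{2\pi}{c}\mathcal{R}^{1/2}(x)h(x).
\]
Counting the signs of the real linear factors in (\ref{h}) (the complex conjugate pairs contributing nonnegative factors), $h$ has sign $(-1)^j$ on each interval $(\eta_j,\eta_{j+1})$ and $(\gamma_{j+1},\gamma_j)$, with the convention $\eta_0:=b$, $\gamma_0:=a$. Since $\mathcal{R}^{1/2}(x)>0$ for $x>b$ and $\mathcal{R}^{1/2}(x)<0$ for $x<a$, $\psi_V$ is strictly monotone on each such subinterval: its local maxima on $(b,+\infty)$ lie exactly at $\eta_{2k}$ ($1\le k\le n_2/2$), its local maxima on $(-\infty,a)$ at $\gamma_{2k}$ ($1\le k\le n_1/2$), and past the outermost of these (or everywhere, if the corresponding count is zero) $\psi_V$ is monotone with limit $-\infty$. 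Integrating $\psi_V'$ across each subinterval and summing yields
\[
\psi_V(\eta_{2k}) = -\frac{2\pi}{c}\sum_{j=0}^{2k-1}(-1)^j m_j, \qquad \psi_V(\gamma_{2k}) = -\frac{2\pi}{c}\sum_{j=0}^{2k-1}(-1)^j \tilde m_j.
\]

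Therefore the variational inequality (\ref{var ineq}) is strict on $\mathbb R\setminus[a,b]$ if and only if every such local maximum is strictly negative, i.e.\ if and only if both (\ref{sum1}) and (\ref{sum2}) hold (in the absence of $\eta_j$'s or $\gamma_j$'s the corresponding condition is vacuous and the strict inequality is automatic by monotonicity). The remaining regularity condition $h\neq 0$ on $[a,b]$ is free, since $\gamma_j<a$, $\eta_j>b$, $z_j\notin[a,b]$ and $n_2$ is even together give $h>0$ on $[a,b]$. Uniqueness of the minimizer in (\ref{energy}) then forces $\mu=\mu_V$. The main technical step is the contour-deformation computation producing the closed form for $\psi_V'$; the rest is careful bookkeeping of signs of $h$ between consecutive real zeros.
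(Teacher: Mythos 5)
Your proof is correct and follows essentially the same route as the paper: both reduce the variational inequality off the support to the sign of the antiderivative of $\mathcal{R}^{1/2}h$ (your $\psi_V$ equals $-2\pi$ times the paper's $H_b$ on $(b,\infty)$, and similarly for $H_a$), and both carry out the same sign bookkeeping over the intervals $(\eta_j,\eta_{j+1})$ and $(\gamma_{j+1},\gamma_j)$. One small internal inconsistency: the displayed formula for $V'(z)$ should read $V'(z)=2\int\frac{d\mu(y)}{z-y}+\frac{2\pi}{c}\mathcal{R}^{1/2}(z)h(z)$ (plus, not minus, on the Cauchy transform); with the sign as you wrote it the next line would give $\psi_V'=4\int\frac{d\mu(y)}{x-y}-\frac{2\pi}{c}\mathcal{R}^{1/2}h$ rather than the (correct) closed form you actually use.
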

\begin{proof}
Let us consider the function 
\begin{equation}
\label{G}
G(z)=\frac{1}{\pi i}\int_a^b\frac{d\mu(y)}{z-y}=-\dfrac{1}{\pi c}\int_{a}^b \dfrac{\mathcal{R}_+^{1/2}(y)h(y)}{z-y}dy.
\end{equation}
By Cauchy's theorem for $z\notin[a,b]$,
\[
G(z)=-\frac{1}{ic}\mathcal{R}^{1/2}(z)h(z)-\dfrac{1}{2\pi c}\int_{\gamma} \dfrac{\mathcal{R}^{1/2}(y)h(y)}{z-y}dy,
\]
where $\gamma $ is a clock-wise contour surrounding the interval $[a,b]$ and $z$.
By (\ref{V}), we have 
\begin{equation}
\label{G1c}
G(z)=-\frac{1}{ic}\mathcal{R}^{1/2}(z)h(z)+\dfrac{1}{2\pi i}V'(z),\quad z\notin[a,b].
\end{equation}
It follows that 
\[G_+(x)+G_-(x)=\frac{1}{\pi i}V'(x),\qquad x\in[a,b],\]
and upon integrating, we have that there exists a constant $\ell_V$ such that
\begin{equation}\label{vareq2}
2\int_a^b\log|x-y|d\mu(y)-V(x)=\ell_V,\qquad x\in[a,b]. 
\end{equation}
Evaluating at $x=b$ and $x=a$, we obtain
\begin{equation}
\label{lV}
\ell_V=2\int\log|b-y|d\mu(y)-V(b)=2\int\log|a-y|d\mu(y)-V(a).
\end{equation}
Recall that the equilibrium measure $\mu_V$ is characterized by the variational equality (\ref{var eq}) and the variational inequality (\ref{var ineq}). The measure $\mu$ already satisfies the equality by (\ref{vareq2}), and this means that $\mu$ is equal to the equilibrium measure $\mu_V$ if and only if 
\begin{equation}\label{varineq2}
2\int_a^b\log|x-y|d\mu(y)-V(x)\leq \ell_V,\qquad x\in\mathbb R. 
\end{equation}
Moreover, since $h$ has no zeros on $[a,b]$, $V$ is one-cut regular and $\mu=\mu_V$ if and only if 
\begin{equation}\label{varineq3}
2\int_a^b\log|x-y|d\mu(y)-V(x)<\ell_V,\qquad x\in\mathbb R \setminus [a,b], 
\end{equation}
which is by (\ref{lV}) equivalent to
\[
2\int_a^b\log\left|\frac{x-y}{b-y}\right|d\mu(y)-\int_{b}^xV'(\xi)d\xi< 0,\quad x\in\mathbb{R}\backslash[a,b].
\]
For $x>b$, by (\ref{G}), this inequality can be written as\[
\int_b^x\left[2\pi iG(\xi)-V'(\xi)\right]d\xi <0,
\]
or, by (\ref{G1c}), 
\begin{equation}
\label{in1}
H_b(x):=\dfrac{1}{c}\int_{b}^x\mathcal{R}^{1/2}(\xi)h(\xi) d\xi >0,\quad x>b.
\end{equation}
Repeating the same computation for $x<a$, we get
\begin{equation}
\label{in2}
H_a(x):=\dfrac{1}{c}\int_{x}^a\mathcal{R}^{1/2}(\xi)h(\xi) d\xi >0,\quad x<a.
\end{equation}
\begin{itemize}
\item[(i)] Assume first that $\mu=\mu_V$, $V\in P_m^{(1)}$.
Then we have (\ref{varineq3}), which implies (\ref{in1}) and (\ref{in2}) by the above discussion, and in particular we have 
\begin{align}
\label{H1}
&0<H_b(\eta_k)=\frac{1}{c} \int_b^{\eta_k} \mathcal{R}^{1/2}(\xi)h(\xi)d\xi=\frac{1}{c}(m_0-m_1+\ldots - m_{k-1}),\\
&\label{H2}0<H_a(\gamma_k)=\frac{1}{c} \int_{\gamma_k}^a\mathcal{R}^{1/2}(\xi)h(\xi)d\xi=\frac{1}{c}(\tilde m_0-\tilde m_1+\ldots - \tilde m_{k-1}),
\end{align}
for $k>0$ even.
This implies (\ref{sum1})-(\ref{sum2}). 
\item[(ii)] Conversely, assume that (\ref{sum1})-(\ref{sum2}) hold. In order to show that $\mu$ is the equilibrium measure, we need to prove (\ref{in1}) and (\ref{in2}). From the fact that $h$ is positive on $[a,b]$ and changes sign at $\eta_1, \ldots, \eta_{n_2}$, $\gamma_1,\ldots, \gamma_{n_1}$,
it is clear that $H_b(x)$ is increasing for $\eta_{2j}<x<\eta_{2j+1}$ and  decreasing for $\eta_{2j+1}<x<\eta_{2j+2}$, while   $H_a(x)$ is decreasing for $\gamma_{2j+1}<x<\gamma_{2j}$, and increasing for  $\gamma_{2j+2}<x<\gamma_{2j+1}$. Consequently, it is sufficient to prove that
$H_a(\gamma_k)>0$ and $H_b(\eta_k)>0$ for $k>0$ even. This follows from the equalities in (\ref{H1})-(\ref{H2}).
\end{itemize}
\end{proof}

We will prove Theorem \ref{theorem: 1} by induction on $n_1+n_2$.
i.e.  on the number of real zeros of $h$ with odd multiplicity.

\subsection{Basis $n_1+n_2=0$} We assume here that
$V=V_{\vec t}\in P_m^{(1)}$, and that $\mu$ is of the form
(\ref{densitymu1})-(\ref{h}) with $n_1+n_2=0$. We will now define a
continuous deformation $V=V_{\vec t(s)}$, $s\in[0,1]$, of $V$ such
that $V_{\vec t(s)}\in P_m^{(1)}$ and such that the following three
conditions hold:
\begin{itemize}
\item[(i)] $V_{\vec t(1)}(x)=V_{\vec t}(x)$,
\item[(ii)] $V_{\vec t(0)}(x)=\frac{x^2}{2}$,
\item[(iii)] for all $s\in[0,1]$, $V_{\vec t(s)}\in P_m^{(1)}$.
\end{itemize}
We will define this deformation implicitly by deforming the
equilibrium measure $\mu_V$ to a measure $\mu_s$ for $s\in [0,1]$
such that
\begin{equation}
\mu_1=\mu_V,\qquad \mu_0 =\mu_{x^2/2}.
\end{equation}
We will then use Lemma \ref{lemma: m} to prove that $\mu_s$ is an
equilibrium measure for all $s\in (0,1]$. The deformation of
$\mu$ defines a deformation of $V$ by (\ref{V}).

\medskip

 If $n_1+n_2=0$, $h$
has no real zeros of odd multiplicity and can be written as
\begin{equation}
h(z)=\prod_{j=1}^{n_3}(z-z_j)(z-\bar z_j).
\end{equation}
Let us define, for $s\in(0,1]$,
\begin{align}
&h(z;s)=\prod_{j=1}^{n_3}(z-z_j(s))(z-\bar z_j(s)),&
z_j(s)=z_j+i\frac{1-s}{s}, \\
&\mathcal{R}(z;s)=(z-a(s))(z-b(s)),\\& a(s)=sa-2(1-s),& b(s)=bs+2(1-s),
\end{align}
and
\begin{equation}
d\mu_s(x)=\frac{1}{ic(s)}\mathcal{R}_+^{1/2}(x;s)\,h(x;s)dx, \qquad x\in[a(s),b(s)],
\end{equation}
where \begin{equation}\label{cs}c(s)=-i\int_{a(s)}^{b(s)}\mathcal{R}_+^{1/2}(x;s)\,h(x;s)dx\end{equation} is chosen such that $\mu_s$ has total mass $1$.

For all $s\in(0,1)$, $h$ has no sign changes on the real line and thus satisfies
automatically the conditions of Lemma \ref{lemma: m}. We have that
$\mu_s$ is the equilibrium measure corresponding to the external
field (\ref{V}) for all $s\in (0,1)$, given by
\begin{equation}
V_{\vec t(s)}'(z)=\frac{1}{ic(s)}\int_\gamma
\frac{\mathcal{R}^{1/2}(y;s)h(y;s)}{z-y}dy.
 \end{equation}
Since the dependence of $V$ on the coefficients of $h$ and $\mathcal R$ is
continuous, $\vec t(s)$ describes a continuous path for $0< s\leq 1$.
Moreover, it is straightforward to verify by residue calculus that $V_{\vec t(s)}$ is a polynomial of degree ${\rm deg\, } h+2={\rm deg\, } V$ for $0<s<1$.

When taking the limit $s\to 0$, the zeros of $h$ tend to infinity
and $c(s)\to\infty$, and one verifies that $\frac{1}{c(s)}h(z;s)\to
\frac{1}{2\pi}$ uniformly on compacts. Indeed, as $s\to\infty$, we have
\begin{eqnarray*}
\frac{1}{c(s)}h(z;s)&=&\frac{\prod_{j=1}^{n_3}(z-z_j(s))(z-\bar z_j(s))}{\int_{a(s)}^{b(s)}|\mathcal{R}_+^{1/2}(x;s)|\,h(x;s)dx}\\
&\sim & \frac{s^{-2n_3}}{s^{-2n_3}\int_{-2}^2\sqrt{4-x^2} dx}=\frac{1}{2\pi}.
\end{eqnarray*}

We thus obtain
\begin{equation}
d\mu_0(x)=\frac{1}{2\pi}\sqrt{4-x^2}dx, \qquad x\in[-2,2],
\end{equation}
and \begin{equation}V_{\vec t(0)}(z)= \frac{1}{2\pi i}\int_\gamma
\frac{[(y-2)(y+2)]^{1/2}}{z-y}dy=z^2/2.
\end{equation}
This completes the construction of the deformation
of $V$ in the case $n_1+n_2=0$.

\subsection{Inductive step} We assume that Theorem
\ref{theorem: 1} holds true for any external field $\tilde{V}\in P_m^{(1)}$
which is such that $\tilde{h}$ in (\ref{mu1}) is of the form
(\ref{h})-(\ref{zeros}) with $n_1+n_2<N_1+N_2$, i.e.  $\tilde{h}$ has
strictly less than $N_1+N_2$ real zeros of odd multiplicity, with
$N_1+N_2>0$. Next, let us consider $h$ of the form
\begin{equation}
h(z)=\prod_{j=1}^{N_1}(z-\gamma_j). \prod_{j=1}^{N_2}(z-\eta_j).\prod_{j=1}^{N_3}(z-z_j)(z-\bar z_j),\quad \Im z_j\geq 0,\ z_j\notin [a,b],
\label{h1}
\end{equation}
where
\begin{equation}
\gamma_{N_1}<\ldots<\gamma_2<\gamma_1<\gamma_0=a<b=\eta_0<\eta_1<\eta_2<\ldots
<\eta_{N_2}\ .\label{zeros1}
\end{equation}
At least one of the integers $N_1, N_2$ is stictly positive; we will assume $N_2>0$ here, if
$N_2=0$, the case $N_1>0$ can be handled similarly. We will
construct a path $V_{\vec t(s)}$ which connects $V_{\vec t(1)}$ to an
external field $\tilde{V}=V_{\vec t(0)}$ which satisfies the condition
$n_1+n_2<N_1+N_2$. The composition of this path with the
 path connecting $V_{\vec t(0)}$ to the Gaussian
$z^2/2$ (which exists because of the induction hypothesis), completes the construction of the path required for the
proof of Theorem \ref{theorem: 1}.

\medskip

Let us deform $h(z)$ to $h(z;s)$ for $s\in[0,1]$ in such a way that
$h(z;1)=h(z)$, and $h(z;0)$ has only $N_1+N_2-2$ real zeros of odd
multiplicity. Define
\begin{equation}
h(z;s)=(z-\eta_1(s))\prod_{j=1}^{N_1}(z-\gamma_j).
\prod_{j=2}^{N_2}(z-\eta_j)\,\prod_{j=1}^{N_3}(z-z_j)(z-\bar z_j),
\label{h20}
\end{equation}
where we deform only one zero $\eta_1$ as follows:
\begin{equation}
\eta_1(s)=s\eta_1+(1-s)\eta_2.
\end{equation}
We then clearly have $h(z;1)=h(z)$, and $h(z;0)$ has only
$N_1+N_2-2$ zeros of odd multiplicity, since $\eta_1$ merged with
$\eta_2$, which has now become a zero of even multiplicity. Let us
consider the measure
\[d\mu_s(x)=\frac{1}{ic(s)}\mathcal{R}_+^{1/2}(x) \,h(x;s)dx, \qquad x\in[a,b],\]
where $c(s)$ is chosen such that $\mu_s$ is a probability measure.
The deformation of $\mu$ induces as before a deformation of $V$ to
$V_{\vec t(s)}$ by (\ref{V}), which preserves the degree of $V$ since the degree of $h$ is unchanged. What remains to prove, is that
$\mu_s$ is equal to the equilibrium measure in external field
$V_{\vec t(s)}$. To that end, by Lemma \ref{lemma: m}, we need to
prove only that (\ref{sum1})-(\ref{sum2}) hold for $0<s<1$.

\medskip
Recall the definition (\ref{mj})-(\ref{tilde mj}) of $m_j$ and
$\tilde m_j$, which now depend on $s$. We have
\begin{align}
&m_j(s)=\int_{\eta_j}^{\eta_{j+1}}|\mathcal{R}^{1/2}(x)h(x;s)|dx=\int_{\eta_j}^{\eta_{j+1}}|\mathcal{R}^{1/2}(x)h(x)f(x;s)|dx,\\
&\tilde
m_j(s)=\int_{\gamma_{j+1}}^{\gamma_{j}}|\mathcal{R}^{1/2}(x)h(x)f(x;s)|dx,
\end{align}
where
\begin{equation}
f(x;s)=\frac{x-\eta_1(s)}{x-\eta_1}.
\end{equation}
Note that $f(x;s)$ is positive, increasing, and smaller than $1$ for
$x>\eta_1(s)$, and that $f(x;s)$ is bigger than $1$ and increasing
for $x<\eta_1$. As a consequence, we have the inequalities
$m_0(s)\geq m_0$ and
\begin{align}
&m_j(s)\geq m_jf(\eta_j;s), &&m_j(s)\leq m_jf(\eta_{j+1};s), &&&1\leq j\leq N_2,\\
&\tilde m_j(s)\geq \tilde m_jf(\gamma_{j+1};s), &&\tilde m_j(s)\leq
\tilde m_jf(\gamma_{j};s), &&&0\leq j\leq N_1.
\end{align}
This implies that
\begin{multline}
m_{2j-1}(s)-m_{2j}(s)\leq
f(\eta_{2j};s)(m_{2j-1}-m_{2j})\\<(m_{2j-1}-m_{2j}),\qquad
j=1,\ldots, \frac{n_2}{2}-1,
\end{multline}
and $m_{k-1}(s)\leq m_{k-1}$ for $k>0$ even. We obtain
\begin{align*}
&m_0(s)-m_1(s)+m_2(s)- \ldots - m_{k-1}(s)\\
&\qquad \geq  m_0 -(m_1-m_2)-(m_3-m_4)-\ldots - (m_{k-3}-m_{k-2}) - m_{k-1}\\
&\qquad = \sum_{j=0}^{k-1}(-1)^jm_j>0,
\end{align*}
by Lemma \ref{lemma: m}. Similarly,
\begin{align*}
&\tilde m_0(s)-\tilde m_1(s)+\tilde m_2(s)- \ldots - \tilde m_{\ell}(s)\\
&\qquad\geq f(\gamma_1;s)(\tilde m_0-\tilde m_1) + f(\gamma_{3};s)(\tilde m_2-\tilde m_3)+\ldots + f(\gamma_{\ell};s)
(\tilde m_{\ell-1}-\tilde m_{\ell})\\
&\qquad \geq (\tilde m_0-\tilde m_1) + (\tilde m_2-\tilde m_3)+\ldots + (\tilde m_{\ell-1}-\tilde m_{\ell})\\
&\qquad =\sum_{j=0}^{\ell}(-1)^j\tilde m_j >0.
\end{align*}
This proves (\ref{sum1})-(\ref{sum2}) for $0<s<1$ and completes the proof of
Theorem \ref{theorem: 1}.

\section{Deformation of two-cut regular $V$: proof of Theorem \ref{theorem: 2}}\label{section: 2-cut}
We proceed in a similar way as for the proof in the one-cut case.
Assume now that $V=V_{\vec t}\in P_m^{(2)}$ is $2$-cut regular. The
equilibrium measure then has the form \cite{DKM}\begin{equation}
d\mu_V(x)=\frac{1}{ic}\mathcal{R}_+^{1/2}(x)h(x)dx, \qquad x\in [a_1,a_2]\cup
[a_3,a_4],
\end{equation}
with
\begin{equation}\mathcal{R}(z)=(z-a_1)(z-a_2)(z-a_3)(z-a_4),\qquad a_1<a_2<a_3<a_4,\end{equation}
where $\mathcal R^{1/2}(z)$ analytic in $\mathbb C\setminus([a_1,a_2]\cup[a_3,a_4])$ and positive for $z>a_4$.
The function $h$ is a monic polynomial of degree ${\rm deg\,}V-3$, which can be written in terms of its zeros as follows, \begin{equation}
h(z)=\prod_{j=1}^{n_1}(z-\gamma_j). \prod_{j=1}^{n_2}(z-\xi_j). \prod_{j=1}^{n_3}(z-\eta_j).\prod_{j=1}^{n_4}(z-z_j)(z-\bar z_j),\label{h2cut}
\end{equation}
with $\Im z_j\geq 0, z_j\notin[a_1,b_1]\cup[a_2,b_2]$ and
\begin{multline}
\gamma_{n_1}<\ldots <\gamma_2<\gamma_1<\gamma_0=a_1<a_2=\xi_0<\xi_1<\ldots <\xi_{n_2}<\xi_{n_2+1}=a_3\\
<a_4=\eta_0<\eta_1<\eta_2<\ldots <\eta_{n_3},\label{zeros2cut}\end{multline} with
$n_1,n_3$ even, $n_2$ odd. In other words, the $\gamma_j$'s are the real zeros of odd multiplicity at the left of $a_1$, the $\xi_j$'s are the zeros of odd multiplicity in the gap $(a_2,a_3)$, and the $\eta_j$'s are the real zeros of odd multiplicity at the right of $a_4$. Again, it is possible that some of the
$z_j$'s are real and coincide with each other or with one of the
real zeros $\gamma_j,\xi_j,\eta_j$. Define
\begin{align}
&\label{mj2}m_j=\int_{\eta_j}^{\eta_{j+1}}|\mathcal{R}^{1/2}(x)h(x)|dx,&j=0,\ldots,
n_3-1;\\
&\label{tilde mj2}\tilde
m_j=\int_{\gamma_{j+1}}^{\gamma_{j}}|\mathcal{R}^{1/2}(x)h(x)|dx,&\qquad
j=0,\ldots, n_1-1,\\&\label{hat mj} \widehat
m_j=\int_{\xi_{j}}^{\xi_{j+1}}|\mathcal{R}^{1/2}(x)h(x)|dx,& j=0,\ldots, n_2.
\end{align}
%

\begin{lemma}\label{lemma: m2}
Let $h$ be of the form (\ref{h2cut})-(\ref{zeros2cut}) with $n_1, n_3$ even and $n_2$ odd, and let
$\mu$ be defined by
 \begin{equation}\label{mu2}
\begin{split}
&d\mu(x)=\frac{1}{ic}\mathcal{R}_+^{1/2}(x)h(x)dx,\quad
x\in[a_1,a_2]\cup[a_3,a_4],\\
& c=\int_{J}|\mathcal{R}(x)^{1/2}\,h(x)|dx,\qquad J=[a_1,a_2]\cup[a_3,a_4].
\end{split}
\end{equation}
Let $V$ be given by
\begin{equation}
\label{V2}
V'(z)=\frac{1}{ic}\int_{\gamma}\frac{\mathcal{R}^{1/2}(y)h(y)}{z-y}dy,\qquad
z\in \mathbb C\backslash J,
\end{equation}
with $\gamma$ a clock-wise oriented contour surrounding $[a_1,a_2]\cup[a_3,a_4]$ and $z$. The constant of integration is fixed by $V(0)=0$.
Then, $V$ is two-cut regular ($V\in P_m^{(2)}$) and $\mu$ is equal to the equilibrium measure $\mu_V$ if and only if
\begin{align}
&\label{sum1-2}\sum_{j=0}^{k-1} (-1)^jm_j>0,&k=2,4,6,\ldots,n_1,\\
&\label{sum2-2}\sum_{j=0}^{k-1} (-1)^j\tilde m_j>0,&k=2,4,6,\ldots,n_3,\\
&\label{sum3-2}\sum_{j=0}^{k-1} (-1)^j\widehat m_j>0,&k=2,4,6,\ldots,n_2-1,\\
&\label{sum4-2}\sum_{j=0}^{n_2} (-1)^j\widehat m_j=0.
\end{align}
\end{lemma}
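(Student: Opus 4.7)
The plan is to follow the proof of Lemma~\ref{lemma: m} closely, with one genuinely new ingredient forced by the two-cut geometry: since $\supp\mu$ has two components, integrating the Euler--Lagrange equality on the support yields an a priori distinct constant on each component, and matching these two constants is exactly the additional equality (\ref{sum4-2}). As in the one-cut proof I would introduce the resolvent
\[
G(z)=\frac{1}{\pi i}\int_{J}\frac{d\mu(y)}{z-y}=-\frac{1}{\pi c}\int_{J}\frac{\mathcal R_+^{1/2}(y)h(y)}{z-y}\,dy,
\]
and use (\ref{V2}) together with Cauchy's theorem to obtain the closed form
\[
G(z)=-\frac{1}{ic}\mathcal R^{1/2}(z)h(z)+\frac{1}{2\pi i}V'(z),\qquad z\in\mathbb{C}\setminus J.
\]
Adding the boundary values on $J$ gives $G_+(x)+G_-(x)=V'(x)/(\pi i)$, so that upon integrating there exist constants $\ell_1,\ell_2$ with $U^\mu(x):=2\int\log|x-y|\,d\mu(y)-V(x)=\ell_j$ on the $j$-th component of $J$, while off $J$ one has $(U^\mu)'(x)=-\tfrac{2\pi}{c}\mathcal R^{1/2}(x)h(x)$.

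The matching $\ell_1=\ell_2$ is equivalent, via continuity of $U^\mu$ across the gap, to $\int_{a_2}^{a_3}\mathcal R^{1/2}(\xi)h(\xi)\,d\xi=0$. Since $\mathcal R^{1/2}$ has a fixed sign on $(a_2,a_3)$ while $h$ changes sign precisely at each $\xi_j$, this integral equals a nonzero multiple of $\sum_{j=0}^{n_2}(-1)^j\widehat m_j$; hence (\ref{sum4-2}) is \emph{necessary} for $\mu$ to be an equilibrium measure at all. Granting it, I set $\ell_V:=\ell_1=\ell_2$ and reduce the claim to the strict variational inequality $U^\mu<\ell_V$ on $\mathbb{R}\setminus J$, which I would analyze separately on the three complementary intervals $(-\infty,a_1)$, $(a_2,a_3)$, and $(a_4,+\infty)$. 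On each such interval $\mathcal R^{1/2}$ keeps a fixed sign and $h$ changes sign precisely at the corresponding odd-multiplicity zeros $\gamma_j$, $\xi_j$, or $\eta_j$, so the primitive of $\mathcal R^{1/2}h$ starting from the adjacent endpoint $a_1$, $a_2$, or $a_4$ is piecewise monotone with critical points at these zeros; positivity at the even-indexed critical points is exactly the alternating sum content of (\ref{sum1-2})--(\ref{sum3-2}).

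The main obstacle beyond Lemma~\ref{lemma: m} is the gap $(a_2,a_3)$: there the primitive must simultaneously vanish at both endpoints (to match $\ell_1=\ell_2$) and be strictly positive at every interior local maximum, so the equality (\ref{sum4-2}) and the inequalities (\ref{sum3-2}) are genuinely coupled, and together they force $n_2$ to be odd (this parity being already necessary for $\mu$ to be a positive measure, since $h$ must have opposite signs on $[a_1,a_2]$ and $[a_3,a_4]$). Once (\ref{sum4-2}) fixes the vanishing at the two endpoints, the one-cut monotonicity argument applied between consecutive $\xi_j$'s yields strict positivity of the primitive on the entire open gap, and the exterior intervals are handled verbatim as in the one-cut case; finally, since $h$ is nonvanishing on $J$ by (\ref{h2cut})--(\ref{zeros2cut}), two-cut regularity $V\in P_m^{(2)}$ follows automatically. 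The converse direction then proceeds by the same sign-change/monotonicity analysis run in reverse, giving the stated ``if and only if''.
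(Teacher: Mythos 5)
Your proposal is correct and follows essentially the same route as the paper: both use the resolvent $G$ to reduce the variational characterization of $\mu_V$ to strict positivity of the primitives of $\mathcal R^{1/2}h$ on the three complementary intervals, identify~(\ref{sum4-2}) as the matching of the two Euler--Lagrange constants across the gap, and then obtain the inequalities~(\ref{sum1-2})--(\ref{sum3-2}) by the same piecewise-monotonicity argument at the even-indexed sign changes of $h$. The paper is terser (it leans on the proof of Lemma~\ref{lemma: m} and cites~\cite{DKMVZ2} for the equivalent conditions $H_1>0$, $H_2<0$, $H_2(a_3)=0$, $H_4>0$), while you write out the $G$-function derivation; otherwise the ideas coincide.
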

\begin{proof}
Let $\mu$ be of the form (\ref{mu2}), and $V$ as in (\ref{V2}). In a
similar way as in the proof of Lemma \ref{lemma: m},  one can show that the variational conditions (\ref{var eq})-(\ref{var ineq}) are equivalent to (see \cite{DKMVZ2})
\begin{align}
\label{var H1}
&H_{4}(x):=\frac{1}{c} \int^x_{a_4}
\mathcal{R}^{1/2}(\xi)h(\xi)d\xi>0,&& x>a_4,\\
\label{var H2}
&H_{2}(x):=\frac{1}{c} \int^x_{a_2}
\mathcal{R}^{1/2}(\xi)h(\xi)d\xi< 0,&&a_2<x<a_3,\\
&\label{var H2b}
H_2(a_3)=0,\\
\label{var H3}
&H_{1}(x):=\frac{1}{c} \int_x^{a_1}
\mathcal{R}^{1/2}(\xi)h(\xi)d\xi>0,&& x<a_1.
\end{align}
%
%
\begin{itemize}
\item[(i)] Assume that $\mu=\mu_V$, $V\in P_m^{(2)}$.  Then by 
 (\ref{var H1})-(\ref{var H3}) we have
 
\begin{align}
&\label{H3}0<\frac{1}{c} \int_{a_4}^{\eta_j} \mathcal{R}^{1/2}(\xi)h(\xi)d\xi=\frac{1}{c}(m_0-m_1+\ldots - m_j),\\
&\label{H4}0>\frac{1}{c} \int_{a_2}^{\xi_j} \mathcal{R}^{1/2}(\xi)h(\xi)d\xi=-\frac{1}{c}(\widehat m_0-\widehat m_1+\ldots - \widehat
m_j),\\
&\label{H5}0<\frac{1}{c} \int^{a_1}_{\gamma_j} \mathcal{R}^{1/2}(\xi)h(\xi)d\xi=\frac{1}{c}(\tilde
m_0-\tilde m_1+\ldots - \tilde m_j).
\end{align}
This implies (\ref{sum1-2})-(\ref{sum3-2}). By (\ref{var H2b}),
we also have (\ref{sum4-2}).
\item[(ii)] Conversely, assume that (\ref{sum1-2})-(\ref{sum4-2}) hold.
We need to prove the strict inequality  (\ref{var ineq})  for $x\in\mathbb R\setminus
\left([a_1,a_2]\cup[a_3,a_4]\right)$  which is equivalent to
$H_4(x)>0$ for  $x>a_4$, $H_2(x)<0$ for $x\in(a_2,a_3)$ and $H_1(x)>0$ for $x<a_1$.
From (\ref{H3}) we know that $H_4(x)$ is increasing for  $\eta_{2j}<x<\eta_{2j+1}$
and decreasing for $\eta_{2j+1}<x<\eta_{2j+2}$. Since $H_4(\eta_j)>0$ for all $j=1,\dots,n_1$ it follows that 
$H_4(x)$ is strictly positive for $x>a_4$.
The same reasoning can be applied to $H_1(x)$ and $H_2(x)$.
 In order to prove the variational equality  (\ref{var eq})  we observe that integrating the second equality in  (\ref{V2})   between $a_1$ and $x\in[a_1,a_2]$  we have 
 \[
 V(x)-V(a_1)=2\int_J\log|x-s|\psi(s)ds-2\int_J\log|a_1-s|\psi(s)ds,
 \]
 which shows that $V(x)-2\int_J\log|x-s|\psi(s)ds$ is constant on $[a_1,a_2]$. In the same way we can show that $V(x)-2\int_J\log|x-s|\psi(s)ds$ is constant on $[a_3,a_4]$.
 The condition (\ref{sum4-2}) guarantees that the constant is the same on the two intervals.
\end{itemize}
%
\end{proof}

We will prove Theorem \ref{theorem: 2} by induction on $n_1+n_2+n_3$.
Since $n_2$ is odd, we have $n_1+n_2+n_3\geq 1$.

\subsection{Basis $n_1+n_2+n_3=1$} If $n_1+n_2+n_3=1$,
$h$ has only one real odd multiplicity zero $\xi$, which has to lie
in $(a_2,a_3)$ since $n_2$ is odd and $n_1,n_3$ are even. We have
\begin{equation}
h(z)=(z-\xi)\prod_{j=1}^{n_4}(z-z_j)(z-\bar z_j).
\end{equation}
As before, we will deform the equilibrium measure $\mu$ to the
measure (\ref{mu}) in the case $s=1$, $r=4$, which will induce a deformation of $V(z)$ to the
quartic potential $z^4-4z^2$. Similarly as in the one-cut case, we
do this by sending all zeros $z_j$ of $h$ to $\infty$, but as
opposed to the one-cut case, where we were free to deform the zeros,
here we need to make sure that the deformation preserves the
constraint $\widehat m_0-\widehat m_1=0$. Otherwise, the deformed measure $\mu$
will not be an equilibrium measure.

\medskip

For $s\in[0,1]$, we define
\begin{equation}
\mathcal{R}(z;s)=(z-a_1(s))(z-a_2(s))(z-a_3(s))(z-a_4(s)),
\end{equation}
with
\begin{align}
&a_1(s)=sa_1-\sqrt{3}(1-s), &a_2(s)=sa_2-(1-s),\\
&a_3(s)=sa_3+(1-s), & a_4(s)=sa_4+\sqrt{3}(1-s),
\end{align}
and
\begin{equation}
h(z;s)=(z-\xi(s))\prod_{j=1}^{n_3}(z-z_j(s))(z-\bar z_j(s)),\qquad
z_j(s)=z_j+i\frac{1-s}{s},
\end{equation}
where $\xi(s)\in (a_2,a_3)$ is chosen in such a way that $\widehat m_0-\widehat m_1=0$. Note that there is a
unique value of $\xi(s)$ such that this is true, and that it depends continuously on $s$.
We let
\begin{equation}
d\mu_s(x)=\frac{1}{ic(s)}\mathcal{R}_+^{1/2}(x;s)\,h(x;s)dx, \qquad x\in[a_1,a_2]\cup[a_3,a_4],
\end{equation}
where $c(s)$ is chosen such that $\mu_s$ has mass $1$.

For all $s\in(0,1)$, $h$ has only one real zero and thus satisfies
automatically the conditions (\ref{sum1-2})-(\ref{sum3-2}). The
fourth condition (\ref{sum4-2}) holds by definition of $\xi(s)$.
When taking the limit $s\to 0$, the non-real zeros $z_j(s)$ tend to
infinity, and $d\mu_s$ converges to
\begin{equation}
d\mu_0(x)=\frac{2}{\pi}|x|\sqrt{(3-x^2)(x^2-1)}dx,\qquad
x\in[-\sqrt{3},-1]\cup[1,\sqrt{3}].
\end{equation}
By (\ref{V2}), we have that $\mu_s$ is the equilibrium measure
corresponding to external field \[ V_{\vec
t(s)}'(z)=\frac{1}{ic(s)}\int_{\gamma}\frac{\mathcal{R}^{1/2}(y,s)h(y,s)}{z-y}dy,\qquad
z\in \mathbb C\setminus J(s),\] for all $s\in[0,1]$. Since the
dependence of $V$ in (\ref{V2}) on the coefficients of $h$ and $\mathcal R$
is continuous, $\vec t(s)$ describes a continuous path. In addition,
we have $V_{\vec t(0)}(z)=z^4-4z^2$, see (\ref{def V})-(\ref{mu}), and the degree of $V_{\vec t(s)}$ is the same as the degree of $V$ for all $0<s<1$.

\subsection{Inductive step} We now suppose that Theorem
\ref{theorem: 1} holds for any $h$ of the form
(\ref{h})-(\ref{zeros}) with $n_1+n_2+n_3<N_1+N_2+N_3$, i.e.\ if $h$
has strictly less than $N_1+N_2+N_3$ real zeros of odd multiplicity,
with $N_1+N_2+N_3>1$. Let $V=V_{\vec t(1)}\in P_m^{(1)}$ be such that
$h$ has $N_1+N_2+N_3$ real zeros of odd multiplicity.  Our strategy
is as before to deform $V_{\vec t(1)}$ by means of a deformation of
$\mu$, to an external field $V_{\vec t(s_0)}$ for which $h$ has less
than $N_1+N_2+N_3$ zeros. Afterwards, in order to complete the
proof, the path $V_{\vec t(s)}$, $s\in[s_0,1]$, has to be composed
with the path connecting $V_{\vec t(s_0)}$ to the quartic external
field $V_{\vec t(0)}=z^4-4z^2$. The latter exists because of the induction
hypothesis. There are two cases which we will treat separately:
$N_2>1$ and $N_2=1$.

\subsubsection*{Case 1: $N_2>1$}

First consider the case where $N_2>1$. Define a deformation of $h$
as follows:
\begin{equation}
h(z)=(z-\xi_1(s))(z-\xi_{N_2}(s))\prod_{j=1}^{N_1}(z-\gamma_j).
\prod_{j=2}^{N_2-1}(z-\xi_j).\prod_{j=1}^{N_3}(z-\eta_j)
.\prod_{j=1}^{N_4}(z-z_j)(z-\bar z_j),\label{h21}
\end{equation}
where \begin{equation}\label{xi1}\xi_1(s)=\xi_1+(1-s), \end{equation}
and where we choose $\xi_{N_2}(s)$ in such a way that (\ref{sum4-2})
holds.

One verifies that $\xi_{N_2}(s)$ is uniquely defined and smooth for $s$ close to $1$. We claim that in addition $\xi_{N_2}(s)$ is increasing in $s$.

\begin{varproof}{\bf of the claim:} By
definition of $\xi_{N_2}(s)$, we have $\sum_{j=0}^{n_2} (-1)^j\widehat
m_j(s)=0$. Since the $m_j$'s depend on $\xi_1, \xi_{N_2}$ by
(\ref{hat mj}), we have
\begin{multline}\label{chain}0=\frac{\partial}{\partial s}\sum_{j=0}^{n_2} (-1)^j\widehat
m_j(s)\\=\frac{\partial \xi_1(s)}{\partial
s}\frac{\partial}{\partial \xi_1}\sum_{j=0}^{n_2} (-1)^j\widehat
m_j(\xi_1,\xi_{N_2}) + \frac{\partial \xi_{N_2}(s)}{\partial
s}\frac{\partial}{\partial \xi_{N_2}}\sum_{j=0}^{n_2} (-1)^j\widehat
m_j(\xi_1,\xi_{N_2}).\end{multline} We write $\widehat m_j(\xi_1,\xi_{N_2})$
here since all the other $\xi_j$'s are constant. If we prove that
\begin{align*}
&\frac{\partial}{\partial \xi_1}\sum_{j=0}^{n_2} (-1)^j \widehat
m_j(\xi_1,\xi_{N_2})>0,\\
&\frac{\partial}{\partial \xi_{N_2}}\sum_{j=0}^{n_2} (-1)^j\widehat
m_j(\xi_1,\xi_{N_2})>0,
\end{align*} it follows from $\frac{\partial
\xi_1(s)}{\partial s}<0$  that $\frac{\partial
\xi_{N_2}(s)}{\partial s}>0$.

\medskip

For $\xi_1'>\xi_1$, we have
\[\widehat
m_j(\xi_1',\xi_{N_2})=
\int_{\xi_{j}}^{\xi_{j+1}}|\mathcal{R}^{1/2}(x)h(x)F(x;\xi_1',\xi_1)|dx,
\]
with
\[F(x;\xi_1',\xi_1)=\frac{x-\xi_1'}{x-\xi_1}.\]
It follows immediately that
\[\widehat m_0(\xi_1',\xi_{N_2})>\widehat m_0(\xi_1,\xi_{N_2}),\qquad \widehat m_{n_2}(\xi_1',\xi_{N_2})<\widehat m_{n_2}(\xi_1,\xi_{N_2}),\]
and, since $F$ is increasing and smaller than $1$ for $x>\xi_1$, we
also have
\begin{multline}\widehat m_{2j-1}(\xi_1',\xi_{N_2})-\widehat m_{2j}(\xi_1',\xi_{N_2})\leq
F(\xi_{2j})(\widehat m_{2j-1}(\xi_1,\xi_{N_2})-\widehat m_{2j}(\xi_1,\xi_{N_2}))\\
<\widehat m_{2j-1}(\xi_1,\xi_{N_2})-\widehat
m_{2j}(\xi_1,\xi_{N_2})).\end{multline} Consequently,
\[\sum_{j=0}^{n_2} (-1)^j
\widehat m_j(\xi_1',\xi_{N_2})>\sum_{j=0}^{n_2} (-1)^j \widehat
m_j(\xi_1,\xi_{N_2}),\] and
\[\frac{\partial}{\partial \xi_1}\sum_{j=0}^{n_2} (-1)^j
\widehat m_j(\xi_1,\xi_{N_2})>0.\] Similarly, one obtains
\[\frac{\partial}{\partial \xi_{N_2}}\sum_{j=0}^{n_2} (-1)^j\widehat
m_j(\xi_1,\xi_{N_2})>0,\] and this completes the proof of the claim.
\end{varproof}

Since $\xi_1(s)$ increases at constant speed as $s$ decreases, there will be a point
$s_0>0$ where two $\xi_j$'s coincide: either $\xi_1(s_0)=\xi_2(s_0)$, or $\xi_{N_2}(s_0)=\xi_{N_2-1}(s_0)$. We define
$s_0$ as the first point where one of these collisions occurs.
Setting
\begin{equation}
d\mu_s(x)=\frac{1}{c(s)}|\mathcal{R}^{1/2}(x)h(x;s)|dx,\qquad
x\in[a_1,a_2]\cup[a_3,a_4],
\end{equation}
we have $\mu_1=\mu_V$ and $\mu_{s_0}$ corresponds to $h(.;s_0)$ with
at most $N_1+N_2+N_3-2$ real zeros of odd multiplicity. It remains to
prove that $\mu_s$ is an equilibrium measure for all $s\in [s_0,1]$.
As before, we will use Lemma \ref{lemma: m2} to prove this.

\medskip

We have
\begin{align}
&m_j(s)=\int_{\eta_j}^{\eta_{j+1}}|\mathcal{R}^{1/2}(x)h(x;s)|dx=\int_{\eta_j}^{\eta_{j+1}}|\mathcal{R}^{1/2}(x)h(x)f(x;s)|dx,\\
&\tilde
m_j(s)=\int_{\gamma_{j+1}}^{\gamma_{j}}|\mathcal{R}^{1/2}(x)h(x)f(x;s)|dx,\\
&\widehat m_j(s)=\int_{\xi_{j}}^{\xi_{j+1}}|\mathcal{R}^{1/2}(x)h(x)f(x;s)|dx
\end{align}
where
\begin{equation}
f(x;s)=\frac{x-\xi_1(s)}{x-\xi_1}\frac{x-\xi_{N_2}(s)}{x-\xi_{N_2}},\qquad
\xi_1<\xi_1(s)<\xi_{N_2}(s)<\xi_{N_2}.
\end{equation}
This function is bigger than $1$ and increasing for $x<\xi_1$, and
decreasing and bigger than $1$ for $x>\xi_2$.  This implies the
inequalities
\begin{align}
&m_j(s)\geq m_jf(\eta_{j+1};s), &&m_j(s)\leq m_jf(\eta_{j};s), &&&0\leq j\leq N_3-1,\\
&\tilde m_j(s)\geq \tilde m_jf(\gamma_{j+1};s), &&\tilde m_j(s)\leq
\tilde m_jf(\gamma_{j};s), &&&0\leq j\leq N_1-1.
\end{align}
For $k$ even, we have
\begin{align*}
&m_0(s)- m_1(s)+m_2(s)- \ldots - m_{k-1}(s)\\
&\qquad\geq f(\eta_1;s)( m_0- m_1) + f(\eta_{3};s)(m_2- m_3)+\ldots
+ f(\eta_{k-1};s)
(m_{k-2}- m_{k-1})\\
&\qquad \geq (m_0- m_1) + (m_2- m_3)+\ldots + (m_{k-2}- m_{k-1})\\
&\qquad =\sum_{j=0}^{k-1}(-1)^j m_j >0,
\end{align*}
by Lemma \ref{lemma: m2}, and
\begin{align*}
&\tilde m_0(s)-\tilde m_1(s)+\tilde m_2(s)- \ldots - \tilde m_{k-1}(s)\\
&\qquad\geq f(\gamma_1;s)(\tilde m_0-\tilde m_1) +
f(\gamma_{3};s)(\tilde m_2-\tilde m_3)+\ldots +
f(\gamma_{k-1};s)
(\tilde m_{k-2}-\tilde m_{k-1})\\
&\qquad \geq (\tilde m_0-\tilde m_1) + (\tilde m_2-\tilde m_3)+\ldots + (\tilde m_{k-2}-\tilde m_{k-1})\\
&\qquad =\sum_{j=0}^{k-1}(-1)^j\tilde m_j >0.
\end{align*}

For $\xi_1<x<\xi_{N_2}$, $f(x)<1$ and $f$ has a unique local maximum
$\xi^*$. Suppose $\xi^*$ lies in between $\xi_\ell$ and $\xi_{\ell+2}$
with $\ell$ odd. Then, consider for $1\leq d\leq \ell$ and $d$ odd
the alternating sum
\[\sum_{j=0}^{d}(-1)^j\widehat m_j(s)=\widehat m_0(s) - (\widehat m_1(s)-\widehat
m_2(s))-\ldots -(\widehat m_{d-2}(s)-\widehat m_{d -1}(s))-\widehat m_{d}(s).\] We
have
\begin{multline}
\widehat m_{2j+1}(s)-\widehat m_{2j+2}(s)\leq f(\xi_{2j+2};s)(\widehat
m_{2j+1}-\widehat m_{2j+2})\\<\widehat
m_{2j+1}-\widehat m_{2j+2},\qquad j\geq 0, \xi_{2j+2}\leq \xi^*.
\end{multline}Furthermore,
since $f$ is bigger than $1$ outside $[\xi_1,\xi_{N_2}]$ and smaller
inside, $\widehat m_0(s)\geq \widehat m_0$ and $\widehat m_{d}(s)\leq \widehat
m_{d}$. Combining those inequalities, we obtain
\[\sum_{j=0}^{d}(-1)^j\widehat m_j(s)\geq \sum_{j=0}^{d}(-1)^j\widehat m_j>0.\]
For $d$ odd and $\ell+2\leq d\leq n_2-1$, we have by
(\ref{sum4-2}),
\[\sum_{j=0}^{d}(-1)^j\widehat m_j(s)= \sum_{j=d +1}^{n_2}(-1)^{j+1}\widehat m_j(s).\]
But \begin{eqnarray*}\sum_{j=d +1}^{n_2}(-1)^{j+1}\widehat
m_j(s)&=&\widehat m_{n_2}(s) - (\widehat m_{n_2-1}(s)-\widehat
m_{n_2-2}(s))-\ldots -
\widehat m_{d +1}(s)\\
&\geq &\widehat m_{n_2} - (\widehat m_{n_2-1}-\widehat m_{n_2-2})-\ldots - \widehat
m_{d +1}\\
&=&\sum_{j=0}^{d}(-1)^j\widehat m_j>0.
\end{eqnarray*}

\subsubsection*{Case 2: $N_2=1$}

We assume $N_1+N_2+N_3>1$ and $N_2=1$. Consequently, either $N_1$ or
$N_3$ is strictly positive. We assume $N_3>0$, the other case is
similar. Define a deformation of $h$ as follows:
\begin{equation}
h(z)=(z-\xi_1(s))(z-\eta_{1}(s))\prod_{j=1}^{N_1}(z-\gamma_j).\prod_{j=2}^{N_3}(z-\eta_j)
.\prod_{j=1}^{N_4}(z-z_j)(z-\bar z_j),\label{h3}
\end{equation}
where \begin{equation}\label{xi2}\eta_1(s)=s\eta_1+(1-s)\eta_2,\quad s\in[0,1],
\end{equation} and where we choose $\xi_1(s)$ to be the unique value
in $(a_2,a_3)$ such that (\ref{sum4-2}) holds. Similarly as
in case 1, one proves that $\xi_1(s)\geq \xi_1$ for $s\in[0,1]$.
We will prove that
\begin{equation}
d\mu_s(x)=\frac{1}{ic(s)}\mathcal{R}_+^{1/2}(x)h(x;s)dx,\qquad
x\in[a_1,a_2]\cup[a_3,a_4],
\end{equation}
is an equilibrium measure for any $s\in[0,1]$, by Lemma \ref{lemma: m2}.

\medskip

Note first that equation (\ref{sum4-2}) holds for any $s\in [0,1]$ by definition of $\xi_1(s)$, and that (\ref{sum3-2}) is trivially satisfied since $N_2=1$.
We still need to prove (\ref{sum1-2})-(\ref{sum2-2}) for $s\in[0,1]$.
Note that\begin{align}
&m_j(s)=\int_{\eta_j}^{\eta_{j+1}}|\mathcal{R}^{1/2}(x)h(x;s)|dx=\int_{\eta_j}^{\eta_{j+1}}|\mathcal{R}^{1/2}(x)h(x)f(x;s)|dx,\\
&\tilde
m_j(s)=\int_{\gamma_{j+1}}^{\gamma_{j}}|\mathcal{R}^{1/2}(x)h(x)f(x;s)|dx,
\end{align}
with
\begin{equation}
f(x;s)=\frac{x-\eta_1(s)}{x-\eta_1}\frac{x-\xi_1(s)}{x-\xi_1}.
\end{equation}
First, $f$ is increasing and bigger than $1$ for $x<\xi_1$, which means that, for $k$ even,
\begin{multline}
\sum_{j=0}^{k-1}(-1^j)\tilde m_j(s)\geq f(\gamma_{1})(\tilde m_0-\tilde m_1)+f(\gamma_{3})(\tilde m_2-\tilde m_3)+\ldots + f(\gamma_{k-1})(\tilde m_{k-2}-\tilde m_{k-1})\\
\geq \sum_{j=0}^{k-1}(-1^j)\tilde m_j>0,
\end{multline}
where $\tilde m_j=\tilde m_j(1)$.
Note that $\widehat m_0(s)\geq \widehat m_0$ for $s\in[0,1]$, which implies by (\ref{sum4-2}) that $\widehat m_1(s)\geq \widehat m_1$. But $\widehat m_1(s)\leq f(a_3;s)\widehat m_1$, which implies that $f(a_3;s)>1$. We thus have that $f$ is increasing and bigger than $1$ for $a_3<x<\eta_1$, which implies that $m_0(s)> m_0$ for $0<s<1$. Now, since for $x>\eta_1$, $f$ is increasing and smaller than $1$, we have
\begin{multline}
\sum_{j=0}^{k-1}(-1^j)m_j(s)\geq m_0-f(\eta_{2})(m_1-m_2)-f(\eta_{4})(m_3-m_4)-\ldots - f(\eta_{k}) m_{k-1}\\
 \geq \sum_{j=0}^{k-1}(-1^j) m_j>0.
\end{multline}
This completes the proof of Theorem \ref{theorem: 2}.


\section{Time-derivatives of $\log Z_{N}$ in the two-cut regime}

In this section, we will extend the asymptotic results for the partition function $Z_N=Z_N(V)$ to general two-cut regular polynomial external fields $V$.  In order to accomplish this, we will first establish  formula  (\ref{diff id t3}) for a logarithmic derivative (with respect to the parameters $\{t_{k}\}$) of the partition function, in terms of the solution $X$ to the RH problem for orthogonal polynomials on the full real line.  The formula is valid for any potential, whether it is symmetric or not (or even regardless of the number of intervals in the support of the equilibrium measure).  The asymptotic behavior of the solution $X$ as described in Section \ref{sec:LargeNAS} is well understood.  The real challenge is to obtain formulae that are simple and intrinsic, and this is aim of Sections \ref{sec:ellipRSP} through \ref{sec:FinalProof}.

 Let 
\[
P_j(z)=\kappa_jz^j+\ldots,\quad j=0,1,2,\ldots,\]
 be the orthonormal polynomials with respect to the weight $e^{-NV_{\vec t}}$ on $\mathbb R$, where, as before,
\begin{equation}\label{V0tdef}
V(z)=V_{\vec t}(z)=V_0(z)+\sum_{j=1}^{2d}t_jz^j,\qquad V_0(z)=z^4-4z^2.
\end{equation}
These polynomials satisfy a three term recurrence relation of the form
\[
zP_k(z)=a_kP_{k+1}(z)+b_kP_k(z)+a_{k-1}P_{k-1}(z).
\]
Recall that $Z_N(V)$ is given by (\ref{def Zn}).

\subsection{Differential identity}
For $z\in\mathbb{C}\setminus\mathbb R$, define the matrix-valued function
\begin{equation}\label{X2}
X(z)=X^{(N)}(z;\vec t)=\begin{pmatrix} \kappa_N^{-1}P_N(z)&
\frac{1}{2\pi i}\kappa_N^{-1}\int_{\mathbb R}P_N(s)\frac{e^{-NV_{\vec t}(s)}ds}{s-z}\\
-2\pi i\kappa_{N-1}P_{N-1}(z)&
-\kappa_{N-1}\int_{\mathbb R}P_{N-1}(s)\frac{e^{-NV_{\vec t}(s)}ds}{s-z}
\end{pmatrix},
\end{equation}
which solves the standard RH problem for orthogonal polynomials \cite{FIK}. In particular $X$ satisfies the jump relation
\begin{equation}
\label{jumpX}
 X_+(x)=X_-(x)\begin{pmatrix}1&e^{-NV_{\vec t}(x)}\\0&1\end{pmatrix},\quad \mbox{ for $x\in\mathbb R$}.
 \end{equation}
Following \cite{BEH, MezzadriMo}, we can derive an identity for
$\frac{\partial}{\partial t_k}\log Z_N$. Therefore, recall the derivation of (\ref{diff id alpha}) in Section \ref{section: diff id}. In exactly the same way as we did there for the $\alpha$-derivative, we can prove that
\begin{multline}\label{diff id t}
\frac{d}{d t_k}\log Z_{N}(\vec t)
=-\int_\mathbb R\frac{\partial}{\partial t_k}\left(a_{N-1}(\vec t)P_{N}'(x;\vec t)P_{N-1}(x;\vec t)\right)e^{-NV_{\vec t}(x)}dx
\\ \qquad +\int_\mathbb R\frac{\partial}{\partial t_k}\left(a_{N-1}(\vec t)P_{N}(x;\vec t)P_{N-1}'(x;\vec t)\right)
e^{-NV_{\vec t}(x)}dx,
\end{multline}
where $'$ denotes derivative with respect to $x$.
Integrating by parts, we obtain
\begin{multline}\label{diff id t2}
\frac{d}{d t_k}\log Z_{N}(\vec t)
=-N\int_\mathbb R a_{N-1}(\vec t)\left(P_{N}'(x;\vec t)P_{N-1}(x;\vec t)\right.
\\ \qquad \left. -P_{N}(x;\vec t)P_{N-1}'(x;\vec t)\right)
x^k e^{-NV_{\vec t}(x)}dx.
\end{multline}
By (\ref{X2}) and the jump relation (\ref{jumpX}), it is straightforward to see that
\begin{eqnarray}\label{diff id t3}
\frac{d}{d t_k}\log Z_{N}(\vec t)
&=&-\frac{N}{2\pi i}\int_\mathbb R \left(X_\pm^{-1}X_\pm'\right)_{21}(x)
x^k e^{-NV_{\vec t}(x)}dx\\
&=&\frac{N}{4\pi i}\int_\mathbb R \left[\Tr\left(X_+^{-1}X_+'\sigma_3\right)(x)-\Tr\left(X_-^{-1}X_-'\sigma_3\right)(x)\right]
x^k dx,\nonumber
\end{eqnarray}
where $'$ denotes derivative with respect to $x$.
Using a contour deformation argument, we can write the integral over the real line as the limit of the integral of $\Tr\left(X^{-1}(z)X'(z)\sigma_3\right)$ over a large clockwise oriented circle in the complex plane, as the radius tends to infinity. 
Although $z^k\Tr\left(X^{-1}(z)X'(z)\sigma_3\right)$ is not analytic at $\infty$, it has a large $z$ expansion
given by
\begin{equation}
z^k\Tr\left(X^{-1}(z)X'(z)\sigma_3\right)=\sum_{j=-1}^{k-2}c_jz^{j}+\bigO(z^{-2}),\qquad z\to\infty,
\end{equation}
for any $k\in\mathbb N$, where $-c_{-1}$ is the residue at infinity of $z^k\Tr\left(X^{-1}(z)X'(z)\sigma_3\right)$. By residue calculus, we obtain \cite{JMU}
\begin{equation}\label{diff id t res}
\frac{d}{d t_k}\log Z_{N}(\vec t)=\frac{N}{2}\Res_{z=\infty}\left[\Tr\left(X^{-1}(z)X'(z)\sigma_3\right)z^k\right].
\end{equation}
This identity is not new; similar identities can be found for example in \cite{BEH, JMU, MezzadriMo}.

\subsection{Large $N$ asymptotics for $X$}
\label{sec:LargeNAS}

We will now derive large $N$ asymptotics for the right hand side of (\ref{diff id t res}), and integrate it afterwards along a two-cut regular path in the $\vec t$-space as constructed in Section \ref{section: 2-cut}. Large $N$ asymptotics for the matrix  $X(z)$ have been obtained in \cite{DKMVZ2, DKMVZ1} for general $k$-cut regular polynomials $V=V_{\vec t}$. In this section, we recall the results from \cite{DKMVZ2, DKMVZ1} in the two-cut regular case, where the equilibrium measure $\mu_V$ is supported on $[a_1,a_2]\cup[a_3,a_4]$ with $a_1<a_2<a_3<a_4$. We write 
\begin{equation}
\label{Omega}
\Omega=\int_{a_3}^{a_4}d\mu_V(x),
\end{equation}
i.e.\ $\Omega$ is the expected portion of random matrix eigenvalues in the rightmost interval of the support as $N\to\infty$,
 and
\begin{equation}
g(z)=\int\log(z-s)d\mu_V(s).
\end{equation}
The equilibrium measure $\mu_V$ can be written in the form \cite{DKM}
\begin{equation}
\label{psi}
d\mu_V(x)=\psi(x)dx={\mathcal{R}(x)_+^{1/2} \over   \pi i }h (x)dx,\qquad \mbox {for  }x\in[a_1,a_2]\cup[a_3,a_4],\end{equation}
with $h$ given by
\begin{equation}\label{h11} h(z)=-\dfrac{1 }{2\pi i } \PVint_{J} {V'(\eta) \over
\mathcal{R}(\eta)_+^{1/2} (\eta -z)} d \eta,\quad J=[a_1,a_2]\cup[a_3,a_4],
\end{equation}
where $V(z)=V_{\vec t}(z)$ is the polynomial external field and
\begin{equation}\label{R}
\mathcal{R}(z)=\prod_{j=1}^4(z-a_j).
\end{equation}
The square root  $\mathcal{R}(z)^{1/2}$ is as usual chosen in such a way  that it is positive for $z>a_4$ and analytic in $\mathbb{C}\backslash 
[a_1,a_2]\cup[a_3,a_4]$. $\mathcal{R}_+(z)^{1/2}$ denotes the boundary value from the upper half plane on $[a_1,a_2]\cup[a_3,a_4]$.
\begin{remark}
There is a difference in notation between the function $h$ here and the one in Section \ref{section: 2-cut}: they differ by a constant factor. The function $\frac{1}{c}h$ from Section \ref{section: 2-cut}, where it was convenient to consider a monic polynomial $h$, is equal to $\frac{1}{\pi}h$ using the notation of this section, where we prefer to have $h$ defined by (\ref{h11}).
\end{remark}

The function  $h(z)$ is a polynomial, which also has the property of being 
 the polynomial part of the function 
$\frac{V'(z)}{\mathcal{R}(z)^{1/2}}$ at infinity, i.e.
\begin{equation}\label{equi12}
\frac{V'(z)}{\mathcal{R}(z)^{1/2}}=h(z)+\bigO(z^{-1}),\qquad z\to \infty.
\end{equation} 

The endpoints $a_1<a_2<a_3<a_4$ are  determined by 
\begin{equation}
\label{moment}
\int_\gamma\dfrac{V'(\lb)\lb^k}{\mathcal{R}(\lb)^{1/2}}d\lb=0,\quad k=0,1, \qquad -\dfrac{1}{4\pi i}\int_\gamma\dfrac{V'(\lb)\lb^{2}}{\mathcal{R}(\lb)^{1/2}}d\lb=1,
\end{equation}
where $\gamma$ is a clock-wise oriented contour surrounding the support $J$,
and by the normalization condition
\begin{equation}
\label{loop}
\int_{a_{2}}^{a_{3}} \mathcal{R}(\lb)^{1/2}h(\lb)d\lb=0.
\end{equation}
These $4$ equations determine  $a_1<a_2<a_3< a_4$ uniquely as a function of $t_1,\dots, t_{2d}$.

For $z$ outside small disks surrounding the endpoints $a_1,\ldots, a_4$, the matrix $X(z) $ defined in (\ref{X2})  can be expressed as \cite{DKMVZ2}
\begin{equation} \label{as X}
X(z)=R(z)P^{\infty}(z)e^{Ng(z)\sigma_3}.
\end{equation}
Here:
\begin{itemize}
\item[(1)] $R(z)$ is a $2\times 2$ matrix  that solves the  so-called error RH problem and admits an asymptotic expansion of the form
\begin{equation}\label{as R 2-cut}
R(z)=I+\sum_{j=1}^kN^{-j}R^{(j)}(z)+\bigO(N^{-k-1}),\qquad N\to\infty,
\end{equation}
where $R^{(j)}(z)$, $j=1,2,\ldots $ can be computed by a recursive procedure and remains bounded as $N\to\infty$, and 
\item[(2)]$P^{\infty}(z)$ is the unique solution to the following RH problem:
\subsubsection*{RH problem for $P^{\infty}(z)$ }
\vspace{-0,5cm}
\begin{align}
\label{RHPT1}
&({\rm a}) \mbox{ $P^{\infty}(z)$ is a $2\times2$ matrix  analytic in $\mathbb C\setminus[a_1,a_4].$}\\
\nonumber
& ({\rm b})\mbox{  $P^\infty$ has the jumps}\\
\label{RHPT2}
&\quad\quad
P_{+}^{\infty}(z)=P_{-}^{\infty}(z)\pmtwo{0}{1}
{-1}{0},\qquad z\in(a_1,a_2)\cup(a_3,a_4),\\
\label{RHPT3}
&\quad\quad P_{+}^{\infty}(z)=P_{-}^{\infty}(z)e^{-2\pi iN\Omega\sigma_3},\qquad\quad z\in(a_2,a_3).\\
\label{RHPT4}
&({\rm c}) \mbox{ $P^{\infty}(z)=I+\bigO(z^{-1})$ as $z\to\infty$.}\\
\label{RHPT5}
&({\rm d}) \mbox{ As $z\to a_j$, $j=1,\ldots, 4$, we have $P^\infty(z)=\bigO(|z-a_j|^{-1/4})$.}
\end{align}
This RH problem can be solved using elliptic $\theta$-functions (see e.g. \cite{KK},\cite{DKI},\cite{DKMVZ1}), we will give more details about this construction later on.
\end{itemize}
Using (\ref{diff id t res}), (\ref{as X}), and the identity \cite{KM} 
\begin{equation}
\frac{\partial}{\partial t_k}F_0=-\Res_{z=\infty}\left(z^kg'(z)\right),
\end{equation} 
where 
\begin{equation}
\label{F0V}
F_0=\iint \log|x-y|^{-1}d\mu_V(x)d\mu_V(y)+\int V(x)d\mu_V(x),
\end{equation}
we obtain
\begin{multline}\label{identity BEH3}
\frac{\partial}{\partial t_k}\log Z_N=-N^2\frac{\partial}{\partial t_k}F_0+\frac{N}{2}\res[z=\infty]\Tr\left(P^\infty(z)^{-1}P^\infty(z)'\sigma_3z^k \right)\\
+\frac{N}{2}\res[z=\infty] \Tr\left(P^\infty(z)^{-1}R^{-1}(z)R'(z)P^\infty(z)\sigma_3z^k\right).
\end{multline}
Substituting (\ref{as R 2-cut}), we obtain large $N$ asymptotics for the right hand side of this identity:
\begin{multline}\label{identity BEH4}
\frac{\partial}{\partial t_k}\log Z_N=-N^2\frac{\partial}{\partial t_k}F_0+\frac{N}{2}\res[z=\infty] \Tr\left(P^\infty(z)^{-1}P^\infty(z)'\sigma_3z^k\right)\\
+\dfrac{1}{2}\res[z=\infty]\Tr\left(P^\infty(z)^{-1}\dfrac{dR^{(1)}(z)}{dz}P^\infty(z)\sigma_3z^k\right)
+\bigO(N^{-1}),\qquad N\to\infty.
\end{multline}
Here we used the fact that $P^\infty(z)$ is uniformly bounded for $N$ and $z$ large. This will follow from our construction of $P^\infty$ below.
We observe  that the second term at the right hand side of the above formula is invariant  under the transformation 
\[
P^{\infty}(z)\mapsto C P^{\infty}(z), 
\]
  where $C$ is an invertible matrix.  If in addition we apply the transformation
  \[
  R^{(1)}(z)\mapsto C  R^{(1)}(z)C^{-1},
\]
the third term on the right hand side of (\ref{identity BEH4}) is also unchanged.
Let 
$C=P^{\infty}(z_0)^{-1}$, and define 
\begin{equation}\label{Pinfbase}
P^{\infty}(z,z_0)\equiv  P^{\infty}(z_0)^{-1} P^{\infty}(z),\quad z_0\neq a_j,\;\;j=1,2,3,4.
\end{equation}
Now, $P^{\infty}(z,z_0)$ solves the RH conditions
(\ref{RHPT1}), (\ref{RHPT2}), (\ref{RHPT3}) and (\ref{RHPT5}), with the asymptotic condition (\ref{RHPT4}) replaced by
\begin{equation}
\label{RHPT4b}
P^{\infty}(z,z_0)= I+\bigO(z-z_0),\qquad \mbox{ as $z\to z_0$.}
\end{equation}
$P^{\infty}(z,z_0)$ is moreover uniquely determined by those conditions. It will be convenient for us to construct the solution to this more general RH problem for $P^\infty(z,z_0)$, normalized at $z_0$, instead of the particular case of $P^\infty(z)=P^\infty(z,\infty)$.
Writing
\begin{equation}\label{Rbase}
R(z,z_0)=  P^{\infty}(z_0)^{-1}R(z)P^{\infty}(z_0)       ,\qquad R^{(1)}(z,z_0)\equiv P^{\infty}(z_0)^{-1} R^{(1)}(z)P^{\infty}(z_0),
\end{equation}
\begin{multline}
\res[z=\infty]\Tr\left(P^\infty(z)^{-1}\dfrac{dR^{(1)}(z)}{dz}P^\infty(z)\sigma_3 z^k\right)\\=
\res[z=\infty]
 \mbox{Tr} \left(P^\infty(z,z_0)^{-1}\dfrac{dR^{(1)}(z,z_0)}{dz}P^{\infty}(z,z_0)\sigma_3 z^k\right).
\end{multline}
Since the left hand side in this equation is independent of $z_0$, we can let $z_0\to z$ at the right hand side, and obtain
\begin{multline}
\res[z=\infty]\Tr\left(P^\infty(z)^{-1}\dfrac{dR^{(1)}(z)}{dz}P^\infty(z)\sigma_3 z^k\right)\\=
\res[z=\infty]
 \mbox{Tr} \left(\left.\dfrac{dR^{(1)}(z,z_0)}{dz}\right|_{z_0=z}\sigma_3 z^k\right).
\end{multline}
Therefore, the relation (\ref{identity BEH4}) can be rewritten in the form
\begin{lemma}
\label{lemmaZ}
We have
\begin{multline}\label{identity BEH5}
\frac{\partial}{\partial t_k}\log Z_N(V_{\vec t})=-N^2\frac{\partial}{\partial t_k}F_0+\frac{N}{2}\res[z=\infty] \Tr\left(\left.\dfrac{dP^\infty(z,z_0)}{dz}\right|_{z_0=z}\sigma_3 z^k\right)\\
+\dfrac{1}{2}\res[z=\infty] 
 \Tr \left(\left.\dfrac{d R^{(1)}(z,z_0)}{dz}\right|_{z_0=z}\sigma_3 z^k\right)+\bigO(N^{-1}),\qquad N\to\infty,
\end{multline}
where $P^{\infty}(z,z_0)$  satisfies the RH conditions (\ref{RHPT1})-(\ref{RHPT3}), (\ref{RHPT5}), and the normalization (\ref{RHPT4b}), and $R^{(1)}(z,z_0)$ is defined in (\ref{Rbase}).
\end{lemma}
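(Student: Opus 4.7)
The plan is to start from the differential identity (\ref{diff id t res}) and substitute the asymptotic factorization (\ref{as X}), $X(z)=R(z)P^{\infty}(z)e^{Ng(z)\sigma_3}$, valid for $z$ outside small disks around the endpoints. First I would compute
\[
X^{-1}X' = e^{-Ng\sigma_3}\bigl[(P^{\infty})^{-1}R^{-1}R'P^{\infty} + (P^{\infty})^{-1}P^{\infty\prime} + Ng'\sigma_3\bigr]e^{Ng\sigma_3},
\]
and, using that $\sigma_3$ commutes with $e^{Ng\sigma_3}$ and the cyclicity of the trace, obtain
\[
\Tr\bigl(X^{-1}X'\sigma_3\bigr) = 2Ng'(z) + \Tr\bigl((P^{\infty})^{-1}P^{\infty\prime}\sigma_3\bigr) + \Tr\bigl((P^{\infty})^{-1}R^{-1}R'P^{\infty}\sigma_3\bigr).
\]
Inserting this into (\ref{diff id t res}) and using $\res_{z=\infty}(z^k g'(z)) = -\partial F_0/\partial t_k$ (the identity from \cite{KM} recalled before (\ref{identity BEH3})), the $g'$-contribution produces exactly the $-N^2\partial_{t_k}F_0$ term.

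Next I would expand the $R$-factor via (\ref{as R 2-cut}), writing $R^{-1}R' = N^{-1}(R^{(1)})' + \bigO(N^{-2})$, where the error is uniform on any compact set disjoint from the endpoint disks. Multiplication by $N/2$ in (\ref{diff id t res}) then gives the $R^{(1)}$-contribution with a remainder of order $\bigO(N^{-1})$, and justifies replacing $R^{-1}R'$ by $(R^{(1)})'/N$ at the price of an $\bigO(N^{-1})$ error. This yields (\ref{identity BEH4}).

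The last step is to rewrite the $P^{\infty}$- and $R^{(1)}$-contributions in the $z_0$-normalised form stated in the lemma. Here I would exploit that the trace $\Tr\bigl(P^{\infty\,-1}(\,\cdot\,)P^{\infty}\sigma_3 z^k\bigr)$ is invariant under the replacement $P^{\infty}\mapsto CP^{\infty}$ by any constant invertible $C$; choosing $C = P^{\infty}(z_0)^{-1}$ and defining $P^{\infty}(z,z_0)$ and $R^{(1)}(z,z_0)$ as in (\ref{Pinfbase})--(\ref{Rbase}), the expressions appearing in (\ref{identity BEH4}) become $z_0$-independent. Since the formula holds for every admissible $z_0$, we may take the limit $z_0\to z$; because $P^{\infty}(z,z)=I$, the chain-rule computation
\[
\left.\frac{d}{dz}P^{\infty}(z,z_0)\right|_{z_0=z} = P^{\infty}(z)^{-1}P^{\infty\prime}(z),\qquad \left.\frac{d}{dz}R^{(1)}(z,z_0)\right|_{z_0=z} = P^{\infty}(z)^{-1}(R^{(1)})'(z)P^{\infty}(z),
\]
converts each trace to the form appearing in the lemma.

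The main obstacle is controlling the error in the $R$-expansion \emph{uniformly} in $z$ along the contour at infinity used to evaluate the residue, and in the parameters $\vec t$ as they vary in a compact two-cut regular region. The expansion (\ref{as R 2-cut}) is proved in \cite{DKMVZ2, DKMVZ1} uniformly for $z$ off the jump contour of $R$, but to legitimately take residues at $\infty$ one must observe that each $R^{(j)}(z)$ is analytic at infinity with a prescribed decay rate, so the residue of $z^k\Tr((P^\infty)^{-1}(R^{(1)})'P^\infty\sigma_3)$ is well-defined and the $\bigO(N^{-2})$ remainder does not inflate when multiplied by $N/2$ and by the polynomial $z^k$; this is routine once the analyticity and the precise decay of $R^{(j)}$ at infinity are tracked, which follows from the recursive construction of the $R^{(j)}$ through the small-norm theorem applied to the error RH problem. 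Everything else is algebraic manipulation of the decomposition of $X$.
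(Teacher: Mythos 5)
Your proposal is correct and follows essentially the same route as the paper: substitute the factorization $X=RP^{\infty}e^{Ng\sigma_3}$ into the residue identity, extract the $-N^2\partial_{t_k}F_0$ term via the identity $\partial_{t_k}F_0=-\res_{z=\infty}(z^kg')$, expand $R^{-1}R'$ to first order in $1/N$, and then exploit the invariance of the trace under $P^{\infty}\mapsto CP^{\infty}$, $R^{(1)}\mapsto CR^{(1)}C^{-1}$ with $C=P^{\infty}(z_0)^{-1}$ followed by the limit $z_0\to z$. The explicit chain-rule identifications and the remark about analyticity and decay of $R^{(j)}$ at infinity are sound and make explicit what the paper leaves implicit.
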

The above Lemma demonstrates the asymptotic behavior of the partition function for general two-cut potentials.  Moreover, it is clear from the Riemann-Hilbert analysis described that the asymptotic behavior can be determined to any desired order.  While a representation in terms of quantities derived from Riemann-Hilbert analysis is to some the end of the road, the goal of a representation in terms of other intrinsic quantities is fundamental.  In Subsections \ref{sec:ellipRSP} and \ref{Section R}, we will construct  $P^{\infty}(z,z_0)$ and $R^{(1)}(z,z_0)$ explicitly, and in the remaining Subsections we will throw the full force of the theory of Riemann surfaces at these formulae so as to arrive at a canonical expansion, which has not appeared in the literature before.  

\subsection{The elliptic Riemann surface and the construction of $P^\infty(z,z_0)$}
\label{sec:ellipRSP}
Instead of constructing $P^\infty(z)$ with normalization at $\infty$, we will directly construct the unique matrix function $P^{\infty}(z,z_0)$ which satisfies the RH conditions (\ref{RHPT1})-(\ref{RHPT3}), (\ref{RHPT5}), and the normalization
\begin{equation}
\label{normz-0}
P^{\infty}(z,z_0)=I+\bigO(z-z_0),\qquad\mbox{ as $z\to z_0$.}
\end{equation}
If $z_0= \infty$, the solution $P^\infty(z,z_0)$ is equal to $P^\infty(z)$. 
In order  to construct $P^{\infty}(z,z_0)$, we will make use of the elliptic Riemann surface 
\be
\label{X}
\mathcal{S}\equiv \{(z,y)\in \mathbb{C}^2\cup\{(\infty,\pm\infty)\}:\, f(z,y)=y^2-\prod_{j=1}^{4}(z-a_j)=0\}.
\ee
This surface consists of two sheets glued together along the cuts $[a_1,a_2]$ and $[a_3,a_4]$.
Given a point $Q=(z,y)$ on $\mathcal S$, we write $z=z(Q)$ for its projection on the complex plane.
For each  $z\in\mathbb C\cup\{\infty\}\backslash\{a_1,a_2,a_3,a_4\}$, the pre-image  of the projection map   gives two  points $(z,y)$ and $(z,-y)$, one on each sheet of the Riemann surface, which we will denote below by $z^{(1)}$ and $z^{(2)}$ and which we call conjugate points on the surface. We adopt the convention that $\infty^{(1)}=(\infty,+\infty)$, $\infty^{(2)}=(\infty,-\infty)$.


Let us introduce a basis of canonical cycles $\{\al,\beta\}$ as in Figure~\ref{fig1}.
The  holomorphic 1-form on $\mathcal{S}$ 
is given by $\frac{ dz}{y}$, and we define the  $\alpha$- and $\beta$-periods $\mathcal A$ and $\mathcal B$
of this 1-form by
\begin{equation}
{\cal A}=\oint_{\alpha}\frac{ dz}{y},\;\;\;\;\;\;\;
{\cal B}=\oint_{\beta}\frac{ dz}{y}.
\label{AB}
\end{equation}
Then the normalized  holomorphic 1-form is given by 
\begin{equation}
v=\frac{dz}{\mathcal{A}y},
\label{holo}
\end{equation}
and satisfies the  condition
$\oint_{\alpha} v=1$. 
Writing $
B= \frac{\mathcal B}{\mathcal A}$, we have $\Im B>0$.
The $\theta$-function with characteristics $\delta$, $\epsilon\in \mathbb{R}$ is given by (\ref{theta car}) 
and satisfies the periodicity properties
\begin{equation}
\begin{split}
\label{periodtheta}
&\theta\pq(z+1;B)=\theta\pq(z;B)e^{2\pi i \delta},\\ 
&\theta\pq(z+B ;B)=\theta\pq(z;B)e^{-i\pi( B+2z+2\epsilon)},
\end{split}
\end{equation}
and the heat equation
\begin{equation}
\label{heat}
\dfrac{\partial^2\theta\pq(z;B)}{\partial z^2}=4\pi i \dfrac{\partial \theta\pq(z;B)}{\partial B}.
\end{equation}
The RH solution $P^\infty(z,z_0)$ will be built out of the following object which is related to the Szeg\H{o} kernel 
\cite{fay}:
\begin{equation}
\label{szego}
\widehat S\pq(Q,Q_0)=\dfrac{1}{2}\left(\dfrac{\gamma(Q)}{\gamma(Q_0)}+
\dfrac{\gamma(Q_0)}{\gamma(Q)}
\right)
\dfrac{\theta\pq\left(\int\limits_{Q_0}^{Q}v
;B\right)}
{\theta\left(\int\limits_{Q_0}^{Q} v;B\right)}
\dfrac{\theta(0;B)}{\theta\pq(0;B)},
\end{equation}
where $Q_0,Q\in \mathcal{S}$, and where $\gamma(Q)$, $Q\in \mathcal{S}$, is defined as 
\begin{equation}
\label{gamma}
\gamma(Q)=\left(\dfrac{(z(Q)-a_{2})(z(Q)-a_4) }{(z(Q)-a_{1})(z(Q)-a_3)}\right)^{1/4},
\end{equation}
where $z(Q)$ is the projection of $Q$ to the complex plane. The function $\gamma$ is a multivalued function   on the surface  $\mathcal S$ and we define it  in such a way that 
\[
\gamma(z^{(2)})=-i\gamma(z^{(1)}),\qquad \gamma(\infty^{(1)})=1.
\]
On the cuts $(a_{1},a_{2}) $ and $(a_3,a_4)$ oriented from left to right, we have
\[
\gamma(z^{(2)})_+=\gamma(z^{(1)})_-,\qquad \gamma(z^{(2)})_-=-\gamma(z^{(1)})_+.
\]
\begin{remark}
The Szeg\"o kernel with characteristics $\pq$ is defined as 
\[
S\pq(Q,Q_0)=\widehat S\pq(Q,Q_0)\dfrac{\sqrt{dz(Q)dz(Q_0)}}{z(Q)-Z(Q_0)}
\]
and it is a $(\frac{1}{2},\frac{1}{2})$ form on $\mathcal{S}\times\mathcal{S}$ having only a simple pole on the  diagonal as $Q\to Q_0$ \cite{fay}. It follows that 
$\widehat S\pq(Q,Q_0)$ has only singularities of branching type at the points $(a_j,0)$, $j=1,2,3,4$.
\end{remark}
 We have the following properties:
\begin{itemize}
\item[(i)] As $Q_0\rightarrow Q$, 
\begin{multline}
\label{szegoexp}
\widehat S\pq(Q,Q_0)=1+
(\log \theta\pq(0;B))'\dfrac{v(Q)}{dz(Q)}(z(Q)-z(Q_0)) \\+\bigO\left((z(Q)-z(Q_0))^2\right),
\end{multline}
where $v$ is the  normalized holomorphic 1-form.
In the relation above and in the rest of the paper, we denote
\[
(\log\theta\pq(0;B))'=
\dfrac{\partial }{\partial z} \log\theta\pq(z;B)|_{z=0},
\]
and similarly for higher order derivatives with respect to $z$.
\item[(ii)] As $Q$ goes around $\al$ and $\bt$-cycles one has
\begin{align}
\label{Szego1}
&T_{\al}\widehat S\pq(\,Q, Q_0)=\e^{2\pi i \delta}\widehat S\pq(Q,Q_0), \\
\label{Szego2}
&T_{\beta}\widehat S\pq(Q,\,Q_0)=\e^{-2\pi i \epsilon}
\widehat S\pq(Q,Q_0),
\end{align}
where  $T_{\alpha}$ and $T_{\beta}$ are  the monodromy operators.
\end{itemize}

\medskip

The solution $P^{\infty}(z,z_0)$ of the matrix RH problem (\ref{RHPT1}), (\ref{RHPT2}), (\ref{RHPT3}), (\ref{RHPT5}), normalized to the identity  at $z_0$, can be expressed in terms of $\widehat S\pq(Q,Q_0)$. 
For $j,k=1,2$ and $z\in\mathbb C\setminus[a_1,a_4]$, $z_0\in \mathbb C\cup \{\infty\}$, 
we have  \cite{KK, DKI, korotkin}
\begin{equation}
\label{solRHP}
P_{jk}^{\infty}(z,z_0)=\widehat{S}\pq(z^{(k)},z_0^{(j)}),\qquad\mbox{ with $\epsilon=-N\Omega$,\quad $\delta=0$,}
\end{equation}
where $z^{(1)}\equiv (z,y)$ and $z_0^{(1)}\equiv (z_0,y_0)$ lie on the first sheet of the surface, and $z^{(2)}\equiv (z,-y)$ and $z_0^{(2)}\equiv (z_0,-y_0)$ on the second sheet.
The integrals of the form 
\[
\int\limits_{z_0^{(1)}}^{z^{(2)}}v\quad \mbox{or}\quad  \int\limits_{z_0^{(2)}}^{z^{(1)}}v
\] in (\ref{szego})
should be understood in the following way:
\[
\int\limits_{z_0^{(1)}}^{z^{(2)}}v=\int\limits_{z_0^{(1)}}^{a_{4}}v-\int\limits_{a_{4}}^{z^{(1)}}v,
\qquad
\int\limits_{z_0^{(2)}}^{z^{(1)}} v=-\int\limits_{z_0^{(1)}}^{a_{4}}v+\int\limits_{a_{4}}^{z^{(1)}} v,
\]
i.e.\ the point $a_4$ connects the parts of the integration path on the first and second sheet of the surface.
The explicit formula for $P^\infty(z,z_0)$ thus takes the form
\begin{equation}
\label{Pinfinity2cut}
\begin{split}
&
P^{\infty}(z,z_0)=\dfrac{\theta(0;B)}{2\theta(N\Omega;B)}\times\\
&\hskip -0.5cm\begin{pmatrix}
\left( \dfrac{\gamma(z)}{\gamma(z_0)}+\dfrac{\gamma(z_0)}{\gamma(z)}\right)
\dfrac{\theta\left(\int\limits_{z_0}^{z}v-N\Omega;B\right)}{\theta\left(\int\limits_{z_0}^{z}v;B\right)}
&
\dfrac{\dfrac{\gamma(z)}{\gamma(z_0)}-\dfrac{\gamma(z_0)}{\gamma(z)}  }{i}\dfrac{\theta\left(\int\limits_{z_0}^{a_4}v-\int\limits_{a_4}^{z}v
-N\Omega;B\right)} {\theta\left(\int\limits_{z_0}^{a_4}v-\int\limits_{z_0}^zv;B\right)}\\
\dfrac{\dfrac{\gamma(z_0)}{\gamma(z)}-\dfrac{\gamma(z)}{\gamma(z_0)}}{i}\dfrac{\theta\left(-\int\limits_{z_0}^{a_4}v+\int\limits_{a_4}^{z}v
-N\Omega;B\right)}
{\theta\left(\int\limits_{z_0}^{a_4}v-\int\limits_{z_0}^{z}v;B\right)}
&
\left(\dfrac{\gamma(z)}{\gamma(z_0)}+\dfrac{\gamma(z_0)}{\gamma(z)}\right)\dfrac{\theta\left(\int\limits_{z_0}^{z}v+N\Omega;B\right)}
{\theta\left(\int\limits_{z_0}^{z}v;B\right)}
\end{pmatrix},
\end{split}
\end{equation}
where we now   simply write    $z$ and $z_0$ as complex variables instead of points on the Riemann surface. The complex function $\gamma$ is now defined as
\begin{equation}
\label{gammaz}
\gamma(z)=\left(\dfrac{(z-a_{2})(z-a_4) }{(z-a_{1})(z-a_3)}\right)^{1/4},
\end{equation}
with branch cuts on $[a_1,a_2]\cup[a_3,a_4]$ and such that $\gamma(\infty)=1$.
%
When $z_0= \infty$, this formula for $P^\infty$ reduces to the one derived in \cite{DKMVZ2}{\footnote{ The equivalence of the above formula with the one appearing in \cite{DKMVZ2} is obtained by observing that the quantity $u_+(\infty)+d$ appearing in \cite{DKMVZ2} on page 1367, formula (4.45), is identically zero. }}.
From Fay's trisecant formula \cite{fay}, or alternatively from the RH conditions, it follows that 
\[
\det P^{\infty}(z,z_0)\equiv 1.
\]
Furthermore one verifies easily that
\be
\label{inverse}
P^{\infty}(z,z_0)^{-1}=P^{\infty}(z_0,z),
\ee
and that,
changing the base point, one has the relation
\be
\label{gauge}
P^{\infty}(z,z_1)=P^{\infty}(z_1,z_0)^{-1}P^{\infty}(z,z_0)=P^{\infty}(z_0,z_1)P^{\infty}(z,z_0).
\ee

\subsection{Construction of $R^{(1)}(z,z_0)$ \label{Section R}}
It was shown in \cite{DKMVZ2} that the matrix function $R=R(z)$ appearing in (\ref{as X}) is the solution to a small-norm RH problem similar to the one in Section \ref{section: R}, with jumps on a contour $\Sigma_R$ as shown in Figure \ref{JumpError}. 
\begin{figure}[htb!]
    \includegraphics[width=1.3\textwidth]{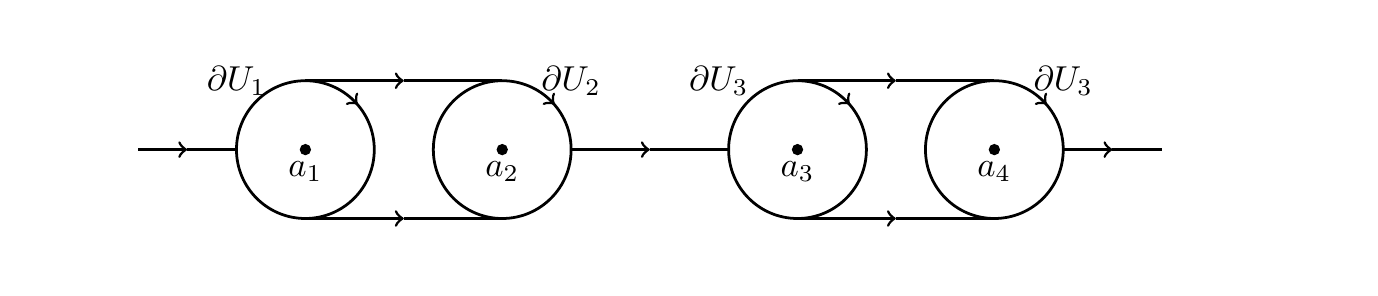}
 \caption{The jump contour $\Sigma_R$ for the error matrix  $R$.}
 \label{JumpError}
\end{figure}

The only jumps for $R$ that are not exponentially close to the identity matrix, are the ones on clockwise oriented circles $\partial U_1, \partial U_2, \partial U_3, \partial U_4$ surrounding the endpoints $a_1,a_2,a_3,a_4$.

\subsubsection*{RH problem for $R$ }
\begin{itemize}
\item[(a)] $R(z)$ is analytic for  $z\in \mathbb C\setminus \Sigma_R$.
\item[(b)] For $z\in \Sigma_R\setminus(\cup_{m=1}^4\partial U_m)$, we have $R_+(z)=R_-(z)(I+\bigO(e^{-cN}))$ as $N\to\infty$ with $c>0$; for $z\in\cup_{m=1}^4\partial U_m$, we have
\begin{equation}
\label{RHR}
R_+(z)=R_-(z)J_R(z),
\end{equation}
where $J_R$ has an asymptotic expansion of the form
\begin{equation}\label{JRexpansion}
J_R(z)=I+\sum_{j=1}^k\frac{\Delta^{(j)}(z)}{N^j}+\bigO(N^{-k-1}),\qquad N\to\infty,
\end{equation}
for any $k\in\mathbb N$,
and where $\Delta^{(j)}(z)$ depends on $N$ but is uniformly bounded as $N\to\infty$.
\item[(c)] As $z\to\infty$, we have
\begin{equation}\label{Rc}
R(z)=I+\bigO(z^{-1}).
\end{equation}
\end{itemize}
Explicit formulas for $\Delta^{(j)}$ can be found. 
For that purpose, we need to define auxiliary objects  that are related to the Airy RH problem as in Section~\ref{Airy}. Let 
\begin{equation}
\label{Ak}
A_j=\frac{1}{2} \left( \frac{2 }{3} \right)^{-j} \pmtwo{(-1)^{j}\left( s_{j} + r_{j} \right)}{i(s_{j}-r_{j})}{-i(-1)^{j}(s_{j}-r_{j})}{s_{j} + r_{j}} ,\qquad j=1,2,\ldots,
\end{equation}
where $s_j$ and $r_j$ are  defined in (\ref{sr}). For $j=1$, we have
\begin{equation}
\label{A1}
A_1=\frac{1}{8}\begin{pmatrix}\frac{1}{6}&i\\i&-\frac{1}{6}\
\end{pmatrix}.
\end{equation}
The  correction terms   $\Delta^{(j)}$,  $j=1,2,\ldots$  for the jump matrix $J_R$ can be expressed in terms of $A_j$ and are given by
\begin{equation}
\label{Delta1}
\Delta^{(j)}(z)=\begin{cases}
(-f_1(z))^{-\frac{3}{2}j}P^{(\infty)}(z)\sigma_3A_j\sigma_3P^{(\infty)}(z)^{-1},&z\in\partial U_1,\\
f_2(z)^{-\frac{3}{2}j}P^{(\infty)}(z)\mathcal{Q}(z)A_j \mathcal{Q}(z)^{-1}P^{(\infty)}(z)^{-1},&z\in\partial U_2,\\
(-f_3(z))^{-\frac{3}{2}j}P^{(\infty)}(z)\mathcal{Q}(z)\sigma_3A_j\sigma_3\mathcal{Q}(z)^{-1}P^{(\infty)}(z)^{-1},&z\in\partial U_3,\\
f_4(z)^{-\frac{3}{2}j}P^{(\infty)}(z)A_jP^{(\infty)}(z)^{-1},&z\in\partial U_4,
\end{cases}\end{equation}

where 
\begin{equation}
\label{Q}
 \mathcal{Q}(z)=e^{\pm \pi iN\Omega\sigma_3}, \qquad \mbox{ for $\pm\Im z>0$},
 \end{equation}
and where  $f_m$, $m=1,2,3,4$,  is a conformal map from a neighborhood $U_m$ of $a_m$ to a neighborhood of $0$, with $f_m'(a_m)>0$.
As $z\to a_m$, we have
\begin{equation}
\label{fm}
((-1)^mf_m(z))^{3/2}=\widehat\psi(a_m)(z-a_m)^{3/2}
+\dfrac{3}{2}\sum_{k=1}^M\widehat\psi^{(k)}(a_m)\dfrac{(z-a_m)^{k+\frac{3}{2}}}{k!(k+\frac{3}{2})}+ \bigO((z-a_m)^{M+\frac{5}{2}}),
\end{equation}
where $\widehat\psi(a_m)$,  $\widehat\psi'(a_m)$  and  $\widehat\psi^{(k)}(a_m)$ are related to the equilibrium density $\psi$ and the function $h$, defined in (\ref{psi}), in the following way:  
\begin{align}
\label{psihat}
&\widehat\psi(a_m)=\lim_{z\to a_m^+}h(z)\prod_{j\neq m}(z-a_j)^{1/2},
\\
&\widehat\psi^{(k)}(a_m)=\lim_{z\to a_m^+}\dfrac{\partial ^k}{\partial z^k}\left( h(z)\prod_{j\neq m}(z-a_j)^{1/2}\right),\quad m=1,2,3,4.
\end{align}
Here $z\to a_m^+$ denotes the limit as $z\to a_m$ from the upper half plane, and $(z-a_j)^{1/2}$ is the principal branch of the square root with branch cut on $(-\infty,a_j]$ and positive for $z>a_j$.
Note that 
\[\widehat\psi(a_2),\widehat\psi(a_4)>0,\qquad \widehat\psi(a_1), \widehat\psi(a_3)\in i\mathbb R^+.\]

The matrix 
$R$ has a large $N$ expansion of the form
\begin{equation}
R(z)=I+\sum_{j=1}^k N^{-j}R^{(j)}(z)+\bigO(N^{-k-1}),\qquad N\to\infty,
\end{equation}
for any $k$, where the coefficients $R^{(j)}$ depend on $N$ but remain 
uniformly bounded for $N$ large.
Combining (\ref{as R 2-cut}) with (\ref{JRexpansion}), we obtain jump relations for the coefficients $R^{(j)}$: we have
 \[
 R^{(k)}_+(z)=R^{(k)}_-(z)+\Delta^{(k)}+\sum_{j=1}^{k-1}R_-^{(j)}(z)\Delta^{(k-j)},\quad z\in\cup_{j=1}^4 \partial U_j.
 \]
 In particular, for $k=1$, this gives 
 \begin{equation}
R^{(1)}_+(z)-R_-^{(1)}(z)=\Delta^{(1)}(z)\qquad z\in\partial U_j,\quad j=1,\ldots, 4,
\end{equation}
where  $\Delta^{(1)}$ is given by (\ref{Delta1}). It is easily seen that $\Delta^{(1)}$ is a meromorphic function in $U_1,\ldots, U_4$ with double poles at $a_1,\ldots, a_4$.
In addition, by (\ref{Rc}), $R^{(1)}(z)$ tends to $0$ as $z\to\infty$.

The analyticity of $R^{(1)}$, its jump relations, and the vanishing at infinity define a scalar RH problem for $R^{(1)}$ with a unique solution
given by \begin{equation}
R^{(1)}(z)=\frac{1}{2\pi i}\sum_{m=1}^4\int_{\partial U_m}\frac{\Delta^{(1)}(\lb)}{\lb-z}d\lb.
\end{equation}
Since $\Delta^{(1)}(z)$ has double poles at $a_1,\ldots, a_4$, this yields
\begin{equation}
\label{R12cut}
R^{(1)}(z)=\sum_{m=1}^4\frac{1}{z-a_m}\res[\lb=a_m]\Delta^{(1)}(\lb)+\sum_{m=1}^4\frac{1}{(z-a_m)^2}\res[\lb=a_m]\left((\lb-a_m)\Delta^{(1)}(\lb)\right),
\end{equation}
for $z$ outside the disks $U_m$.
By (\ref{Rbase}) and (\ref{Pinfbase}), we can write
\begin{equation}
\label{R1b}
R^{(1)}(z,z_0)=\sum_{m=1}^4\frac{1}{z-a_m}\res[\lb=a_m]\Delta^{(1)}(\lb, z_0)+\sum_{m=1}^4\frac{1}{(z-a_m)^2}\res[\lb=a_m]\left((\lb-a_m)\Delta^{(1)}(\lb,z_0)\right),
\end{equation}
where we denote
\begin{eqnarray}
\label{Delta1b}
\Delta^{(j)}(\lambda,z_0)=\begin{cases}
\frac{1}{(-f_1(\lambda))^{\frac{3}{2}j}}P^{(\infty)}(\lambda,z_0)\sigma_3A_j\sigma_3P^{(\infty)}(\lambda,z_0)^{-1},&\lambda\in\partial U_1,\\
&\\
\frac{1}{f_2(\lambda)^{\frac{3}{2}j}}P^{(\infty)}(\lambda,z_0)\mathcal{Q}(\lambda)A_j \mathcal{Q}(\lambda)^{-1}P^{(\infty)}(\lambda,z_0)^{-1},&\lambda\in\partial U_2,\\
&\\
\frac{1}{(-f_3(\lambda))^{\frac{3}{2}j}}P^{(\infty)}(\lambda,z_0)\mathcal{Q}(\lambda)\sigma_3A_j\sigma_3\mathcal{Q}(\lambda)^{-1}P^{(\infty)}(\lambda,z_0)^{-1},&\lambda\in\partial U_3,\\
&\\
\frac{1}{f_4(\lambda)^{\frac{3}{2}j}}P^{(\infty)}(\lambda,z_0)A_jP^{(\infty)}(\lambda,z_0)^{-1},&\lambda\in\partial U_4.
\end{cases}
\end{eqnarray}
Although we will not need this, it is worth noting that a similar procedure applies to the higher order corrections $R^{(j)}(z)$, and that one can recursively find formulas for
\[
R^{(j)}(z,z_0)= P^{\infty}(z_0)^{-1}R^{(j)}(z)P^{\infty}(z_0),\qquad j>1.
\]

\subsection{Evaluation  of (\ref{identity BEH5})}
\label{sec:FirstEvalSec}
In this section we will evaluate the quantities appearing at the right hand side of the differential identity (\ref{identity BEH5}), in terms of other, more canonical quantities, namely the equilibrium measure, and objects from the theory of Riemann surfaces.  The final formula is described in Theorem \ref{theorem86}.
We start with the second term.
\begin{lemma}
\label{lemmaP}
The following identity is satisfied,
\begin{equation}
\label{expfirst}
\begin{split}
\res[z=\infty] \Tr\left(\left.\dfrac{dP^\infty(z,z_0)}{dz}\right|_{z_0=z}\sigma_3 z^k\right)&=-2\left(\log \theta(N\Omega;B)\right)' \res[z=\infty^1](z^k v(z^{1})/dz)\\
&=-2\left(\log \theta(N\Omega;B)\right)' \res[z=\infty^1](z^k \frac{1}{\mathcal A \sqrt{\mathcal{R}(z)}}),
\end{split}
 \end{equation}
 where $\mathcal A$ is the $\alpha$-period given in (\ref{AB}) and where $\mathcal{R}(z)$ is defined by (\ref{R}) and 
 \[
 \left(\log \theta(N\Omega;B)\right)':=\left.\dfrac{d}{dz}\log \theta(z+N\Omega;B)\right|_{z=0}.
 \]
 \end{lemma}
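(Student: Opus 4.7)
The plan is to exploit the explicit formula \eqref{Pinfinity2cut} for $P^\infty(z,z_0)$ together with the normalization $P^\infty(z,z_0)=I+\bigO(z-z_0)$, so that the quantity $\partial_z P^\infty(z,z_0)|_{z_0=z}$ is read off directly from the linear term in the local Taylor expansion around the diagonal. Since $\sigma_3$ only retains the diagonal entries, the off-diagonal entries of \eqref{Pinfinity2cut} play no role, and the task reduces to a local expansion of $P^\infty_{11}$ and $P^\infty_{22}$ near $z_0=z$.

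First I would introduce the local parameter $\xi=\int_{z_0}^z v$, so that $\xi=(z-z_0)/(\mathcal{A}\sqrt{\mathcal{R}(z_0)})+\bigO((z-z_0)^2)$ by \eqref{holo}, and observe that the prefactor $\tfrac12(\gamma(z)/\gamma(z_0)+\gamma(z_0)/\gamma(z))$ equals $1+\bigO((z-z_0)^2)$. Recalling that $\theta(\cdot;B)$ is even in its first argument (so $\theta'(0;B)=0$ and $\theta'(-N\Omega;B)=-\theta'(N\Omega;B)$), the Taylor expansions
\begin{align*}
\frac{\theta(\xi-N\Omega;B)}{\theta(\xi;B)}\,\frac{\theta(0;B)}{\theta(N\Omega;B)}&=1-(\log\theta(N\Omega;B))'\,\xi+\bigO(\xi^2),\\
\frac{\theta(\xi+N\Omega;B)}{\theta(\xi;B)}\,\frac{\theta(0;B)}{\theta(N\Omega;B)}&=1+(\log\theta(N\Omega;B))'\,\xi+\bigO(\xi^2),
\end{align*}
together with the above expression for $\xi$, yield
\begin{align*}
\left.\partial_zP^\infty_{11}(z,z_0)\right|_{z_0=z}&=-\frac{(\log\theta(N\Omega;B))'}{\mathcal{A}\sqrt{\mathcal{R}(z)}},\\
\left.\partial_zP^\infty_{22}(z,z_0)\right|_{z_0=z}&=\;\;\,\frac{(\log\theta(N\Omega;B))'}{\mathcal{A}\sqrt{\mathcal{R}(z)}}.
\end{align*}

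Subtracting and recognizing $1/(\mathcal{A}\sqrt{\mathcal{R}(z)})$ as the coefficient of $dz$ in the normalized holomorphic differential $v(z^{(1)})$ on the first sheet, I obtain
\[
\operatorname{Tr}\!\left(\left.\tfrac{d}{dz}P^\infty(z,z_0)\right|_{z_0=z}\sigma_3\right)=-\frac{2(\log\theta(N\Omega;B))'}{\mathcal{A}\sqrt{\mathcal{R}(z)}},
\]
and multiplying by $z^k$ and taking the residue at infinity gives both equalities asserted in \eqref{expfirst}. The main (and only) delicate point is keeping track of the signs and the evenness of $\theta$ when matching $(\log\theta[^0_{-N\Omega}](0;B))'$ with $\pm(\log\theta(N\Omega;B))'$; everything else is a straightforward local expansion of the formula \eqref{Pinfinity2cut}.
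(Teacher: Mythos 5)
Your proof is correct and follows essentially the same route as the paper: the paper applies the abstract Szeg\H{o}-kernel expansion (\ref{szegoexp}) directly to the diagonal entries $P^\infty_{jj}(z,z_0)=\widehat S\pq(z^{(j)},z_0^{(j)})$, while you re-derive that same first-order expansion from the explicit $\theta$-function formula (\ref{Pinfinity2cut}), using the evenness of $\theta$ to produce the sign flip that the paper absorbs into the relation $(\log\theta[^0_{-N\Omega}](0;B))'=-(\log\theta(N\Omega;B))'$. Both arrive at $\mathrm{Tr}\bigl(\partial_zP^\infty(z,z_0)|_{z_0=z}\sigma_3\bigr)=-2(\log\theta(N\Omega;B))'\,v(z^{(1)})/dz$, and the rest is taking the residue.
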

 \begin{proof}
To prove this formula, we only need to consider the diagonal entries of $P^{\infty}(z,z_0)$. As 
 $z_0\rightarrow z$, by (\ref{szegoexp}),
\[
P^{\infty}_{jj}(z,z_0)=1-(z-z_0)\left(\log \theta(N\Omega;B)\right)'\dfrac{v(z^{(j)})}{d z}+\bigO((z-z_0)^2,\;\;j=1,2,
\]
where $v(z^{(2)})=-v(z^{(1)})$ is the holomorphic differential defined in (\ref{holo}).

 It follows that 
 \[
 \mbox{Tr}\left(\left.\dfrac{dP^\infty(z,z_0)}{dz}\right|_{z_0=z}\sigma_3 z^k\right)=-2\left(\log \theta(N\Omega;B)\right)' z^k\frac{v(z^{(1)})}{dz},
 \]
 and this immediately implies (\ref{expfirst}).
\end{proof}
The task of evaluating the term 
\[
\mbox{Tr} \left(\left.\dfrac{d R^{(1)}(z,z_0)}{dz}\right|_{z_0=z}\sigma_3 z^k\right)
\]
appearing in (\ref{identity BEH5}) is considerably more involved.  Th goal is to establish an expression in terms of canonical quantities from the theory of Riemann surfaces, so that comparisons to formulae appearing elsewhere in the literature can be validated (or invalidated).  Our work relies on two so-called Fay identities, (\ref{szegoBergmann}) and (\ref{fay2}).  In order to explain these identities, we need to introduce some classical quantities and  identities in the theory of Riemann surfaces.

The fundamental  symmetric bi-differential,  introduced in \cite{fay} and also called Bergman kernel in some literature, of our Riemann surface,  has an explicit expression which is, 
for \[Q,Q_0\in\mathcal{S}\setminus\{\infty^1,\infty^2,(a_1,0),(a_2,0),(a_3,0),(a_4,0)\},\] given by 
\begin{equation}
\label{bergmann}
w(Q,Q_0)=\dfrac{1}{4}\left(\dfrac{\gamma(Q)}{\gamma(Q_0)}+
\dfrac{\gamma(Q_0)}{\gamma(Q)}\right)^2
\dfrac{dz(Q)dz(Q_0)}{(z(Q)-z(Q_0))^2}-\left(\log \theta(\boldsymbol{0};B)\right)''v(Q)v(Q_0),
\end{equation}
where $\gamma$ has been defined in (\ref{gamma}).
The bi-differential $w(Q,Q_0)$ is analytic for all $Q,Q_0$  except for a  double pole on the diagonal 
as $Q\to Q_0$.

We have the well-known relations \cite{fay}
\begin{align}
&\label{intbeta}
\int_{\beta}w(Q,Q_0)=2\pi i v(Q),\\&
\label{intQ1Q2}
\int_{Q_2}^{Q_1}w(Q,Q_0)=\omega_{Q_1Q_2}(Q),
\end{align}
where $\omega_{Q_1Q_2}(Q)$ is the normalized  third kind differential with simple poles at $Q_1$ and $Q_2$ with residues $\pm 1$ respectively.

The  normalized second kind differential on the elliptic curve $\mathcal{S}$ with poles at $(\infty,\pm\infty)$ of order $k+1$  is given by
\begin{equation}
\label{sigmak}
\sigma_k(z,y)=\dfrac{P_k(z)dz}{y},
\end{equation}
where $P_k(z)$ is a monic polynomial of  degree $k+1$  in $z$ which is  determined uniquely by the conditions 
\[
\sigma_k(z,y)=\pm(z^{k-1}+\bigO(z^{-2}))dz,\quad \mbox{as}\quad (z,y)\rightarrow (\infty,\pm\infty),
\]
and
\[
\int\limits_{a_{2}}^{a_{3}}\sigma_k=0.
\]
In order to write $\sigma_k(z,y)$ using the fundamental symmetric bi-differential  it is  sufficient to observe that 
\begin{equation}
\label{sigmaberg}
\dfrac{1}{2}\sigma_k(z,y)+\dfrac{1}{2}z^{k-1}dz=-\dfrac{1}{k}\res[(\lb,\eta)=(\infty,+\infty)]\left(\lb^{k}w((z,y), (\lb,\eta))\right).
\end{equation}

 The following important identities connect the product of Szeg\"o kernels to the fundamental symmetric bi-differential. First we have \cite[Corollary 2.12]{fay} 
\begin{multline}
\label{szegoBergmann}
\widehat S\pq(Q,Q_0)\widehat S\pq(Q_0,Q)\frac{dz(Q_0)dz(Q)}{(z(Q)-z(Q_0))^2}\\=w(Q,Q_0)+\left(\log \theta\pq(\boldsymbol{0};B)\right)''v(Q)v(Q_0),
\end{multline}
and secondly \cite[Proposition 2.10]{fay} 
\begin{multline}
\label{fay2}
-\widehat{S}\pq(Q,P_0)\widehat{S}\pq(Q_0,Q)\dfrac{dz(Q)}{(z(Q)-z(P_0))(z(Q_0)-z(Q))}=\dfrac{\widehat{S}\pq(Q_0,P_0)}{z(Q_0)-z(P_0)}\\
\times \left(\omega_{Q_0P_0}(Q)+\left[\left(\log\theta\pq\left(\int\limits_{P_0}^{Q_0}v;B\right)\right)'-\left(\log\theta\pq(0;B)\right)'\right]v(Q)\right),
\end{multline}
where $\omega_{Q_0,P_0}(P)$ is the third kind normalized differential  with simple poles at the points $P_0$ and $Q_0$ with residues $\pm 1 $.
In the confluent limit $Q_0\rightarrow P_0$ the second identity reduces to the first one.
\begin{remark}
In (\ref{fay2}), we observe that the quantity
\begin{equation}
\label{pidentity}
\omega_{Q_0P_0}(Q)+ \left[\left(\log\theta\pq\left(\int\limits_{P_0}^{Q_0}v;B\right)\right)'-\left(\log\theta\pq(0;B)\right)'\right]v(Q)
\end{equation}
is  a single-valued function of $Q_0$ and $P_0$. Indeed, defining $d_{Q_0}$ as the differentiation with respect to $Q_0$,  by (\ref{intQ1Q2}),
one has 
\[
d_{Q_0}(\omega_{Q_0,P_0}(Q))=d_{Q_0}\int\limits_{P_0}^{Q_0}w(Q,\tilde{Q})=w(Q,Q_0),
\]
so that, by  (\ref{intbeta}), the integral in $Q_0$ of the above identity along the  $\alpha$- and $\beta$-cycles gives
\[
\int_{\beta}d_{Q_0}(\omega_{Q_0P_0}(Q))=\int_{\beta}w(Q,Q_0)=2\pi i v(Q),\quad \int_{\alpha}d_{Q_0}(\omega_{Q_0P_0}(Q))=\int_{\alpha}w(Q,Q_0)=0.
\]
Therefore as $Q_0\mapsto Q_0+\beta$, we have
\[
\omega_{Q_0,P_0}(Q)\mapsto \omega_{Q_0,P_0}(Q)+2\pi i v(Q),
\]
and as $Q_0\mapsto Q_0+\alpha$, we have
\[
\omega_{Q_0,P_0}(Q)\mapsto \omega_{Q_0,P_0}(Q).
\]
For the second term in (\ref{pidentity}), as $Q_0\mapsto Q_0+\alpha$  the  logarithm of the  $\theta$-function does not change while as $Q_0\mapsto Q_0+\beta$, by (\ref{periodtheta}),  
\[
\left(\log\theta\pq\left(\int\limits_{P_0}^{Q_0}v;B\right)\right)'\mapsto \left(\log\theta\pq\left(\int\limits_{P_0}^{Q_0}v;B\right)\right)'-2\pi i. 
\]
It follows that (\ref{pidentity})  is a single-valued function of $Q_0$. The same consideration applies to the point $P_0$.
\end{remark}

Introduce the following quantities which contain information about the fundamental bi-differential kernel near the branch points:
\begin{equation}
\begin{split}
\label{Bevaluation}
&\widehat{w}(a_j,z)=\lim_{\lambda\to a_j}\dfrac{(\lambda-a_j)^{1/2}w(\lambda,z^{(1)})}{d\lambda dz},\\
& \widehat{w}'(a_j,z)=\lim_{\lambda\to a_j}\dfrac{\partial }{\partial \lambda}\left[(\lambda-a_j)^{1/2}\left(\dfrac{w(\lambda,z^{(1)})}{d\lambda dz }-\dfrac{1}{2(z-\lambda)^2}\right)\right].
\end{split}
\end{equation}
Similarly, for the normalized holomorphic differential, we define
\be
\label{hatv}
\widehat{v}(a_j)=\lim_{\lambda\to a_j}(\lambda-a_j)^{1/2}\dfrac{v(\lambda)}{d\lambda}, \qquad \widehat{v}'(a_j)=\lim_{\lambda\to a_j}\dfrac{\partial }{\partial \lambda}\left[(\lambda-a_j)^{1/2}\dfrac{v(\lambda)}{d\lambda }\right].
\end{equation}
Recall also the definition of $\widehat\psi(a_m)$ in (\ref{psihat}).
A last quantity we need to introduce is the
 so-called  projective connection $S_B(a_j)$ given by
\begin{eqnarray}
\label{BPC}
S_B(a_j)
&
=&
\dfrac{3}{2}\res[\lambda=a_j]\left(\dfrac{d}{d\lambda}\log\prod_{i=1}^{2}\dfrac{\lambda-a_{2i}}{\lambda-a_{2i-1}}\right)^2-24\left(\log \theta(\boldsymbol{0};B)\right)''
\res[\lambda=a_j]v(\lambda)^2\\
&=&\label{BPC2}3\sum_{i\neq j}(-1)^{i+j}\frac{1}{a_j-a_i}
-24\left(\log \theta(\boldsymbol{0};B)\right)''\widehat v(a_j)^2.
\end{eqnarray}

We are now ready to evaluate the second term at the right hand side of  (\ref{identity BEH5}).
\begin{theorem}
\label{theo82}
The following identity is satisfied,
 \begin{multline}
 \label{finalexp}
 \dfrac{1}{2}\res[z=\infty]\left(\Tr \left(\left.\dfrac{d R^{(1)}(z,\lambda_0)}{dz}\right|_{\lambda_0=z}\sigma_3\right)z^k\right)\\=
 \dfrac{1}{24}\sum_{j=1}^4\dfrac{\res[z=\infty^1][z^k\widehat{w}'(a_j,z)]}{\widehat{\psi}(a_j))}\\
\quad -\dfrac{1}{2}\sum_{j=1}^4\left(\dfrac{1}{4}\dfrac{\widehat{\psi}'(a_j)}{\widehat{\psi}(a_j)}-\dfrac{1}{6}S_B(a_j)-4\dfrac{\theta''\left( N \Omega;B\right)}{\theta\left( N \Omega;B\right)}\widehat{v}(a_j)^2\right)\dfrac{\res[z=\infty^1][z^k\widehat{w}(a_j,z)]}
{\widehat{\psi}(a_j)}\\
\quad +\dfrac{1}{2}\left(\log\theta\left( N \Omega;B\right)\right)''\res[z=\infty][z^k\frac{v(z^{(1)})}{dz}]
 \sum_{j=1}^4\dfrac{\frac{1}{6}S_B(a_j)\widehat{v}(a_j)+\frac{1}{12}\widehat{v}'(a_j)-\dfrac{1}{4}\dfrac{\widehat{\psi}'(a_j)}{\widehat{\psi}(a_j)}\widehat{v}(a_j)}
{\widehat{\psi}(a_j)}\\
+\dfrac{2}{3}\left(\dfrac{\left(\theta\left( N \Omega;B\right)\right)'''}{\theta\left( N \Omega;B\right)}\right)'\res[z=\infty][z^k\frac{v(z^{(1)})}{dz}]
\sum_{j=1}^4\dfrac{\widehat{v}(a_j)^3}{\widehat{\psi}(a_j)}.
\end{multline}
\end{theorem}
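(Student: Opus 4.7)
The plan is to start from the explicit rational representation \eqref{R1b} of $R^{(1)}(z,\lambda_0)$ in terms of residues of $\Delta^{(1)}(\lambda,\lambda_0)$ at the four branch points $a_1,\dots,a_4$, substitute into the trace, take $\lambda_0\to z$, and then compute the residue at infinity by isolating Laurent-expansion data at each $a_j$. The bridge between the Riemann--Hilbert quantities and the intrinsic Riemann-surface quantities on the right-hand side will be the two Fay identities \eqref{szegoBergmann} and \eqref{fay2}.

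First, differentiating \eqref{R1b} with respect to $z$ at fixed $\lambda_0$ and substituting $\lambda_0=z$ gives a rational expression in $z$ whose only singularities off of $\infty$ are at $a_1,\dots,a_4$; hence $\res[z=\infty]$ of $\mathrm{Tr}(\sigma_3 \cdot)z^k$ equals the negative sum over $j=1,\dots,4$ of the residues at $a_j$. Near $a_m$, $\Delta^{(1)}(\lambda,\lambda_0)$ has the form $g_m(\lambda)\,P^{\infty}(\lambda,\lambda_0)\,M_m\,P^{\infty}(\lambda,\lambda_0)^{-1}$, where $g_m(\lambda)=(\pm f_m(\lambda))^{-3/2}$ admits by \eqref{fm} a Laurent expansion at $a_m$ whose leading coefficient is $1/\widehat\psi(a_m)$ and whose next coefficient produces $\widehat\psi'(a_m)/\widehat\psi(a_m)$. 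The matrix $M_m$ is one of $A_1$, $\sigma_3 A_1\sigma_3$, or their conjugates by $\mathcal Q(\lambda)=e^{\pm\pi i N\Omega \sigma_3}$. In all four cases, direct computation shows that $\mathrm{Tr}\left(\sigma_3\, P^\infty M_m (P^\infty)^{-1}\right)$ depends only on $P^\infty$ through bilinear combinations of its entries, i.e.\ through $P^\infty_{12}P^\infty_{21}$, $P^\infty_{11}P^\infty_{22}$, and the diagonal-off-diagonal cross terms carrying the $\tfrac{1}{6}$ coefficient from \eqref{A1}.

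Next, by \eqref{solRHP} each such bilinear combination equals a product $\widehat S[^{\,\,0}_{-N\Omega}](\lambda^{(i)},\lambda_0^{(j)})\,\widehat S[^{\,\,0}_{-N\Omega}](\lambda_0^{(j')},\lambda^{(i')})$ of Szeg\H o kernel entries, and the Fay identities \eqref{szegoBergmann} and \eqref{fay2} recast these as
\[
w(\lambda^{(i)},\lambda_0^{(j)}) + c_1(N\Omega,B)\,v(\lambda^{(i)})v(\lambda_0^{(j)}) + c_2(N\Omega,B)\,\omega_{\lambda_0^{(j)}\lambda^{(i)}}(\cdot)\cdot \text{(\dots)},
\]
where $c_1,c_2$ are polynomial expressions in the shifted $\theta$-derivatives $(\log\theta(N\Omega;B))''$ and $\left(\log\theta(\cdot;B)\right)'''/\theta(\cdot;B)$ (the shift arises because $\delta=0$, $\epsilon=-N\Omega$). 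In the confluent limit $\lambda_0\to\lambda$ the third-kind-differential piece collapses and one extracts precisely the combination of $w$ and $(\log\theta(N\Omega;B))''v(\lambda)v(\lambda_0)$ appearing in \eqref{bergmann}, while the $\theta'''/\theta$ contribution survives as a genuine confluent term.

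Finally, I insert these expressions and perform the Laurent expansion near $\lambda=a_j$. Using \eqref{Bevaluation}--\eqref{hatv}, the leading and subleading branch-point data of $w$ reproduce $\widehat w(a_j,z)$ and $\widehat w'(a_j,z)$, while those of $v$ give $\widehat v(a_j)$ and $\widehat v'(a_j)$. Combining with the Laurent expansion of $g_m(\lambda)=1/\widehat\psi(a_m)(\lambda-a_m)^{3/2}$ (which yields the prefactors $1/\widehat\psi(a_j)$ and the $\widehat\psi'(a_j)/\widehat\psi(a_j)$ correction) gives after summing over $m$ the branch-point contributions of \eqref{finalexp}. The projective-connection coefficient $S_B(a_j)$ emerges precisely from the $O((\lambda-a_m)^2)$ term in the expansion of $w(\lambda,a_m^{(1)})$ through the standard identity \eqref{BPC}--\eqref{BPC2}, which expresses the subleading regular part of the bi-differential at a branch point in terms of $S_B(a_j)$ minus the $(\log\theta(\mathbf 0;B))''\widehat v(a_j)^2$ correction coming from \eqref{bergmann}.

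The main obstacle is the bookkeeping in the last step: collecting the various contributions coming from (i) the diagonal $\tfrac{1}{6}$-coefficient of $A_1$, (ii) the off-diagonal $i$-coefficient, and (iii) the $\mathcal Q$-conjugation for the pair $(a_2,a_3)$, and then matching the resulting sum of branch-point Laurent coefficients with the canonical $S_B(a_j)$-expression \eqref{BPC2}. The algebra is routine once the Fay identities are applied, but the correct identification of $S_B(a_j)$ requires carrying the expansion of $w$ at the branch point to the order that produces the sum $3\sum_{i\neq j}(-1)^{i+j}(a_j-a_i)^{-1}$ in \eqref{BPC2}; this is the step where one must take greatest care.
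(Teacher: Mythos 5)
Your overall strategy mirrors the paper's: start from the explicit residue formula (\ref{R1b}) for $R^{(1)}(z,z_0)$, split $\Tr(\Delta^{(1)}\sigma_3)$ into a ``diagonal'' part (coming from the $\frac16$ entries of $A_1$) and an ``off-diagonal'' part (coming from the $i$ entries), apply the Fay identity (\ref{szegoBergmann}) to the diagonal part and (\ref{fay2}) to the off-diagonal part, and then expand near the branch points to produce $\widehat w,\widehat w',\widehat v,\widehat v',\widehat\psi,\widehat\psi'$ and $S_B(a_j)$. This is precisely the paper's decomposition into $\Sigma_1$ and $\Sigma_2$ in (\ref{Sigma1})--(\ref{Sigma2}) and the subsequent expansions (\ref{expansionDelta1})--(\ref{expansionDelta2}).

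However, your opening step contains a real error. After setting $\lambda_0=z$, the expression is \emph{not} a rational function of $z$: the coefficients $\res[\lambda=a_m]\Delta^{(1)}(\lambda,z)$ depend transcendentally on $z$ through $P^{\infty}(\lambda,z)$, which involves $\theta$-functions of Abelian integrals with $z$ as an endpoint, and the resulting quantities $\widehat w(a_j,z)$, $\widehat w'(a_j,z)$ are meromorphic only on the Riemann surface $\mathcal S$, with quarter-root branch behaviour in the $z$-plane. Moreover the $a_j$ are branch points, not poles in $z$. Hence the shortcut ``$\res[z=\infty]=-\sum_j\res[z=a_j]$'' is not available; notice also that it silently conflates the residue in $\lambda$ at $a_j$ (which appears in (\ref{eqR12}) and produces the local data you use in the next steps) with a residue in $z$. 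The paper keeps these two operations strictly separate throughout: one first evaluates the $\lambda$-residues at $a_j$ as in (\ref{expansionDelta1})--(\ref{expansionDelta2}), and only then computes $\res[z=\infty^1]$ of the resulting $z$-dependent objects. Your later steps implicitly do the same, so the argument can be repaired by deleting the rationality claim and working directly with the iterated residue in (\ref{eqR12}).

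A smaller point: you attribute $S_B(a_j)$ to the $\bigO((\lambda-a_m)^2)$ term of the bi-differential $w$; the paper instead extracts it from the expansion (\ref{expansion14}) of the Szeg\H o kernel's $\theta$-ratio near the branch point, combining the rational sum $\sum_{k\neq j}\frac{(-1)^{k+j}}{a_j-a_k}$ with the $(\log\theta(0;B))''\widehat v(a_j)^2$ correction, matching the definition (\ref{BPC2}). The two are consistent via (\ref{szegoBergmann}), but the paper's route is the one actually followed and is the one that must be taken when the off-diagonal Fay identity (\ref{fay2}) is in play.
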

\begin{proof}
Observe by (\ref{R1b}) that 
\begin{multline}
\label{eqR12}
\frac{1}{2}\res[z=\infty]\Tr\left(\left.\dfrac{dR^{(1)}(z,z_0)}{dz}\right|_{z_0=z}z^k\sigma_3\right)\\=
-\frac{1}{2}\res[z=\infty]
\left(\sum_{j=1}^4\frac{z^k}{(z-a_j)^2}\res[\lb=a_j]\left[\left(1+2\dfrac{\lambda-a_j}{z-a_j}\right)\Tr\left(\Delta^{(1)}(\lb, z)\sigma_3\right)\right)\right].
\end{multline}

From the definitions of $\Delta^{(1)}$ and $A_1$ in (\ref{A1})-(\ref{Delta1}),  one has 
\begin{equation}
\label{TrDelta11}
\mbox{Tr}(\Delta^{(1)}(\lb,z)\sigma_3)=\Sigma_1(\lb,z)+\Sigma_2(\lb,z),
\end{equation}
with 
\begin{multline}
\label{Sigma1}
\Sigma_1(\lb,z)=\dfrac{1}{48}\dfrac{1}{((-1)^jf_j(\lb))^{\frac{3}{2}}}\left(P^{\infty}_{11}(\lb,z)P^{\infty}_{11}(z,\lb)-P^{\infty}_{12}(\lb,z)P^{\infty}_{21}(z,\lb)\right.\\
+\left. P^{\infty}_{22}(\lb,z)P^{\infty}_{22}(z,\lb)-P^{\infty}_{21}(\lb,z)P^{\infty}_{12}(z,\lb)\right),
\end{multline} for $z$ near $a_j$,
and for $\Im\lambda\geq 0$, we have
\begin{equation}
\label{Sigma2}
\Sigma_2(\lb,z)=\begin{cases}-\dfrac{i}{4}\dfrac{1}{(-f_1(\lb))^{\frac{3}{2}}}\left(P^{\infty}_{12}(\lb,z)P^{\infty}_{11}(z,\lb)-P^{\infty}_{21}(\lb,z)P^{\infty}_{22}(z,\lb)\right),\\
\dfrac{i}{4}\dfrac{1}{f_2(\lb)^{\frac{3}{2}}}\left(P^{\infty}_{12}(\lb,z)P^{\infty}_{11}(z,\lb)e^{-2\pi iN\Omega}-P^{\infty}_{21}(\lb,z)P^{\infty}_{22}(z,\lb)e^{2\pi iN\Omega}\right),\\
-\dfrac{i}{4}\dfrac{1}{(-f_3(\lb))^{\frac{3}{2}}}\left(P^{\infty}_{12}(\lb,z)P^{\infty}_{11}(z,\lb)e^{-2\pi iN\Omega}-P^{\infty}_{21}(\lb,z)P^{\infty}_{22}(z,\lb)e^{2\pi iN\Omega}\right),\\
\dfrac{i}{4}\dfrac{1}{f_4(\lb)^{\frac{3}{2}}}\left(P^{\infty}_{12}(\lb,z)P^{\infty}_{11}(z,\lb)-P^{\infty}_{21}(\lb,z)P^{\infty}_{22}(z,\lb)\right),
\end{cases}
\end{equation}
where we used the fact that $P^{\infty}(\lb,z)^{-1}=P^{\infty}(z,\lb)$, and where the first line is valid for $\lambda$ near $a_1$, the second near $a_2$, etc.

For each of the matrix products at the right hand side of  (\ref{Sigma1}), the identity (\ref{szegoBergmann}) which connects the Szeg\"o kernel and the fundamental bi-differential. For instance,
\begin{multline*}
P^{\infty}_{kj}(\lb,z)P^{\infty}_{jk}(z,\lb)\\=\left(w(\lambda^{(j)},z^{(k)})+\left(\log\theta(N\Omega)\right)''v(\lambda^{(j)})v(z^{(k)})\right)\dfrac{(\lambda-z)^2}{d\lambda dz}, \;\;j,k=1,2.
\end{multline*}

Using the expression of the fundamental bi-differential in (\ref{bergmann}), by (\ref{Bevaluation}), one obtains expansions of the following form as $\lb\to a_j$:
\begin{equation}
 \label{expansionW}
 \begin{split}
&\dfrac{ w(\lambda^{(1)},z)}{d\lambda dz}=\dfrac{\widehat{w}(a_j,z)}{(\lambda-a_j)^{1/2}}+\dfrac{1}{2(z-a_j)^2}+\widehat{w}'(a_j,z)(\lambda-a_j)^{1/2}+\bigO(\lambda-a_j),\\
&\dfrac{ w(\lambda^{(2)},z)}{d\lambda dz}=-\dfrac{\widehat{w}(a_j,z)}{(\lambda-a_j)^{1/2}}+\dfrac{1}{2(z-a_j)^2}-\widehat{w}'(a_j,z)(\lambda-a_j)^{1/2}+\bigO(\lambda-a_j),
\end{split}
\end{equation}
and 
\begin{equation}
\label{expansionholo}
\dfrac{v(\lambda^{(1)})}{d\lambda}=-\dfrac{v(\lambda^{(2)})}{d\lambda}=\dfrac{\widehat{v}(a_j)}{(\lambda-a_j)^{1/2}}+\widehat{v}'(a_j)(\lambda-a_j)^{1/2}+\bigO(\lambda-a_j).
\end{equation}
Combining  (\ref{fm}),  (\ref{TrDelta11}), (\ref{expansionW}), and (\ref{expansionholo}), we obtain the following relation for the residues of $\Sigma_1(\lb,z)$ defined in (\ref{Sigma1}),
\begin{equation}
\label{expansionDelta1}
\begin{split}
&-\sum_{j=1}^4\frac{1}{(z-a_j)^2}\res[\lb=a_j]\Sigma_1(\lb, z)-2\sum_{j=1}^4\frac{1}{(z-a_j)^3}\res[\lb=a_j]\left((\lb-a_j)\Sigma_1(\lb,z)\right)\\
&\qquad =-\dfrac{1}{12}\dfrac{1}{\widehat{\psi}(a_j)}\left[\widehat{w}'(a_j,z^{(1)})+\left(\log\theta\left(N\Omega\right)\right)''\widehat{v}'(a_j)\frac{v(z^{(1)})}{dz}
\right]\\
&\qquad\qquad\qquad+\dfrac{1}{20}\dfrac{\widehat{\psi}'(a_j)}{\widehat{\psi}(a_j)^2}\left[\widehat{w}(a_j,z^{(1)})+\left(\log\theta\left(N\Omega\right)\right)''\widehat{v}(a_j)\frac{v(z^{(1)})}{dz}\right].
\end{split}
\end{equation}

In order  to evaluate the  residues of products of the form $P^{\infty}_{12}(\lb,z)P^{\infty}_{11}(z,\lb)$  in  $\Sigma_2(\lb,z)$ defined in (\ref{Sigma2}), we use (\ref{fay2}) which gives
 \begin{multline}
 \label{fay2a}
P^{\infty}_{12}(\lb,z)P^{\infty}_{11}(z,\lb)=\lim_{\eta\to\lambda}\dfrac{\widehat{S}[^{\;\;\;\;0}_{-N\Omega}](\eta^{(2)},\lambda^{(1)})}{\eta-\lambda}
\dfrac{(\lambda-z)^2}{dz}
\times\\
\left[v(z^{(1)})\left(\left(\log\theta\left(\int_{\lb^{(1)}}^{\lb^{(2)}}v-N\Omega\right)\right)'-(\log\theta\left(-N\Omega\right))'\right)
+\omega_{\lb^{(2)}\lb^{(1)}}(z^{(1)})\right],\end{multline}
and similarly
 \begin{multline}
 \label{fay2b}
P^{\infty}_{21}(\lb,z)P^{\infty}_{22}(z,\lb)=\lim_{\eta\to\lambda}\dfrac{\widehat{S}[^{\;\;\;\;0}_{-N\Omega}](\eta^{(1)},\lambda^{(2)})}{\eta-\lb}
\dfrac{(\lambda-z)^2}{dz}
\times\\
\left[v(z^{(2)})\left(\left(\log\theta\left(\int^{\lb^{(1)}}_{\lb^{(2)}}v-N\Omega\right)\right)'-(\log\theta\left(-N\Omega\right))'\right)+\omega_{\lb^{(1)}\lb^{(2)}}(z^{(2)})\right],
\end{multline}
where $\omega_{\lb^{(1)}\lb^{(2)}}(z^{(j)})$, $j=1,2$, stands as before for the normalized third kind differential, in this case with poles at two conjugate points $\lb^{(1)}$ and $\lb^{(2)}$.
Next, using (\ref{intQ1Q2}), we have the following identities for a third kind differential with poles at two conjugate points
 $\lambda^{(2)}$ and $\lambda^{(1)}$, 
 
\begin{equation}
\label{thirdkindsym}
\omega_{\lambda^{(1)}\lambda^{(2)}}(P)=-\omega_{\lambda^{(2)}\lambda^{(1)}}(P),\qquad \omega_{\lambda^{(2)}\lambda^{(1)}}(z^{(1)})= \omega_{\lambda^{(1)}\lambda^{(2)}}(z^{(2)}),
\end{equation} and
we have the following expansion as $\lambda\to a_j$,
\begin{equation}
\label{expansionthirdkind}
\omega_{\lambda^{(1)},\lambda^{(2)}}(z^{(1)})=4\widehat{w}(a_j,z^{(1)})(\lambda-a_j)^{1/2}+\dfrac{4}{3}\widehat{w}'(a_j,z^{(1)})(\lambda-a_j)^{3/2}+\bigO((\lambda-a_j)^{5/2}).
\end{equation}

In order to  proceed, we need to evaluate the expansions in $\lb$ of the  quantities  (\ref{fay2a}) and (\ref{fay2b})
as $\lambda\to a_j$, $j=1,2,3,4$.
First we observe that
\begin{equation}
\label{prime2}
\lim_{\eta\to\lambda}\dfrac{\widehat{S}[^{\;\;\;\;0}_{-N\Omega}](\eta^{(1)},\lambda^{(2)})}{\eta-\lb}
=\dfrac{i}{4}\dfrac{\theta(0;B)}{\theta(N\Omega;B)}
\dfrac{\theta\left(\int\limits_{\lambda^{(2)}}^{\lambda^{(1)}}v-N\Omega;B\right)}{\theta\left(\int\limits_{\lambda^{(2)}}^{\lambda^{(1)}}v;B\right)}
\sum_{k=1}^4\dfrac{(-1)^k}{\lambda-a_k},
\end{equation} 
and
\begin{equation}
\label{prime22}
\lim_{\eta\to\lambda}\dfrac{\widehat{S}[^{\;\;\;\;0}_{-N\Omega}](\eta^{(2)},\lambda^{(1)})}{\eta-\lb}
=-\dfrac{i}{4}\dfrac{\theta(0;B)}{\theta(N\Omega;B)}
\dfrac{\theta\left(\int\limits_{\lambda^{(2)}}^{\lambda^{(1)}}v+N\Omega;B\right)}{\theta\left(\int\limits_{\lambda^{(2)}}^{\lambda^{(1)}}v;B\right)}
\sum_{k=1}^4\dfrac{(-1)^k}{\lambda-a_k},
\end{equation} 
as follows from the definition (\ref{szego}) and the symmetry $\theta(z;B)=\theta(-z;B)$.
Therefore using the parity property of $\theta$-function, (\ref{thirdkindsym}),   (\ref{prime2}) and (\ref{prime22}), the   identities (\ref{fay2a}) and (\ref{fay2b}) take the form
 \begin{multline}
 \label{fay2aa}
P^{\infty}_{12}(\lb,z)P^{\infty}_{11}(z,\lb)=-\dfrac{i}{4}\dfrac{\theta(0;B)}{\theta(N\Omega;B)}
\dfrac{\theta\left(\int\limits_{\lambda^{(2)}}^{\lambda^{(1)}}v+N\Omega;B\right)}{\theta\left(\int\limits_{\lambda^{(2)}}^{\lambda^{(1)}}v;B\right)}
\sum_{k=1}^4\dfrac{(-1)^k}{\lambda-a_k}\dfrac{(\lambda-z)^2}{dz}
\times\\
\left[-v(z^{(1)})\left(\left(\log\theta\left(\int_{\lb^{(2)}}^{\lb^{(1)}}v+N\Omega\right)\right)'-(\log\theta\left(N\Omega\right))'\right)
-\omega_{\lb^{(1)}\lb^{(2)}}(z^{(1)})\right],\end{multline}
and similarly
 \begin{multline}
 \label{fay2bb}
P^{\infty}_{21}(\lb,z)P^{\infty}_{22}(z,\lb)=\dfrac{i}{4}\dfrac{\theta(0;B)}{\theta(N\Omega;B)}
\dfrac{\theta\left(\int\limits_{\lambda^{(2)}}^{\lambda^{(1)}}v-N\Omega;B\right)}{\theta\left(\int\limits_{\lambda^{(2)}}^{\lambda^{(1)}}v;B\right)}
\sum_{k=1}^4\dfrac{(-1)^k}{\lambda-a_k}
\dfrac{(\lambda-z)^2}{dz}
\times\\
\left[-v(z^{(1)})\left(\left(\log\theta\left(\int^{\lb^{(1)}}_{\lb^{(2)}}v-N\Omega\right)\right)'-(\log\theta\left(-N\Omega\right))'\right)-\omega_{\lb^{(1)}\lb^{(2)}}(z^{(1)})\right].
\end{multline}

For the integrals between $\lambda^{(2)}$ and $\lambda^{(1)}$, we follow a path which goes through $a_4$. Consequently,  for $\Im\lambda\geq 0$, one has
\begin{equation}
\label{extheta}
\int\limits_{\lambda^{(2)}}^{\lambda^{(1)}}v=\begin{cases}
2\int\limits_{a_4}^{\lambda^{(1)}}v,\\
2\int\limits_{a_3}^{\lambda^{(1)}}v-B,\\
2\int\limits_{a_2}^{\lambda^{(1)}}v-B-1,\\
2\int\limits_{a_1}^{\lambda^{(1)}}v-1,
\end{cases}
\end{equation}
and as $\lambda\to a_j$, $j=1,2,3,4$, we obtain
\begin{equation}
\label{expv}
2\int\limits_{a_j}^{\lambda^{(1)}}v=4\widehat{v}(a_j)(\lb-a_j)^{1/2}+\dfrac{4}{3}\widehat{v}'(a_j)(\lb-a_j)^{\frac{3}{2}}+\bigO((\lb-a_j)^{\frac{5}{2}}),
\end{equation}
with $\widehat{v}(a_j)$  defined  in (\ref{hatv}).
Using  (\ref{periodtheta}),  (\ref{prime2}), (\ref{prime22}), (\ref{extheta}),  (\ref{expv}), and the fact that $\theta(z;B)$ is an even function of $z$, we obtain,  as $\lb\to a_j$ from the upper half plane,
\begin{multline}
\dfrac{i}{4}\dfrac{\theta(0;B)}{\theta(N\Omega;B)}
\dfrac{\theta\left(\int\limits_{\lambda^{(2)}}^{\lambda^{(1)}}v\pm N\Omega;B\right)}{\theta\left(\int\limits_{\lambda^{(2)}}^{\lambda^{(1)}}v;B\right)}
\sum_{k=1}^4\dfrac{(-1)^k}{\lambda-a_k}
\\
=\dfrac{i(-1)^j}{4}\left(\dfrac{1}{\lambda-a_j}\pm 4\widehat{v}(a_j)\frac{\theta'}{\theta}(N\Omega;B)(\lambda-a_j)^{-1/2}+\sum_{k\neq j}\dfrac{(-1)^{k+j}}{a_j-a_k}\right.\\
\left.+8\widehat{v}(a_j)^2\left(\frac{\theta''}{\theta}(N\Omega;B)-\dfrac{\theta''}{\theta}(0;B)\right)
+o(1)\right)
\times
\begin{cases}
1\quad &j=1,4,\\
e^{\pm 2\pi i N\Omega}\quad &j=2,3.
\end{cases}
\end{multline}
Using the definition of the projective connection given in (\ref{BPC}), we can rewrite the above expansion as
\begin{multline}
\label{expansion14}
\dfrac{i}{4}\dfrac{\theta(0;B)}{\theta(N\Omega;B)}
\dfrac{\theta\left(\int\limits_{\lambda^{(2)}}^{\lambda^{(1)}}v\pm N\Omega;B\right)}{\theta\left(\int\limits_{\lambda^{(2)}}^{\lambda^{(1)}}v;B\right)}
\sum_{k=1}^4\dfrac{(-1)^k}{\lambda-a_k}
\\
=\dfrac{i(-1)^j}{4}\left(\dfrac{1}{\lambda-a_j}\pm 4\widehat{v}(a_j)\frac{\theta'}{\theta}(N\Omega;B)(\lambda-a_j)^{-1/2}+\frac{1}{3}S_B(a_j)\right.\\
\left.+8\widehat{v}(a_j)^2\frac{\theta''}{\theta}(N\Omega;B)\right)
\times
\begin{cases}
1\quad &j=1,4,\\
e^{\pm 2\pi i N\Omega}\quad &j=2,3.
\end{cases}
\end{multline}


Combining (\ref{fm}),  (\ref{Sigma2}),  (\ref{expansionthirdkind}),  (\ref{expv}), (\ref{fay2aa}),  (\ref{fay2bb}), and   (\ref{expansion14}), one arrives with the help of Maple to the relation 
\begin{multline}
\label{expansionDelta2}
-\sum_{j=1}^4\frac{\res[\lb=a_j]\Sigma_2(\lb, z)dz}{(z-a_j)^2}-2\sum_{j=1}^4\frac{\res[\lb=a_j]((\lb-a_j)\Sigma_2(\lb,z)dz)}{(z-a_j)^3}\\
\quad=\sum_{j=1}^4\dfrac{\widehat{w}'(a_j,z^{(1)})+\left(\log\theta\left(N\Omega\right)\right)''\widehat{v}'(a_j)v(z^{(1)})}{6\widehat{\psi}(a_j)}\\
\quad +\sum_{j=1}^4\dfrac{\left(\dfrac{1}{6}S_B(a_j)+4\dfrac{\theta''\left(N\Omega\right)}{\theta\left(N\Omega\right)}\widehat{v}(a_j)^2-\dfrac{3}{10}\dfrac{\widehat{\psi}'(a_j)}{\widehat{\psi}(a_j)}\right)}
{\widehat{\psi}(a_j)}
\left(\widehat{w}(a_j,z^{(1)})+\left(\log\theta\left(N\Omega\right)\right)''\widehat{v}(a_j)v(z^{(1)})\right)\\
+\dfrac{4}{3}\sum_{j=1}^4\dfrac{\widehat{v}(a_j)^3v(z^{(1)})}{\widehat{\psi}(a_j)}\left(\left(\log\theta\left(N\Omega\right)\right)''''+3\left(\log\theta\left(N\Omega\right)\right)'''\left(\log\theta\left(N\Omega\right)\right)'\right).
\end{multline}
Summing (\ref{expansionDelta1}) and (\ref{expansionDelta2}) and multiplying by $1/2$ we obtain (\ref{finalexp}) and the theorem is proven.
\end{proof}
Combining   the heat equation (\ref{heat}), Lemma~\ref{lemmaZ}, Lemma~\ref{lemmaP}, and Theorem~\ref{theo82}, we get to the following  identity.
\begin{theorem}
\label{theorem86}
For a two-cut regular potential $V_{\vec t}$,
the logarithmic derivative of the partition function with respect to the parameters $t_k$, has the following asymptotic expansion as $N\to \infty$,
\begin{multline}
\label{identity BEH6}
\frac{\partial}{\partial t_k}\log Z_N(V_{\vec t})=-N^2\frac{\partial}{\partial t_k}F_0-N(\log(\theta(N\Omega;B))'\res[z=\infty][z^kv(z^{(1)})] \\
\quad +\dfrac{1}{24}\sum_{j=1}^4\dfrac{\res[z=\infty^1][z^k\widehat{w}'(a_j,z)]}{\widehat{\psi}(a_j)}\\
\quad -\dfrac{1}{2}\sum_{j=1}^4\left(\dfrac{1}{4}\dfrac{\widehat{\psi}'(a_j)}{\widehat{\psi}(a_j)}-\dfrac{1}{6}S_B(a_j)-16\pi i\dfrac{\dfrac{\partial}{\partial B}\theta\left( N \Omega;B\right)}{\theta\left( N \Omega;B\right)}\widehat{v}(a_j)^2\right)\dfrac{\res[z=\infty^1][z^k\widehat{w}(a_j,z)]}
{\widehat{\psi}(a_j)}\\
+\quad\left(\dfrac{1}{6}\left(\dfrac{\left(\theta\left( N \Omega;B\right)\right)'''}{\theta\left( N \Omega;B\right)}\right)'F_0^{(3)}-\left(\log\theta\left( N \Omega;B\right)\right)'' F_1^{(1)}\right)\res[z=\infty][z^kv(z^{(1)})]+\bigO(N^{-1}),
\end{multline}
 where
 \begin{equation}
\label{F11}
F_1^{(1)}:=\dfrac{1}{2} \sum_{j=1}^4\dfrac{\widehat{v}(a_j)}{\widehat{\psi}(a_j)}\left( \dfrac{1}{4}\dfrac{\widehat{\psi}'(a_j)}{\widehat{\psi}(a_j)}-  \frac{S_B(a_j)}{6}-\frac{\widehat{v}'(a_j)}{12\widehat{v}(a_j)}\right),
\end{equation}
and
\begin{equation}
\label{F03}
F_0^{(3)}:=4\sum_{j=1}^4\dfrac{\widehat{v}(a_j)^3}{\widehat{\psi}(a_j)},
\end{equation}
and all the other quantities are defined in the statement of Theorem~\ref{theo82}.
\end{theorem}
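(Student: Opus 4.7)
The plan is to simply assemble the three ingredients prepared in the preceding subsections, namely Lemma \ref{lemmaZ}, Lemma \ref{lemmaP}, and Theorem \ref{theo82}, together with the heat equation (\ref{heat}). First I would start from Lemma \ref{lemmaZ}, which after the Deift/Zhou steepest descent analysis expresses
\[
\frac{\partial}{\partial t_k}\log Z_N(V_{\vec t})=-N^2\frac{\partial F_0}{\partial t_k}+\frac{N}{2}\,\mathcal{I}_{P^{\infty}}+\frac{1}{2}\,\mathcal{I}_{R^{(1)}}+\bigO(N^{-1}),
\]
where $\mathcal{I}_{P^{\infty}}$ and $\mathcal{I}_{R^{(1)}}$ denote respectively the residues at infinity of $\Tr(\partial_z P^{\infty}(z,z_0)|_{z_0=z}\sigma_3 z^k)$ and $\Tr(\partial_z R^{(1)}(z,z_0)|_{z_0=z}\sigma_3 z^k)$. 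Substituting Lemma \ref{lemmaP} into $\mathcal{I}_{P^{\infty}}$ immediately produces the $N$--order term
$-N(\log\theta(N\Omega;B))'\,\operatorname{Res}_{z=\infty^{(1)}}(z^k v(z^{(1)})/dz)$
appearing on the first line of (\ref{identity BEH6}).

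Next I would substitute Theorem \ref{theo82} for $\mathcal{I}_{R^{(1)}}$. That theorem already decomposes this residue into exactly four groups of $\bigO(1)$ contributions: the $\widehat w'(a_j,z)$ group, matching the coefficient $\tfrac{1}{24}$ in (\ref{identity BEH6}); the $\widehat w(a_j,z)$ group carrying the $\theta''(N\Omega;B)/\theta(N\Omega;B)$ coefficient; a piece proportional to $(\log\theta(N\Omega;B))''$ involving the combination of $S_B(a_j)$, $\widehat v'(a_j)$, $\widehat\psi'(a_j)/\widehat\psi(a_j)$, and $\widehat v(a_j)$; and a piece proportional to $((\theta'''/\theta)(N\Omega;B))'$ involving $\widehat v(a_j)^3$. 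The only reshaping needed is to invoke the heat equation (\ref{heat}) to rewrite
\[
4\,\frac{\theta''(N\Omega;B)}{\theta(N\Omega;B)}=16\pi i\,\frac{\partial_B\theta(N\Omega;B)}{\theta(N\Omega;B)},
\]
which converts the second group into precisely the form displayed in (\ref{identity BEH6}).

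To conclude, I would recognize the constants multiplying $\operatorname{Res}_{z=\infty}(z^k v(z^{(1)})/dz)$ in the remaining two groups: the coefficient of $(\log\theta(N\Omega;B))''$ equals $-F_1^{(1)}$, and the coefficient of $((\theta'''/\theta)(N\Omega;B))'$ equals $\tfrac{1}{6}F_0^{(3)}$, by direct comparison with the definitions (\ref{F11}) and (\ref{F03}) (using $\tfrac{1}{6}\cdot 4=\tfrac{2}{3}$ for the latter). There is no real obstacle in this final assembly: the genuine analytic work is already completed in Theorem \ref{theo82}, where the Fay identities (\ref{szegoBergmann})--(\ref{fay2}) are used to reduce the residues of products of Szeg\H o kernels at the branch points to the canonical combinations of the fundamental bi-differential, the projective connection, and derivatives of the $\theta$-function. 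Uniformity of the expansion for $\vec t$ in compact subsets of the two-cut regular locus, needed for the subsequent integration argument in Theorem \ref{theorem211}, follows from the corresponding uniformity of the small-norm RH estimates that yield the expansion (\ref{as R 2-cut}).
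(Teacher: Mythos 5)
Your proposal is correct and follows exactly the route the paper takes: the paper's proof of Theorem~\ref{theorem86} is precisely the combination of Lemma~\ref{lemmaZ}, Lemma~\ref{lemmaP}, Theorem~\ref{theo82}, and the heat equation~(\ref{heat}), with the same bookkeeping you carry out (in fact you supply more detail than the paper, which states the assembly in a single sentence). Your identifications of $-F_1^{(1)}$ and $\tfrac{1}{6}F_0^{(3)}$ with the coefficients appearing in Theorem~\ref{theo82}, and the use of $\theta''=4\pi i\,\partial_B\theta$ to rewrite $4\theta''/\theta$ as $16\pi i\,\partial_B\theta/\theta$, all check out.
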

We remark that  the definition of $F_1^{(1)}$  and $F_0^{(3)} $  given by Eynard in \cite{Eynard}  (see also \cite{EO}) coincides with our definition up to  multiplicative factors.

\subsection{Calculating antiderivatives}
\label{sec:LastEvalSec}
In this section, we will express the terms in (\ref{identity BEH6})  as derivatives.
We start with the projective connection $S_B(a_j)$.
\begin{lemma}\label{korotkin}
\cite{korotkin, KK1} 
\label{Delta}
Let $\Delta(\boldsymbol{a})=\prod_{1\leq j<k\leq 4}(a_k-a_j)$ be the Vandermonde determinant of the branch points of the surface $\mathcal{S}$ 
 and $\mathcal{A}$ the $\alpha$- period of the non-normalized holomorphic differential as defined in (\ref{AB}). Then the following relation is satisfied,
\begin{equation}
\label{delta3}
\dfrac{\partial }{\partial a_j}\log\left(\mathcal{A}^{12}\Delta(\boldsymbol{a})^3
\right)=-S_B(a_j),
\qquad j=1,2,3,4,
\end{equation}
where $S_B(a_j)$  at the branch point has been defined (\ref{BPC}).
\end{lemma}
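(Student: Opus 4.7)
This identity is the genus-one specialisation of the Kokotov--Korotkin formula for the Bergman $\tau$-function on the Hurwitz space of hyperelliptic coverings \cite{KK1, korotkin}; the plan is to verify it by direct calculation combining three ingredients: elementary differentiation of the Vandermonde, Rauch's variational formula for the non-normalised $\alpha$-period $\mathcal A$, and the interplay between Rauch's formula for the $b$-period $B$ and the heat equation for $\theta$.

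First, one has $\partial_{a_j}\log\Delta(\boldsymbol a)=\sum_{i\neq j}(a_j-a_i)^{-1}$ by direct differentiation. For the period $\mathcal A=\oint_\alpha dz/y$, differentiate the defining relation $y^2=\prod_i(z-a_i)$ with respect to $a_j$ to obtain $\partial_{a_j}(1/y)=1/[2y(z-a_j)]$, which yields
\begin{equation*}
\partial_{a_j}\mathcal A=\frac{1}{2}\oint_\alpha\frac{dz}{y(z-a_j)}.
\end{equation*}
The differential $\omega_j:=dz/[y(z-a_j)]$ is a second-kind meromorphic differential on $\mathcal S$ with a double pole at the branch point $(a_j,0)$ and no other singularities (in particular it is regular at both sheets of $\infty$). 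Its $\alpha$-period may be evaluated either directly from the explicit elliptic-integral representation $\mathcal A=-4K(k)/\sqrt{(a_4-a_2)(a_3-a_1)}$ of the introduction, or by decomposing $\omega_j$ against the normalised second-kind differential $\sigma_0$ from (\ref{sigmak}) and the normalised holomorphic differential $v$; the coefficient of $v$ in this decomposition equals $\oint_\alpha\omega_j$.

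The theta-function term in $S_B(a_j)$ is accounted for by Rauch's variational formula for the $\beta$-period, $\partial_{a_j}B\propto\widehat v(a_j)^2$, combined with the heat equation (\ref{heat}) applied at $z=0$: the two together explain the appearance of $(\log\theta(0;B))''\,\widehat v(a_j)^2$ once one allows $B$ to depend on $\boldsymbol a$ and expresses $\partial_B\log\theta(0;B)$ in terms of $(\log\theta(0;B))''$.

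Assembling the three ingredients and performing the bookkeeping of the branch structure of $y$ across $[a_1,a_2]\cup[a_3,a_4]$ produces, after reorganisation,
\begin{equation*}
12\,\partial_{a_j}\log\mathcal A+3\,\partial_{a_j}\log\Delta(\boldsymbol a)=-3\sum_{i\neq j}\frac{(-1)^{i+j}}{a_j-a_i}+24\,(\log\theta(0;B))''\,\widehat v(a_j)^2,
\end{equation*}
which equals $-S_B(a_j)$ by (\ref{BPC2}). The main obstacle is tracking the parity signs $(-1)^{i+j}$ and the precise $O(1)$ constants: these originate from the monodromy of $1/y$ on $\mathcal S$ and from the normalising factors $\sqrt{\prod_{i\neq j}(a_j-a_i)}$ at the branch points, and they account for the non-trivial way in which the "diagonal" Vandermonde sum $\sum_{i\neq j}(a_j-a_i)^{-1}$ is turned into the alternating sum in $S_B(a_j)$. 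The general Kokotov--Korotkin argument in \cite{KK1, korotkin} organises this calculation systematically in arbitrary genus, so one may alternatively simply specialise their formula.
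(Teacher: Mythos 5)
The paper does not prove this lemma: it is quoted directly from \cite{korotkin, KK1}, and the text accompanying it simply notes that the identity was derived there for general hyper-elliptic surfaces. Your proposal is therefore more ambitious than the paper's treatment, since you attempt an actual derivation. Unfortunately, the derivation as written has a genuine gap at the critical step.

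The first two ingredients are fine: $\partial_{a_j}\log\Delta(\boldsymbol a)=\sum_{i\neq j}(a_j-a_i)^{-1}$, and $\partial_{a_j}\mathcal A=\tfrac12\oint_\alpha\frac{dz}{y(z-a_j)}$, with $\omega_j:=dz/[y(z-a_j)]$ a second-kind differential whose only singularity is a double pole at $(a_j,0)$. The gap is in evaluating $\oint_\alpha\omega_j$. You propose to decompose $\omega_j$ against the normalised second-kind differential $\sigma_0$ of (\ref{sigmak}) and the normalised holomorphic differential $v$; but $\sigma_0$ has its poles at $\infty^{(1)},\infty^{(2)}$, whereas $\omega_j$ is \emph{regular} at infinity. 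These two differentials have disjoint singular supports, so the coefficient of $\sigma_0$ in any such decomposition is $0$, and $\omega_j-c_jv$ cannot be holomorphic. The correct decomposition is against the normalised second-kind differential with a double pole at $(a_j,0)$ — i.e.\ essentially $\widehat w(a_j,\cdot)$ of (\ref{Bevaluation}) — and it is precisely the expansion of the Bergman kernel (\ref{bergmann}) near the branch point that introduces the $(\log\theta(0;B))''$ term into $\oint_\alpha\omega_j$. The paragraph you devote to Rauch's formula and the heat equation does not address this, because the left-hand side $\partial_{a_j}\log(\mathcal A^{12}\Delta^3)$ contains no $\theta$-function at all; the $\theta$-term on the right must emerge from the $\alpha$-period computation, not from differentiating a $\theta$-function. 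Finally, the displayed identity "after reorganisation" is asserted, not computed: the passage from the diagonal sum $\sum_{i\neq j}(a_j-a_i)^{-1}$ to the alternating sum $\sum_{i\neq j}(-1)^{i+j}(a_j-a_i)^{-1}$ — which you correctly flag as the place where the sign bookkeeping matters — is exactly what is missing. Your concluding fallback, namely specialising the general Kokotov--Korotkin formula, is what the paper actually does, and would be an acceptable resolution if the direct computation is not carried through.
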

This identity was derived in \cite{korotkin} for more general surfaces than our elliptic surface, namely for  generic  non-singular hyper-elliptic Riemann surfaces. 
Next we are going to evaluate the derivatives with respect to the times $t_k$ of all the relevant quantities appearing  in (\ref{expfirst}) and  (\ref{finalexp}). Some of the identities below have already appeared in the literature, see for example \cite{G,G1,KM,MOR}, but for clarity we are going to prove each of them with our notation.
\begin{lemma}
\label{lemma84}
The following identities are  satisfied:
\begin{align}
\label{diffOmega}
&\dfrac{\partial}{\partial t_k}\Omega=-\res[\lb=\infty^{1}][\lb^kv(\lambda)],\\
\label{diffa}
&\dfrac{\partial a_m}{\partial t_k}=-\dfrac{k\widehat{\sigma}_k(a_m)}{\widehat{\psi}(a_m)}=2\dfrac{\res[\lambda=\infty^1][\lambda^k\widehat{w}(\lambda, a_m)]}{\widehat{\psi}(a_m)},\\
\label{diffB}
&\dfrac{ \partial B}{\partial t_k}=\sum_{m=1}^{4}\dfrac{ \partial B}{\partial a_m}\dfrac{\partial a_m}{\partial t_k}=8\pi i\sum_{m}\widehat v(a_m)^2\dfrac{\res[\lambda=\infty^1][\lambda^k \widehat w(\lambda^{(1)}, a_m)]}{\widehat{\psi}(a_m)},\\
\label{diffpsi}
&\dfrac{ \partial \widehat{\psi}(a_m)}{\partial t_k} =3\dfrac{\widehat{\psi}'(a_m)}{\widehat{\psi}(a_m)}\res[\lambda=\infty^1][\lambda^k\widehat w(\lambda, a_m)]-\res[\lambda=\infty^1][\lambda^k\widehat{w}'(a_m,\lb)],
\end{align}
where
\[
\widehat\sigma_k(a_m):=\dfrac{\sqrt{\lambda-a_m}\sigma_k(\lambda)}{d\lambda}|_{\lambda=a_m} ,
\]
with $\sigma_k$ the meromorphic differential  defined in (\ref{sigmak}), 
$\widehat{w}(a_j,z)$ and $\widehat{w}'(a_j,z)$ defined in (\ref{Bevaluation}).
 \end{lemma}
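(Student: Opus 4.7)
The strategy in each case is to exploit the meromorphic structure on the elliptic surface $\mathcal{S}$ of the $t_k$-derivatives of quantities built from $y(z):=\sqrt{\mathcal{R}(z)}\,h(z)$, together with the Riemann bilinear relations and the representation (\ref{sigmaberg}) of $\sigma_k$ in terms of residues of the fundamental bi-differential $w$. The key object is the differential $(\partial y/\partial t_k)\,dz$ on $\mathcal{S}$: a direct calculation in the local coordinate $\zeta=\sqrt{z-a_m}$ at a branch point shows that the apparent $(z-a_m)^{-1/2}$ singularity of $\partial y/\partial t_k$ is absorbed by the Jacobian $dz=2\zeta\,d\zeta$, so that $(\partial y/\partial t_k)\,dz$ is \emph{holomorphic} at the branch points. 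At infinity, $y$ has the expansion $y(z)=V'(z)/2-g'(z)$ with $g'(z)\sim 1/z$, so $(\partial y/\partial t_k)\,dz$ has principal parts $\pm(k/2)z^{k-1}\,dz$ at $\infty^{1,2}$. The loop condition (\ref{loop}) gives $\oint_\alpha y\,dz=0$ for the $\alpha$-cycle running through the gap $(a_2,a_3)$, and consequently the $\alpha$-period of $(\partial y/\partial t_k)\,dz$ also vanishes. By uniqueness of normalized second-kind differentials we conclude
\[
\frac{\partial y}{\partial t_k}\,dz \;=\; \frac{k}{2}\,\sigma_k \qquad \text{on }\mathcal{S}.
\]

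From this master identity, (\ref{diffa}) is obtained by matching local expansions at $a_m$. Writing $y(z)=(z-a_m)^{1/2}\phi(z)$ with $\phi(a_m)=\widehat\psi(a_m)$ and $\phi'(a_m)=\widehat\psi'(a_m)$, and differentiating at fixed $z$, the $(z-a_m)^{-1/2}$ coefficient of $\partial y/\partial t_k$ equals $-\tfrac{1}{2}\widehat\psi(a_m)\,\partial a_m/\partial t_k$, while the same coefficient of $(k/2)\sigma_k/dz=(k/2)P_k(z)/y$ equals $(k/2)\widehat\sigma_k(a_m)$; equating these yields $\partial a_m/\partial t_k=-k\widehat\sigma_k(a_m)/\widehat\psi(a_m)$. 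The second equality in (\ref{diffa}) follows immediately from (\ref{sigmaberg}), which expresses $\widehat\sigma_k(a_m)$ as a multiple of $\res[\lambda=\infty^1][\lambda^k\widehat w(\lambda,a_m)]$. Identity (\ref{diffOmega}) is derived by integrating the master identity along the $\beta$-cycle: from $\psi=y_+/(\pi i)$ on the support one has $\Omega=(1/2\pi i)\oint_\beta y\,dz$, so $\partial\Omega/\partial t_k=(k/4\pi i)\oint_\beta\sigma_k$, and applying the Riemann bilinear identity to the pair $(v,\sigma_k)$ (with $\oint_\alpha v=1$ and $\oint_\alpha\sigma_k=0$) rewrites $\oint_\beta\sigma_k$ as a sum of residues of $v\int^P\sigma_k$ at $\infty^{1,2}$, which after integration by parts collapses to $-\res[\lambda=\infty^1][\lambda^k v(\lambda)]$.

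The remaining identities require comparatively little work. Identity (\ref{diffB}) follows from the chain rule together with the Rauch-type variational formula $\partial B/\partial a_m=4\pi i\,\widehat v(a_m)^2$, itself a standard consequence of the Riemann bilinear identity applied to $v$ and the deformation differential associated with moving the branch point $a_m$; substituting (\ref{diffa}) then gives the stated expression. For (\ref{diffpsi}), I would continue the matching procedure from the proof of (\ref{diffa}) to the next order: the coefficient of $(z-a_m)^{1/2}$ in $\partial y/\partial t_k$ equals $\partial_{t_k}\widehat\psi(a_m)-\tfrac{3}{2}\widehat\psi'(a_m)\,\partial_{t_k}a_m$, while the same coefficient of $(k/2)\sigma_k/dz$ is extracted from the Taylor expansion of $P_k(z)$ near $a_m$ together with the subleading term of $1/y$ — and this latter contribution produces both the $\res[\lambda=\infty^1][\lambda^k\widehat w'(a_m,\lambda)]$ piece (via (\ref{sigmaberg}) applied to the next order) and, through the ratio $\widehat\psi'(a_m)/\widehat\psi(a_m)$ arising from $1/y$, the coefficient $3\widehat\psi'(a_m)/\widehat\psi(a_m)$ in (\ref{diffpsi}). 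The main technical obstacle is justifying the global identification $(\partial y/\partial t_k)\,dz=(k/2)\sigma_k$: one must verify the prescribed principal parts at $\infty^{1,2}$ together with the vanishing of the $\alpha$-period, and check carefully the cancellation of would-be $(z-a_m)^{-1/2}$ singularities on passing to the surface. Once that identity is secured, the lemma follows by extracting local coefficients and invoking the Riemann bilinear relations.
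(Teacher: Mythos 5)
Your proposal is correct and follows the same overall template as the paper's proof of Lemma~\ref{lemma84}: establish the master identity $\partial_{t_k}\big(h\sqrt{\mathcal{R}}\big)\,dz=\tfrac{k}{2}\sigma_k$, then extract (\ref{diffOmega}) by integrating over $\beta$ and applying the Riemann bilinear relation, (\ref{diffa}) and (\ref{diffpsi}) by matching Puiseux coefficients at $a_m$ together with the residue representation (\ref{sigmaberg}) of $\sigma_k$, and (\ref{diffB}) by Rauch's variational formula and the chain rule. Where you genuinely diverge from the paper is in how the master identity is justified. The paper first writes $h\sqrt{\mathcal{R}}\,d\lambda=\tfrac{1}{2}\sum_j jt_j\sigma_j+2\sigma_4-4\sigma_2-\sigma_0$ (equation (\ref{sigma})), differentiates that decomposition in $t_k$, and then argues Krichever-style that the $\partial_{a_p}\cdot\partial_{t_k}a_p$ contributions assemble into a normalized holomorphic one-form, hence vanish. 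You instead argue directly that the differential $(\partial_{t_k}y)\,dz$ is holomorphic away from $\infty^{1,2}$ (the apparent $(z-a_m)^{-1/2}$ singularities are absorbed by $dz=2\zeta\,d\zeta$ with $\zeta=\sqrt{z-a_m}$), has the prescribed principal parts $\pm\tfrac{k}{2}z^{k-1}\,dz$ at $\infty^{1,2}$, and has zero $\alpha$-period, whence uniqueness of normalized second-kind differentials on a genus-one surface forces $(\partial_{t_k}y)\,dz=\tfrac{k}{2}\sigma_k$. This is a valid and somewhat cleaner route that avoids introducing the auxiliary decomposition (\ref{sigma}); what you pay for it is the need to check the three hypotheses carefully. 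Two of these checks deserve to be made explicit: first, when you differentiate the loop condition $\oint_\alpha y\,dz=0$, the boundary terms from the moving endpoints $a_2,a_3$ vanish precisely because $y(a_2)=y(a_3)=0$; second, the residues of $(\partial_{t_k}y)\,dz$ at $\infty^{1,2}$ vanish because the coefficient of $z^{-1}$ in the expansion of $y$ at infinity is $\mp1$, independent of $\vec t$, so the differential really is purely of the second kind. Your sketch of (\ref{diffpsi}) is correct in outline but terse; the coefficient $3\widehat\psi'(a_m)/\widehat\psi(a_m)$ comes from combining the $-\tfrac{3}{2}\widehat\psi'\,\partial_{t_k}a_m$ term in the $(z-a_m)^{1/2}$-coefficient of $\partial_{t_k}y$ with the subleading Puiseux term of $P_k(z)/y$, after eliminating $P_k'(a_m)$ through the next-order coefficient of (\ref{sigmaberg}); that bookkeeping is worth carrying out in full to land exactly on the stated formula.
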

%
%
\begin{proof}
Recall from (\ref{psi}) that  the density $\psi(\lambda) $ of the  equilibrium measure   takes the form  
\be
\label{psib}
\psi(\lb)=\dfrac{ h(\lb)}{   \pi i  }\mathcal{R}(\lb)_+^{1/2} ,\qquad x\in[a_1,a_2]\cup[a_3,a_4],
\ee
for a polynomial $h(\lambda)$.
The quantity 
\begin{equation}
\label{oneform}
h(\lb)\mathcal{R}(\lb)^{1/2}d\lb
\end{equation}
can be seen as a one-form on the Riemann surface, which 
\begin{itemize}
\item[(1)] has simple zeros at the branch points $(a_i,0)$ $i=1\dots, 4$,
\item[(2)] has a pole at $\infty^{1,2}=(\infty,\pm\infty)$ with asymptotic behaviour 
\be
\label{expansionh}
h(\lb)\mathcal{R}(\lb)^{1/2} d\lb=\pm\left(\dfrac{1}{2}V_{\vec t}'(\lb)-\dfrac{1}{\lb}-\bigO(\lb^{-2})\right)d\lb,
\ee
because of the moments conditions (\ref{moment}),
\item[(3)] is normalized to zero on the $\alpha$-cycle because of (\ref{loop}).
\end{itemize}
By (\ref{V0tdef}), it follows that (\ref{oneform}) can be written also in the form
\be
\label{sigma}
h(\lb)\sqrt{\mathcal{R}(\lb)} d\lb=\dfrac{1}{2}\sum_{j=1}^{2d}jt_j\sigma_j+2\sigma_4-4\sigma_2-\sigma_0,
\ee
where $\sigma_j=\sigma_j(\lb,y)$ are the second kind differentials defined in (\ref{sigmak}) ,
and $\sigma_0$ is a  differential of the third  kind uniquely determined by the conditions of having 
simple poles at $\infty^{1,2}$ with residue $\pm 1$ and of being normalized to zero on the $\alpha$-cycles. It should be noted that the right hand side of (\ref{sigma}) depends on the endpoints $a_1,\ldots, a_4$ through the normalization.

The one-form on the left hand side of  (\ref{sigma}) is built in such a way that  it satisfies conditions $(1)$, but conditions $(2)$ and $(3)$ are imposed by (\ref{moment}) and (\ref{loop}).
  The one-form on the right hand side of  (\ref{sigma}) is built in such a way that  it satisfies conditions $(2)$ and $(3)$, while the condition $(1)$ has to be imposed:
\be
\label{zero}
\left(\sum_{j=1}^{2d}jt_j\sigma_j(\lb,y)+2\sigma_4-4\sigma_2-2\sigma_0(\lb,y)\right)_{(\lb,y)=(a_p,0)}=0,\quad p=1,\dots,4.
\ee
These  $4$ equations determine the endpoints $a_1<a_2<a_3< a_{4}$ of the support of the equilibrium measure and are equivalent to (\ref{moment}) and (\ref{loop}).
Now we take the derivatives with respect to times $t_k$,
\be\begin{split}
\label{timeder}
&\dfrac{\partial}{\partial t_k}\left(\sum_{j=1}^{2d}jt_j\sigma_j+2\sigma_4-4\sigma_2-2\sigma_0\right)\\
&=k\sigma_k+\sum_p\left(\sum_{j=1}^{2d}jt_j\dfrac{\partial}{\partial a_p}\sigma_j+2\dfrac{\partial}{\partial a_p}\sigma_4-4\dfrac{\partial}{\partial a_p}\sigma_2-2\dfrac{\partial}{\partial a_p}\sigma_0\right)\dfrac{\partial a_p}{\partial t_k}\\
&=k\sigma_k,
\end{split}
\end{equation}
where in the last identity we use the fact that \[\sum_{j=1}^{2d}jt_j\dfrac{\partial}{\partial a_p}\sigma_j+2\dfrac{\partial}{\partial a_p}\sigma_4-4\dfrac{\partial}{\partial a_p}\sigma_2-2\dfrac{\partial}{\partial a_p}\sigma_0\]
is a normalized one-form, regular at $\infty^{1,2}$ because of (\ref{expansionh}), and which can only have a singularity  at the branch point $(a_p,0)$. However, by (\ref{zero}), this differential is regular at $(a_p,0)$, $p=1,2,3,4$. Therefore,
$\sum_{j=1}^{2d}jt_j\dfrac{\partial}{\partial a_p}\sigma_j+2\dfrac{\partial}{\partial a_p}\sigma_4-4\dfrac{\partial}{\partial a_p}\sigma_2-2\dfrac{\partial}{\partial a_p}\sigma_0$
 is a normalized holomorphic one form and by Riemann's theorem it is identically zero. Such an argument was first used in \cite{Krichever}.
Therefore we can conclude that 
\begin{equation}\label{111}
\dfrac{\partial }{\partial t_k}\dfrac{ h(\lb)}{   \pi i  }\mathcal{R}(\lb)^{1/2}d\lb=\dfrac{\partial }{\partial t_k}\Re\frac{1}{\pi i}\left(\dfrac{1}{2}\sum_{j=1}^{2d}jt_j\sigma_j+2\sigma_4-4\sigma_2-\sigma_0\right)=\Re(\dfrac{k\sigma_k}{2\pi i }).
\end{equation}
Hence, 
\[\dfrac{\partial}{\partial t_k}\Omega =\dfrac{1}{2\pi i}\dfrac{\partial}{\partial t_k} \int_{\beta}h(\lambda)\mathcal R(\lambda)^{1/2}d\lambda=\dfrac{k}{ 4 \pi i }\int_{\beta}\sigma_k.
\]
By Riemann's bilinear relations  \cite{fay}, we have
\[
\int_{\beta}\sigma_k=-\dfrac{4\pi i}{k}\res[(\lambda,y)=\infty^1][\lambda^k v(\lambda,y)],
\]
so that 
\[\dfrac{\partial}{\partial t_k}\Omega =-\res[(\lambda,y)=\infty^1][\lambda^k v(\lambda,y)],
\]
which proves (\ref{diffOmega}).

In order to prove (\ref{diffa}) we observe that, for general values of $a_1,\ldots, a_4$, we can write
\[
\dfrac{1}{2}\sum_{j=1}^{2d}jt_j\sigma_j+2\sigma_4-4\sigma_2-\sigma_0=\dfrac{Z(\lambda)}{\mathcal{R}(\lambda)^{1/2}}d\lambda,
\]
where $Z(\lambda)$ is a polynomial of degree $2d+1$.
We can write the polynomial $Z(\lambda)$ in the form
\begin{equation}
Z(\lambda)=h(\lambda)\prod_{j=1}^{4}(\lambda-a_j)+Z_0(\lambda),
\end{equation}
where $Z_0(\lb)$ is a polynomial of degree at most $3$.
By (\ref{sigma}), $Z_0(\lambda)$ is identically zero if $a_1,\ldots, a_4$ are such that (\ref{zero}) holds. 
By (\ref{timeder}), one has
\begin{align*}
\dfrac{\partial}{\partial t_k}\left((\lambda-a_m)^{1/2}\dfrac{Z(\lambda)d\lb}{\mathcal R(\lambda)^{1/2}}\right)&=-\dfrac{1}{2(\lambda-a_m)^{1/2}}\dfrac{Z(\lambda)d\lb}{\mathcal R(\lambda)^{1/2}}\dfrac{\partial a_m}{\partial t_k}+(\lambda-a_m)^{1/2}\dfrac{\partial}{\partial t_k}\dfrac{Z(\lambda)d\lb}{\mathcal R(\lambda)^{1/2}}\\
=&-\dfrac{1}{2(\lambda-a_m)^{1/2}}\dfrac{Z(\lambda)d\lb}{\mathcal R(\lambda)^{1/2}}\dfrac{\partial a_m}{\partial t_k}+\dfrac{k}{2}(\lambda-a_m)^{1/2}\sigma_k(\lb).
\end{align*}
This holds for any value of $\lb$. Keeping in mind that the left hand side has 
a zero at $\lambda=a_m$, we can evaluate at $\lambda=a_m$ and obtain
\[
-h(a_m)\sqrt{\prod_{j\neq m}(a_m-a_j)}\dfrac{\partial a_m}{\partial t_k}=-\left(\dfrac{1}{2\sqrt{\lambda-a_m}}\dfrac{Z(\lambda)d\lb}{\sqrt{\mathcal R(\lambda)}}\dfrac{\partial a_m}{\partial t_k}-\dfrac{k}{2}\sqrt{\lambda-a_m}\sigma_k(\lb)\right)_{\lb=a_m},
\]which yields
\begin{equation}
\dfrac{\partial a_m}{\partial t_k}=-k\dfrac{\widehat{\sigma}_k(a_m)}{\widehat{\psi}(a_m)}.
\end{equation}
This is the first equality in (\ref{diffa}). To obtain the second identity in (\ref{diffa}),  it is sufficient to use (\ref{sigmaberg}) and  to calculate the residue at the points $(a_m,0)$, $m=1,2,3,4.$ 
The  proof of (\ref{diffB}) follows from  (\ref{diffa}) and 
from Rauch's variational formula \cite{rauch}
\begin{equation}
\label{Rauch0}
\dfrac{1}{4\pi i}\dfrac{\partial B }{\partial a_i}=\res[\lambda=a_i] \dfrac{v(P)v(P)}{d\lambda^2}= \widehat{v}(a_i)\widehat{v}(a_i).
\end{equation}
In order to prove (\ref{diffpsi}), we  use  (\ref{111}) and (\ref{psihat}), and this leads to
\be
\begin{split}
\dfrac{\partial }{\partial t_k}\widehat{\psi}(a_m)&=\left.\dfrac{\partial }{\partial t_k} \left(\dfrac{\pi i}{(\lb-a_m)^{1/2}}\psi(\lb)\right|_{\lambda=a_m}\right)=\pi i
\left. \left(\dfrac{\partial }{\partial t_k}\dfrac{\psi(\lb)}{(\lb-a_m)^{1/2}}+\dfrac{\partial }{\partial \lb}\dfrac{\psi(\lb)}{(\lb-a_m)^{1/2}}\dfrac{\partial a_m}{\partial t_k}\right)\right|_{\lambda=a_m}\\
&=\left.\left(\dfrac{k\sigma_k(\lb)}{2(\lb-a_m)^{1/2}d\lambda}+\dfrac{\mathcal R(\lb)^{1/2}}{(\lb-a_m)^{1/2}}\dfrac{\partial a_m }{\partial t_k}\left(h'(\lb)+\dfrac{1}{2}h(\lb)\sum_{j}\dfrac{1}{\lb-a_j}\right)\right)\right|_{\lb=a_m}.
\end{split}
\ee
Using (\ref{sigmak}) and (\ref{diffa}), we obtain
\be\begin{split}
\dfrac{\partial }{\partial t_k}\widehat{\psi}(a_m)
&=-\dfrac{3}{2}k\widehat{\sigma}_k(a_m)\dfrac{h'(a_m)}{h(a_m)}+\dfrac{k}{2}\dfrac{\partial_{\lb}P_k(\lb)|_{\lb=a_m}}{\sqrt{\prod_{j\neq m}(a_m-a_j)}}-k\dfrac{P_k(a_m)}{\sqrt{\prod_{j\neq m}(a_m-a_j)}}\sum_{j\neq m}\dfrac{1}{a_m-a_j}\\
&=\left.-\dfrac{3}{2}k\dfrac{\widehat{\sigma}_k(a_m)}{\widehat{\psi}(a_m)}\dfrac{\partial }{\partial \lb} \left(h(\lb)\sqrt{\prod_{j\neq m}\lb-a_j}\right)\right|_{\lambda=a_m}+\dfrac{k}{2}\left.\left(\dfrac{\partial }{\partial \lb} \dfrac{P_k(\lb)}{\sqrt{\prod_{j\neq m}(\lb-a_j)}}\right)\right|_{\lambda=a_m},
\end{split}
\ee
which, by (\ref{psihat}) and (\ref{Bevaluation}), implies (\ref{diffpsi}).
\end{proof}
\subsection{ Proof of Theorem~\ref{theorem211} and  Theorem~\ref{theorem212}}
\label{sec:FinalProof}
Define
\begin{equation}
\label{F1}
F_1=\dfrac{1}{24}\log\left(\dfrac{1}{2^8}\left(\dfrac{|\mathcal{A}|}{2\pi}\right)^{12}\Delta(a)^3\prod_{j=1}^{4}\widehat{\psi}(a_j)\right),
\end{equation}
where $\Delta(a)=\prod_{1\leq i<j\leq 4}(a_j-a_i)$ is the Vandermonde determinant of the branch points,  $\mathcal{A}$ is the $\alpha$-period of the holomorphic differential  defined in (\ref{AB}),  and $\widehat{\psi}$ has been defined in (\ref{psihat}).
This quantity appeared first in \cite{Akemann}.

Combining Theorem~\ref{theorem86} and Lemma~\ref{lemma84}, we get, as $N\to\infty$,
\begin{multline*}
-\frac{\partial}{\partial t_k}\log Z_N(V_{\vec t})=N^2\dfrac{\partial F_0}{\partial t_k} -N(\log\theta\left(N\Omega;B\right))'\dfrac{\partial \Omega}{\partial t_k}+\dfrac{\partial F_1}{\partial t_k}-
\dfrac{\partial}{\partial B}\log\theta\left(N\Omega;B\right)\dfrac{\partial B}{\partial t_k} \\
+\left(\dfrac{F_0^{(3)}}{6}\left(\dfrac{\left(\theta\left( N \Omega;B\right)\right)'''}{\theta\left( N \Omega;B\right)}\right)'-(\log\theta\left(N\Omega;B\right))''F_1^{(1)}\right)\dfrac{\partial \Omega}{\partial t_k}+\bigO(N^{-1}).
\end{multline*}
Consequently, we have
\begin{multline}\label{Mainidentity}
-\frac{\partial}{\partial t_k}\log Z_N(V_{\vec t})=\dfrac{\partial}{\partial t_k}\left(N^2 F_0+F_1-\log\theta\left(N\Omega;B\right)-
\dfrac{\theta'\left(N\Omega;B\right)}{\theta\left(N\Omega;B\right)}\dfrac{F_1^{(1)}}{N}\right.\\
\left.+\dfrac{\theta'''\left(N\Omega\right)}{\theta\left(N\Omega;B\right)}\dfrac{F_0^{(3)}}{6N}+\bigO(N^{-1})\right),\qquad N\to\infty,
\end{multline}
where  $F_0$ and $F_1$ have been defined in (\ref{F0V}) and  (\ref{F1}),  and $F_0^{(3)}$ and $F_1^{(1)}$ have been defined in (\ref{F03}) and (\ref{F11}).
From the above identity the proof of Theorem~\ref{theorem211} follows easily.
\begin{remark}
The error $\bigO(N^{-1})$ in  (\ref{Mainidentity}) contains non-oscillatory terms of order $1/N$ and smaller order terms. To determine such terms, it is necessary to calculate an extra term in the asymptotic expansion 
of the  error matrix $R(z)$.
\end{remark}

\vskip 0.3cm
\noindent
From Theorem~\ref{theorem: 2},  we can choose a continuous path $\vec t:[0,1]\to\mathbb R^m:s\mapsto \vec t(\tau)=(t_1(\tau),\ldots, t_m(\tau))$ in $\mathbb R^m$ such that
 $V_{\vec t(0)}(z)=z^4-4z^2$, and   $V_{\vec t(1)}(z)=V_{\vec t}(z)$ and for all $\tau \in[0,1]$, $V_{\vec t(\tau)}$ is a two-cut regular potential.
 Therefore one can integrate (\ref{dlogZ}) to obtain
 \begin{multline*}
\log Z_N(V_{\vec t})-\log Z_N(V_{0})=-N^2(F_0(V_{\vec t})-F_0(V_0))-F_1(V_{\vec t})+F_1(V_0)\\
+\log\theta\left(N\Omega(V_{\vec t});B(V_{\vec t})\right)-\log\theta\left(N\Omega(V_{0});B(V_0)\right)+\bigO(1/N),\qquad N\to\infty.
\end{multline*}
By (\ref{ZNas}), this implies Theorem~\ref{theorem212}.

\section*{Acknowledgements} 
TC was supported by the European Research Council under the European Union's Seventh Framework Programme (FP/2007/2013)/ ERC Grant Agreement n.\, 307074, by the Belgian Interuniversity Attraction Pole P07/18, and by F.R.S.-F.N.R.S.

TG  was partially supported by  PRIN  Grant ÒGeometric and analytic theory of Hamiltonian systems in finite and infinite dimensionsÓ of Italian Ministry of Universities and Researches and by the FP7 IRSES grant RIMMP ÒRandom and Integrable Models in Mathematical PhysicsÓ. 

KDT-RM was supported in part by the National Science Foundation under grants DMS-0800979 and DMS-1401268 and by by the FP7 IRSES grant RIMMP ÒRandom and Integrable Models in Mathematical PhysicsÓ.

\end{document}